\crefname{equation}{}{}
\DeclareMathOperator*{\argmin}{arg\,min}
\DeclareMathOperator*{\argmax}{arg\,max}
\newtheorem{theorem}{Theorem}
\newtheorem{lemma}{Lemma}
\newtheorem{proposition}{Proposition}
\newtheorem{remark}{Remark}
\newcommand{\R}{\mathbb{R}}
\newcommand{\E}{\mathbb{E}}
\newcommand{\grad}{\nabla} 
\newcommand{\pset}{\mathcal{P}} 
\newcommand{\psdgt}{\succ}
\icmltitlerunning{Stein Point Markov Chain Monte Carlo}
\begin{document}

\twocolumn[
\icmltitle{Stein Point Markov Chain Monte Carlo}

\icmlsetsymbol{equal}{*}

\begin{icmlauthorlist}
\icmlauthor{Wilson Ye Chen}{equal,ism}
\icmlauthor{Alessandro Barp}{equal,imp,ati}
\icmlauthor{Fran\c{c}ois-Xavier Briol}{cam,ati}
\icmlauthor{Jackson Gorham}{od}
\icmlauthor{Mark Girolami}{cam,ati}
\icmlauthor{Lester Mackey}{equal,mr}
\icmlauthor{Chris. J. Oates}{ncl,ati}
\end{icmlauthorlist}

\icmlaffiliation{ism}{Institute of Statistical Mathematics}
\icmlaffiliation{imp}{Imperial College London}
\icmlaffiliation{ati}{Alan Turing Institute}
\icmlaffiliation{cam}{University of Cambridge}
\icmlaffiliation{od}{OpenDoor}
\icmlaffiliation{mr}{Microsoft Research}
\icmlaffiliation{ncl}{Newcastle University}

\icmlcorrespondingauthor{Lester Mackey}{lmackey@microsoft.com}
\icmlcorrespondingauthor{Chris. J. Oates}{chris.oates@ncl.ac.uk}

\icmlkeywords{Stein's method, Bayesian computation}

\vskip 0.3in
]

\printAffiliationsAndNotice{\icmlEqualContribution} 

\begin{abstract}
An important task in machine learning and statistics is the approximation of a probability measure by an empirical measure supported on a discrete point set. 
Stein Points are a class of algorithms for this task, which proceed by sequentially minimising a Stein discrepancy between the empirical measure and the target and, hence, require the solution of a non-convex optimisation problem to obtain each new point. 
This paper removes the need to solve this optimisation problem by, instead, selecting each new point based on a Markov chain sample path. 
This significantly reduces the computational cost of Stein Points and leads to a suite of algorithms that are straightforward to implement. 
The new algorithms are illustrated on a set of challenging Bayesian inference problems, and rigorous theoretical guarantees of consistency are established.
\end{abstract} 


\section{Introduction}
\label{sec:introduction}

The task that we consider in this paper is to approximate a Borel probability measure $P$ on an open and convex set $\mathcal{X} \subseteq \mathbb{R}^d$, $d \in \mathbb{N}$, with an empirical measure $\hat{P}$ supported on a discrete point set $\{x_i\}_{i=1}^n \subset \mathcal{X}$. 
To limit scope we restrict attention to uniformly-weighted empirical measures; $\hat{P} = \frac{1}{n} \sum_{i=1}^n \delta_{x_i}$ where $\delta_{x}$ is a Dirac measure on $x$.
The \emph{quantisation} \citep{Graf2007} of $P$ by $\hat{P}$ is an important task in computational statistics and machine learning.
For example, quantisation facilitates the approximation of integrals $\int_{\mathcal X} f \mathrm{d}P$ of measurable functions $f:\mathcal X \rightarrow \mathbb{R}$ using cubature rules $f \mapsto \frac{1}{n} \sum_{i=1}^n f(x_i)$. 
More generally, quantisation underlies a broad spectrum of algorithms for uncertainty quantification that must operate subject to a finite computational budget.
Motivated by applications in Bayesian statistics, our focus is on the situation where $P$ admits a density $p$ with respect to the Lebesgue measure on $\mathcal{X}$ but this density can only be evaluated up to an (unknown) normalisation constant.
Specifically, we assume that $p = \frac{\tilde{p}}{C}$ where $\tilde{p}$ is an un-normalised density and $C>0$, such that both $\tilde{p}$ and $\nabla \log \tilde{p}$, where $\nabla = (\frac{\partial}{\partial x^1},\ldots,\frac{\partial}{\partial x^d})$, can be (pointwise) evaluated at finite computational cost. 

A popular approach to this task is Markov chain Monte Carlo \citep[MCMC;][]{Robert2004}, where the sample path of an ergodic Markov chain with invariant distribution $P$ constitutes a point set $\{x_i\}_{i=1}^n$.
MCMC algorithms exploit a range of techniques to construct Markov transition kernels which leave $P$ invariant, based (in general) on pointwise evaluation of $\tilde{p}$ \citep{Metropolis1953} or (sometimes) on pointwise evaluation of $\nabla \log \tilde{p}$ and higher-order derivative information \citep{Girolami2011}.
In a favourable situation, the MCMC output will be approximately independent draws from $P$.
However, in this case the $\{x_i\}_{i=1}^n$ will typically not be a \emph{low discrepancy} point set \citep{Dick2010} and as such the quantisation of $P$ performed by MCMC will be sub-optimal.
In recent years several attempts have been made to deveop improved algorithms for quantisation in the Bayesian statistical context as an alternative to MCMC:
\begin{itemize}
\item \textbf{Minimum Energy Designs} (MED)
In \cite{RoshanJoseph2015,Joseph2017} it was proposed to obtain a point set $\{x_i\}_{i=1}^n$ by using a numerical optimisation method to approximately minimise an energy functional $\mathcal{E}_{\tilde{p}}(\{x_i\}_{i=1}^n)$ that depends on $P$ only through $\tilde{p}$ rather than through $p$ itself.
Though appealing in its simplicity, MED has yet to receive a theoretical treatment that accounts for the imperfect performance of the numerical optimisation method.

\item \textbf{Support Points} The method of \cite{Mak2018} first generates a large MCMC output $\{\tilde{x}_i\}_{i=1}^N$ and from this a subset $\{x_i\}_{i=1}^n$ is selected in such a way that a low-discrepancy point set is obtained.
(This can be contrasted with classical \emph{thinning} in which an arithmetic subsequence of the MCMC output is selected.)
At present, a theoretical analysis that accounts for the possible poor performance of the MCMC method has not yet been announced.

\item \textbf{Transport Maps and QMC} The method of \cite{Parno2015} aims to learn a \emph{transport map} $T : \mathcal{X} \rightarrow \mathcal{X}$ such that the pushforward measure $T_\# Q$ corresponds to $P$, where $Q$ is a distribution for which quantisation by a point set $\{\tilde{x}_i\}_{i=1}^n$ is easily performed, for instance using quasi-Monte Carlo (QMC) \citep{Dick2010}.
Then quantisation of $P$ is provided by the point set $\{T(\tilde{x}_i)\}_{i=1}^n$.
The flexibility in the construction of a transport map allows several algorithms to be envisaged, but an end-to-end theoretical treatment is not available at present.

\item \textbf{Stein Variational Gradient Descent} (SVGD)
A popular methodology due to \citep{Liu2016SVGD} aims to take an arbitrary initial point set $\{x_i^{0}\}_{i=1}^n$ and to construct a discrete time dynamical system $x_i^{t} = g_{\tilde{p}}(x_1^{t-1},\dots,x_n^{t-1})$, indexed by time $t$ and dependent on $\tilde{p}$, such that $\lim_{t \rightarrow \infty} \{x_i^{t}\}_{i=1}^n$ provides a quantisation of $P$.
This can be viewed as a discretisation of a particular gradient flow that has $P$ as a fixed point \citep{Liu2017GradientFlow}.
However, a generally applicable theoretical analysis of the SVGD method itself is not available \citep[note that a compactness assumption on $\mathcal{X}$ was required in][]{Liu2017GradientFlow}.
Note also that, unlike the other methods discussed in this section, SVGD does not readily admit an \emph{extensible} construction; that is, the number $n$ of points must be {\it a priori} fixed.

\item \textbf{Stein Points} (SP)
The authors of \cite{Chen2018SteinPoints} proposed to select a point set $\{x_i\}_{i=1}^n$ that approximately minimises a \emph{kernel Stein discrepancy} \citep[KSD;][]{Liu2016,Chwialkowski2016,Gorham2017} between the empirical measure and the target $P$.
The KSD can be exactly computed with a finite number of pointwise evaluations of $\nabla \log \tilde{p}$ and, for the (non-convex) minimisation, a variety of numerical optimisation methods can be applied.
In contrast to the other methods just discussed, SP does admit a end-to-end theoretical treatment when a grid search procedure is used as the numerical optimisation method \citep[Thms. 1 \& 2 in][]{Chen2018SteinPoints}.
\end{itemize}

An empirical comparison of several of the above methods on a selection of problems arising in computational statistics was presented in \cite{Chen2018SteinPoints}.
The conclusion of that work was that MED and SP provided broadly similar performance-per-computational-cost at the quantisation task, where the performance was measured by the Wasserstein distance to the target and the computational cost was measured by the total number of evaluations of either $\tilde{p}$ or its gradient.
In some situations, SVGD provided superior quantisation to MED and SP but this was achieved at a substantially higher computational cost.
At the same time, it was observed that \emph{all} algorithms considered provided improved quantisation compared to MCMC, but at a computational cost that was substantially higher than the corresponding cost of MCMC.

\begin{figure}[t!]
\centering
\includegraphics[width=0.45\textwidth,clip,trim = 1.1cm 1cm 1cm 0cm]{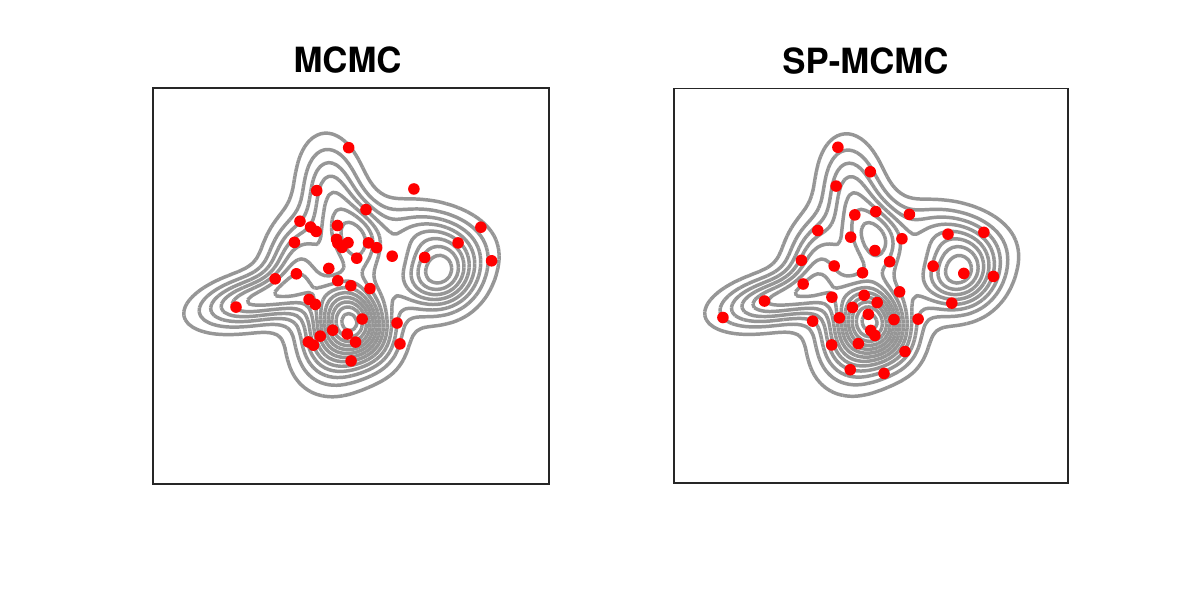} 
\caption{Illustration of Monte Carlo points (MC; left) and Stein Point Markov chain Monte Carlo (SP-MCMC; right) on a Gaussian mixture target $P$. SP-MCMC provides better space-filling properties than MC.}
\label{fig:illustration_stein_points}
\end{figure}

In this paper, we propose Stein Point Markov chain Monte Carlo (SP-MCMC), aiming to provide strong performance at the quantisation task (see Fig. \ref{fig:illustration_stein_points}) but at substantially reduced computational cost compared to the original SP method. 
Our contributions are summarised as follows:
\begin{itemize}
\item The global optimisation subroutine in SP, whose computational cost was exponential in dimension $d$, is replaced by a form of local search based on MCMC. 
This allows us to make use of efficient transition kernels for exploration of $\mathcal{X}$, which in turn improves performance in higher dimensions and reduces the overall computational cost.

\item Our construction requires a new Markov chain to be initialised each time a point $x_n$ is added, however the initial distribution of the chain does not need to coincide with $P$.
This enables us to develop an efficient criterion for initialisation of the Markov chains, based on the introduced notion of the ``most influential'' point in $\{x_i\}_{i=1}^{n-1}$, as quantified by KSD. 
This turns our sequence of local searches into a global-like search, and also leads to automatic ``mode hopping'' behaviour when $P$ is a multi-modal target.

\item The consistency of SP-MCMC is established under a $V$-uniform ergodicity condition on the Markov kernel.

\item SP-MCMC is shown, empirically, to outperform MCMC, MED, SVGD and SP when applied to posterior computation in the Bayesian statistical context.
\end{itemize}

The paper is structured as follows:
In Section \ref{sec:background} we review the central notions of Stein's method and KSD, as well as recalling the original SP method.
The novel methodology is presented in Section \ref{sec:methodology}.
This is assessed experimentally in Section \ref{sec:experiments} and theoretically in Section \ref{sec:theory}.
Conclusions are drawn in Section \ref{sec:conclusion}.


\section{Background}
\label{sec:background}
In Section \ref{subsec: discrepancy} we recall the construction of KSD, then in Section \ref{sec: Stein points} the SP method of \cite{Chen2018SteinPoints}, which is based on minimisation of KSD, is discussed.

\subsection{Discrepancy and Stein's Method} \label{subsec: discrepancy}

A \emph{discrepancy} is a notion of how well an empirical measure, based on a point set $\{x_i\}_{i=1}^n \subset \mathcal{X}$, approximates a target $P$. 
One popular form of discrepancy is the \textit{integral probability metric} (IPM) \citep{Muller1997}, which is based on a set $\mathcal{F}$ consisting of functionals on $\mathcal{X}$, and is defined as:
\begin{equation}\label{eq:IPMs}
D_{\mathcal{F},P}(\{x_i\}_{i=1}^n)
\; := \;
\textstyle \sup_{f \in \mathcal{F}} \left| \frac{1}{n} \sum_{i=1}^n f(x_i) - \int_{\mathcal X} f \mathrm{d}P \right| 
\end{equation}
The set $\mathcal{F}$ is required to be measure-determining in order for the IPM to be a genuine metric.
Certain sets $\mathcal{F}$ lead to familiar notions, such as the Wasserstein distance, but direct computation of an IPM will generically require exact integration against $P$; a demand that is not met in the Bayesian context.
In order to construct an IPM that \emph{can} be computed in the Bayesian context, \cite{Gorham2015} proposed the notion of a \emph{Stein discrepancy}, based on Stein's method \citep{Stein1972}.
This consists of finding an operator $\mathcal{A}$, called a \emph{Stein operator}, and a function class $\mathcal{G}$, called a \emph{Stein class}, which satisfy the \emph{Stein identity} $ \int_{\mathcal{X}} \mathcal{A}g \mathrm{d}P = 0$ for all $g \in \mathcal{G}$. 
Taking $\mathcal{F} = \mathcal{A} \mathcal{G}$ to be the image of $\mathcal{G}$ under $\mathcal{A}$ in \cref{eq:IPMs} leads directly to the \textit{Stein discrepancy}:
\begin{equation}\label{eq:stein_discrepancy}
D_{\mathcal{AG},P}(\{x_i\}_{i=1}^n)
\; = \;
\textstyle \sup_{g \in \mathcal{G}} \left| \frac{1}{n} \sum_{i=1}^n \mathcal{A}g(x_i) \right| 
\end{equation}
A particular choice of $\mathcal{A}$ and $\mathcal{G}$ was studied in \cite{Gorham2015} with the property that exact computation can be performed based only on point-wise evaluation of $\nabla \log \tilde{p}$.
The computation of this \emph{graph Stein discrepancy} reduced to solving $d$ independent linear programs in parallel with $O(n)$ variables and constraints.

To eliminate the the reliance on a linear program solver, \cite{Liu2016,Chwialkowski2016,Gorham2017} proposed \emph{kernel Stein discrepancies}, alternative Stein discrepancies \cref{eq:stein_discrepancy} with embarrassingly parallel, closed-form values.
For the remainder we assume that $p > 0$ on $\mathcal{X}$.
The canonical KSD is obtained by taking the Stein operator $\mathcal{A}$ to be the Langevin operator $\mathcal{A} g :=  \frac{1}{\tilde{p}} \nabla \cdot (\tilde{p} g)$ and the Stein class $\mathcal{G} = B(\mathcal{K}^d)$ to be the unit ball of a space of vector-valued functions, formed as a $d$-dimensional Cartesian product of scalar-valued reproducing kernel Hilbert spaces $\mathcal{K}$ (RKHS) \citep{Berlinet2004}. 
(Throughout we use $\nabla \cdot$ to denote divergence and $\langle \cdot, \cdot \rangle$ to denote the Euclidean inner product.)
Recall that an RKHS $\mathcal{K}$ is a Hilbert space of functions with inner product $\langle\cdot,\cdot\rangle_{k}$ and induced norm $\|\cdot\|_k$, and there is a function $k: \mathcal X \times \mathcal X \rightarrow \mathbb{R}$, called a \emph{kernel}, such that $\forall x \in\mathcal  X$, we can write the evaluation functional $f(x) = \langle f, k(\cdot,x)\rangle_{k}$ $\forall f \in \mathcal{K}$.
It is assumed that the mixed derivatives $\partial^2 k(x,y)/\partial x^i \partial y^{j}$ and all lower-order derivatives are continuous and uniformly bounded. 
For $\mathcal{X}$ bounded, with piecewise smooth boundary denoted $\partial \mathcal{X}$, outward normal denoted $\mathrm{n}$ and surface element denoted $\mathrm{d}\sigma(x)$, the conditions $\oint_{\partial \mathcal{X}} k(x,x') p(x) \mathrm{n}(x) \mathrm{d}\sigma(x') = 0$, $\textstyle \oint_{\partial \mathcal{X}} \nabla_x k(x,x') \cdot \mathrm{n}(x) p(x) \mathrm{d}\sigma(x') = 0$ are sufficient for the Stein identity to hold; c.f. Lemma 1 in \cite{Oates2017}.
For $\mathcal{X} = \mathbb{R}^d$, a sufficient condition is $\textstyle \int_{\mathcal{X}} \| \nabla \log p(x) \|_2 \mathrm{d}P(x) < \infty$; c.f. Prop. 1 of \cite{Gorham2017}.
The image $\mathcal{A}\mathcal{G} = B(\mathcal{K}_0)$ is the unit ball of another RKHS, denoted $\mathcal{K}_0$, whose kernel is \citep{Oates2017}:
\begin{align}
k_0(x,x') & \; = \; \nabla_x \cdot \nabla_{x'} k(x,x') + \left\langle \nabla_x k(x,x') , \nabla_{x'} \log \tilde{p}(x') \right\rangle \notag\\ 
 & \quad + \left\langle \nabla_{x'} k(x,x') , \nabla_x \log \tilde{p}(x) \right\rangle \notag\\
 & \quad + k(x,x') \left\langle \nabla_{x} \log \tilde{p}(x) , \nabla_{x'} \log \tilde{p}(x') \right\rangle \label{eq:stein_kernel}
\end{align}
In this case, \cref{eq:stein_discrepancy} corresponds to a maximum mean discrepancy \citep[MMD;][]{Gretton2006} in the RKHS $\mathcal{K}_0$ and thus can be explicitly computed.
The Stein identity implies that $\int_{\mathcal{X}} k_0(x,\cdot) \mathrm{d}P \equiv 0$.
Thus we denote the KSD between the empirical measure $\frac{1}{n} \sum_{i=1}^n \delta_{x_i}$ and the target $P$ (in a small abuse of notation) as
\begin{equation}
D_{\mathcal{K}_0,P}(\{x_i\}_{i=1}^n) \; := \; \textstyle \sqrt{\frac{1}{n^2}\sum_{i,j = 1}^n k_0(x_i,x_j)}.
\end{equation}
Under regularity assumptions \citep{Gorham2017,Chen2018SteinPoints,Huggins2018}, the KSD controls classical weak convergence of the empirical measure to the target. 
This motivates selecting the $\{x_i\}_{i=1}^n$ to minimise the KSD, and to this end we now recall the SP method of \citep{Chen2018SteinPoints}.

\subsection{Stein Points} \label{sec: Stein points}

The \textit{Stein Point} (SP) method due to \citep{Chen2018SteinPoints} selects points $\{x_i\}_{i=1}^n$ to approximately minimise $D_{\mathcal{K}_0,P}\left(\{x_i\}_{i=1}^n \right)$. 
This is of course a challenging non-convex and multivariate problem in general. 
For this reason, two sequential strategies were proposed.
The first, called \textit{Greedy} SP, was based on greedy minimisation of KSD, whilst the second, called \textit{Herding} SP, was based on Frank-Wolfe minimisation of KSD. 
In each case, at iteration $j \in \{1,\dots,n\}$ of the algorithm, the points $\{x_i\}_{i=1}^{j-1}$ have been selected and a global search method is used to select the next point $x_j \in \mathcal{X}$. 
To limit scope we restrict the discussion below to Greedy SP, as this has stronger theoretical guarantees and has been shown empirically to outperform Herding SP.
The convergence of Greedy SP was established in Theorem 2 of \citep{Chen2018SteinPoints} when $k_0$ is a $P$-sub-exponential kernel (Def. 1 of Chen et al.). 
More precisely, assume that for some pre-specified tolerance $\delta > 0$, the resulting point sequence satisfies the following identity $\forall j \in \{1,\ldots,n\}:$
\begin{eqnarray*}
D_{\mathcal{K}_0,P}(\{x_i\}_{i=1}^j)^2 \leq {\textstyle \frac{\delta}{j^2} } + \inf_{x \in \mathcal{X}} D_{\mathcal{K}_0,P}(\{x_i\}_{i=1}^{j-1} \cup \{x\})^2.
\end{eqnarray*}
Then it was shown that $\exists \; c_1,c_2>0$ such that
\begin{align}
D_{\mathcal{K}_0,P}(\{x_i\}_{i=1}^n) &\leq e^{\pi/2} \textstyle \sqrt{\frac{2 \log(n)}{c_2 n} + \frac{c_1}{n} + \frac{\delta}{n}} \label{eq: SP theory result}
\end{align}
so that KSD is asymptotically minimised.
However, a significant limitation of the SP method is that it requires a \textit{global} (non-convex) minimisation problem over $\mathcal{X}$ to be (approximately) solved in order to select the next point. 
In practice, the global search at iteration $j$ can be facilitated by a grid search over $\mathcal{X}$, but this procedure entails a computational cost that is exponential in the dimension $d$ of $\mathcal{X}$ and even in modest dimension this becomes impractical. 

The main contribution of the present paper is to re-visit the SP method and to study its behaviour when the global search is replaced with a \textit{local} search, facilitated by a MCMC method.
To proceed, two main challenges must be addressed:
First, an appropriate local optimisation procedure must be developed.
Second, the theoretical convergence of the modified algorithm must be established.
In the next section we address the first challenge by presenting our novel methodological development.


\section{Methodology}
\label{sec:methodology}
In Section \ref{subset: MCMC search} we present the novel SP-MCMC method.
Then in Section \ref{subsec: precondition} we describe how the kernel $k$ can be pre-conditioned to improve performance in SP-MCMC.

\subsection{SP-MCMC} \label{subset: MCMC search}

In this paper, we propose to replace the global minimisation at iteration $j$ of the SP method of \cite{Chen2018SteinPoints} with a local search based on a $P$-invariant Markov chain of length $m_j$, where the sequence $(m_j)_{j \in \mathbb{N}}$ is to be specified. 
The proposed SP-MCMC method proceeds as follows:
\begin{enumerate}
\item Fix an initial point $x_1 \in \mathcal{X}$. 
\item For $j = 2,\dots,n$:
\begin{enumerate}
\item [i.] Select an index $i^* \in \{1,\dots,j-1\}$ according to some criterion \verb!crit!$(\{x_i\}_{i=1}^{j-1})$, to be defined.
\item [ii.] Run a $P$-invariant Markov chain, initialised at $x_{i^*}$, for $m_j$ iterations and denote the realised sample path as $(y_{j,l})_{l=1}^{m_j}$.
\item [iii.] Set $x_j = y_{j,l}$ where $l \in \{1,\dots,m_j\}$ minimises $D_{\mathcal{K}_0,P}(\{x_i\}_{i=1}^{j-1} \cup \{y_{j,l}\})$.
\end{enumerate}
\end{enumerate}

It remains to specify the sequence $(m_j)_{j \in \mathbb{N}}$ and the criterion \verb!crit!. Precise statements about the effect of these choices on convergence are reserved for the theoretical treatment in Section \ref{sec:theory}.
For the criterion \verb!crit!, three different approaches are considered:
\begin{itemize}
\item \verb!LAST! selects the point last added: $i^* := j-1$.
\item \verb!RAND! selects $i^*$ uniformly at random in $\{1,\dots,j-1\}$.
\item \verb!INFL! selects $i^*$ to be the index of a most influential point in $\{x_i\}_{i=1}^{j-1}$. 
Specifically, we call $x_{i^*}$ a \textit{most influential} point if removing it from our point set creates the greatest increase in KSD. i.e. $i^*$ maximises $D_{\mathcal{K}_0,P}(\{x_i\}_{i=1}^{j-1} \setminus \{x_{i^*}\})$.
\end{itemize}

SP-MCMC overcomes the main limitation facing the original SP method; the global search is avoided. 
Indeed, the cost of simulating $m_j$ steps of a $P$-invariant Markov chain will typically be just a fraction of the cost of implementing a global search method. 
The number of iterations $m_j$ acts as a lever to trade-off approximation quality against computational cost, with larger $m_j$ leading on average to an empirical measure with lower KSD.
The precise relationship is elucidated in Section \ref{sec:theory}.

\begin{remark}[KSD has low overhead]
A large number of modern MCMC methods, such as the Metropolis-adjusted Langevin algorithm (MALA) and Hamiltonian Monte Carlo, exploit evaluations of $\nabla \log \tilde{p}$ to construct a $P$-invariant Markov transition kernel \citep{Barp2018HMCReview}.
If such an MCMC method is used, the gradient information $\nabla \log \tilde{p}(x_{i^*})$ is computed during the course of the MCMC and can be recycled in the subsequent computation of KSD.
\end{remark}

\begin{remark}[Automatic mode-hopping]
Although the Markov chain is used only for a local search, the \emph{initialisation} criteria \verb!RAND! and \verb!INFL! offer the opportunity to jump to any point in the set $\{x_i\}_{i=1}^{j-1}$ and thus can facilitate global exploration of the state space $\mathcal{X}$. 
The \verb!INFL! criteria, in particular, favours areas of $\mathcal{X}$ that are under-represented in the point set and thus, for a multi-modal target $P$, one can expect ``mode hopping'' from near an over-represented mode to near an under-represented mode of $P$.
\end{remark}

\begin{remark}[Removal of bad points] \label{rem: bad points}
A natural extension of the SP-MCMC method allows for the possibility of removing a ``bad'' point from the current point set.
That is, at iteration $j$ we may decide, according to some probabilistic or deterministic schedule, to remove a point $x_{i^*}$ that minimises $D_{\mathcal{K}_0,P}(\{x_i\}_{i=1}^{j-1} \setminus \{x_{i^*}\})$.
This extension was also investigated and results are reserved for Section \ref{subsec: bad points supp}.
\end{remark}

\begin{remark}[Sequence vs set]
If the number $n$ of points is pre-specified, then after the $n$ point is selected one can attempt to further improve the point set by applying (e.g.) co-ordinate descent to the KSD interpreted as a function $D_{\mathcal{K}_0,P} : \mathcal{X}^n \rightarrow [0,\infty)$; see \cite{Chen2018SteinPoints}. 
To limit scope, this was not considered.
\end{remark}

\subsection{Pre-conditioned Kernels for SP-MCMC}
\label{subsec: precondition}

The original analysis of \cite{Gorham2017} focussed on the inverse multiquadric (IMQ) kernel $k(x,x') = \left( 1 + \lambda^{-2} \|x - x'\|_2^2 \right)^\beta$ for some length-scale parameter $\lambda > 0$ and exponent $\beta \in (-1,0)$; alternative kernels were considered in \cite{Chen2018SteinPoints}, but the IMQ kernel was observed to lead to the best empirical approximations as quantified objectively by the Wasserstein distance between the empirical measure and the target.
Thus, in this paper we focus on the IMQ kernel.
However, in order to improve the performance of the algorithm, we propose to allow for \textit{pre-conditioning} of the kernel; that is, we consider
\begin{align}
k(x,x') &= \big( 1 + \|\Lambda^{-\frac{1}{2}}(x - x')\|_2^2 \big)^\beta \label{eq: IMQ kernel}
\end{align}
for some symmetric positive definite matrix $\Lambda$.
The use of pre-conditioned kernels was recently proposed in the context of SVGD in \cite{Detommaso2018}, where $\Lambda^{-1}$ was taken to be an approximation to the expected Hessian $- \int \nabla_x \nabla_x^\top \log \tilde{p}(x) \mathrm{d}P(x)$ of the negative log target.
Note that the matrix $\Lambda$ can also form part of a MCMC transition kernel, such as the pre-conditioner matrix in MALA \citep{Girolami2011}.
Sufficient conditions for when a pre-conditioned kernel ensures that KSD controls classical weak convergence of the empirical measure to the target are established in Section \ref{sec:theory}.


\section{Experimental Results}
\label{sec:experiments}
In this section our attention turns to the empirical performance of SP-MCMC.
The experimental protocol is explained in Section \ref{subsec: protocol} and specific experiments are described in Sections \ref{subsec: GMM}, \ref{subsec: IGARCH} and \ref{subsec: ODE}.

\subsection{Experimental Protocol} \label{subsec: protocol}

To limit scope, we present a comparison of SP-MCMC to the original SP method, as well as to MCMC, MED and SVGD.
All experiments involving SP-MCMC, SP or SVGD in this paper were based on the IMQ kernel in \cref{eq: IMQ kernel} with $\beta = -\frac{1}{2}$.
The preconditioner matrix $\Lambda$ was taken either to be a sample-based approximation to the covariance matrix of $P$ (Secs. \ref{subsec: GMM} and \ref{subsec: IGARCH}), generated by running a short MCMC, or $\Lambda \propto I$ (Sec. \ref{subsec: ODE}); however, in each experiment $\Lambda$ was fixed across all methods being compared. 
The Markov chains used for SP-MCMC and MCMC in this work employed either a random walk Metropolis (RWM) or a MALA transition kernel, described in Appendix \ref{ap: mala explain}.
Our implementations of MED and SVGD are described in Appendix \ref{ap: benchmark methods}.

Three experiments of increasing sophistication were considered.\footnote{Code to reproduce all experiments can be downloaded at \url{https://github.com/wilson-ye-chen/sp-mcmc}. }
First, in Section \ref{subsec: GMM} we consider a simple Gaussian mixture target in order to explore SP-MCMC and investigate sensitivity to the degrees of freedom in this new method.
Second, in Section \ref{subsec: IGARCH} we revisit one of the experiments in \cite{Chen2018SteinPoints}, in order to directly compare against SP, MCMC, MED and SVGD.
Third, in Section \ref{subsec: ODE} we consider a more challenging application to Bayesian parameter inference in an ordinary differential equation (ODE) model.

\subsection{Gaussian Mixture Model} \label{subsec: GMM}

\begin{figure}[t!]
\begin{center}
\includegraphics[width = \columnwidth,clip,trim = 1.3cm 1.5cm 2.1cm 1cm]{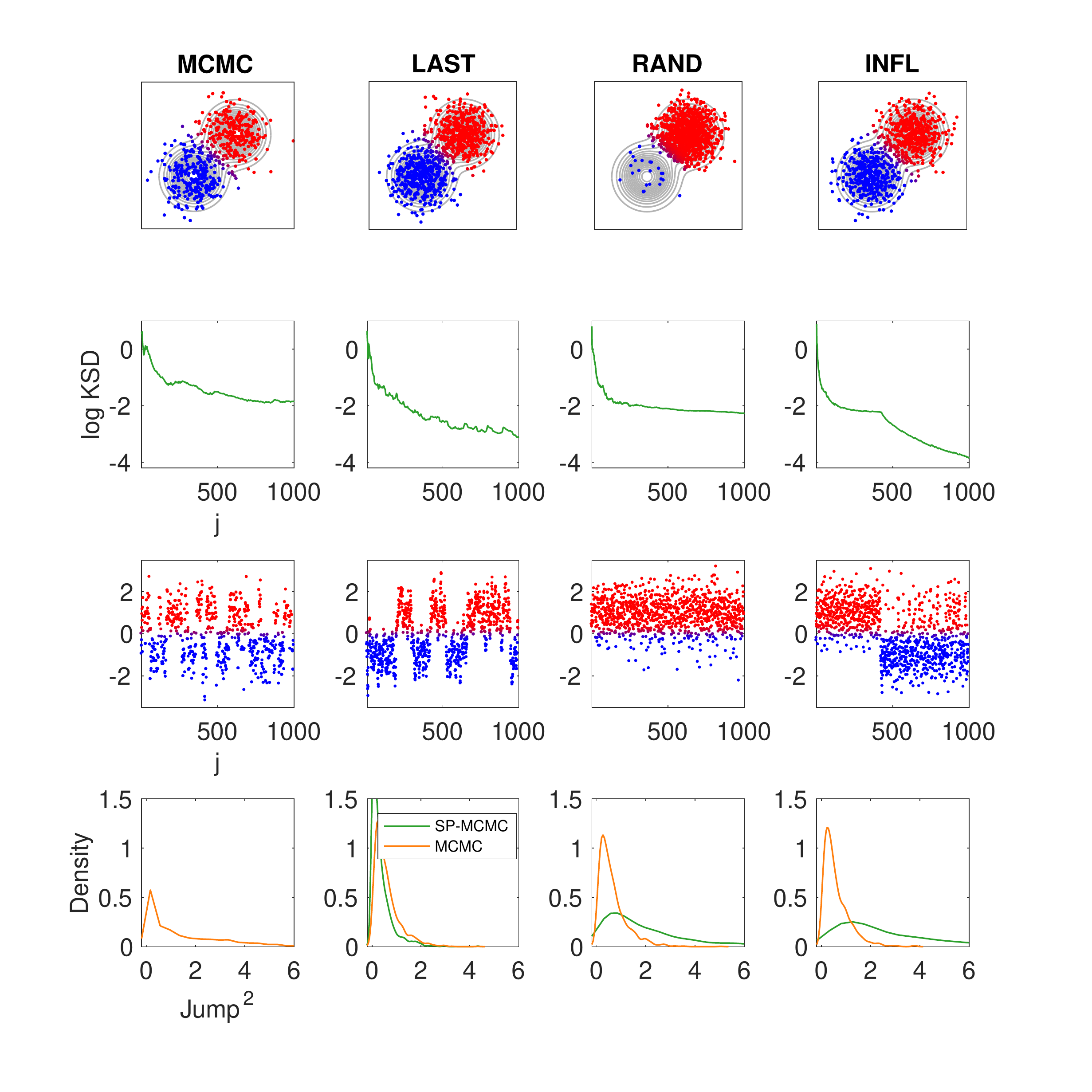}
\cprotect\caption{Gaussian mixture experiment in dimension $d = 2$.
Columns (left to right): MCMC, SP-MCMC with \verb!LAST!, SP-MCMC with \verb!RAND!, SP-MCMC with \verb!INFL!.
Top row: Point sets of size $n = 1000$ produced by MCMC and SP-MCMC.
(Point colour indicates the mode to which they are closest.)
Second row: Trace plot of $\log D_{\mathcal{K}_0,P}(\{x_i\}_{i=1}^j)$ as $j$ is varied from $1$ to $n$.
Third row: Trace plots of the sequence $(x_i)_{i=1}^n$, projected onto the first coordinate.
Bottom row: Distribution of the squared jump distance $\|x_j - x_{j-1}\|_2^2$ (green) compared to the quantities $\|y_{j,m_j} - y_{j,1}\|_2^2$ associated with the Markov chains (orange) used during the course of each method.
}
\label{fig:Gauss mixture}
\end{center}
\end{figure}

For exposition we let $\sigma^2 = 0.5$ and consider a $d =2$ dimensional Gaussian mixture model $P = \frac{1}{2}\mathcal{N}(-\mathrm{1},\sigma^2 \mathrm{I}_{d \times d}) + \frac{1}{2}\mathcal{N}(\mathrm{1},\sigma^2 \mathrm{I}_{d \times d})$ with modes at $\mathrm{1} = [1,1]$ and $-\mathrm{1}$.
The performance of MCMC was compared to SP-MCMC for each of the criteria \verb!LAST!, \verb!RAND!, \verb!INFL!.
Note that in this section we do not address computational cost; this is examined in Secs. \ref{subsec: IGARCH} and \ref{subsec: ODE}.
For SP-MCMC the sequence $(m_j)_{j \in \mathbb{N}}$ was set as $m_j = 5$.
Results are presented in Fig. \ref{fig:Gauss mixture} with $n = 1000$.

The point sets produced by SP-MCMC with \verb!LAST! and \verb!INFL! (top row) were observed to provide a better quantisation of the target $P$ compared to MCMC, as captured by the KSD of the empirical measure to the target (second row).
\verb!RAND! did not distribute points evenly between modes and, as a result, KSD was observed to plateau in the range of $n$ displayed.
For MCMC, the proposal step-size $h > 0$ was optimised according to the recommendations in \cite{Roberts2001}, but nevertheless the chain was observed to jump between the two components of $P$ only infrequently (third row, colour-coded).
In contrast, after an initial period where both modes are populated, SP-MCMC under the \verb!INFL! criteria was seen to frequently jump between components of $P$.
Finally, we note that under \verb!INFL! the typical squared jump distance $\|x_j - x_{j-1}\|_2^2$ was greater than the analogous quantities $\|y_{j,m_j} - y_{j,1}\|_2^2$ for the underlying Markov chains that were used (bottom row), despite the latter being optimised according to the recommendations of \cite{Roberts2001}, which supports the view that more frequent mode-hopping is a property of the \verb!INFL! method.
Based on the findings of this experiment, we focus only on \verb!LAST! and \verb!INFL! in the sequel.
The extension where ``bad'' points are removed, described in Remark \ref{rem: bad points}, was explored in supplemental Section \ref{subsec: bad points supp}.

\subsection{IGARCH Model} \label{subsec: IGARCH}

Next our attention turns to whether SP-MCMC improves over the original SP method and how it compares to existing methods such as MED and SVGD when computational cost is taken into account.
To this end we consider an identical experiment to \cite{Chen2018SteinPoints}, based on Bayesian inference for a classical integrated generalised autoregressive conditional heteroskedasticity (IGARCH) model.
The IGARCH model  \citep{Taylor2011}
\begin{eqnarray*}
y_t & = & \sigma_t \epsilon_t, \hspace{40pt} \epsilon_t \stackrel{\text{i.i.d.}}{\sim} \mathcal{N}(0,1) \\
\sigma_t^2 & = & \theta_1 + \theta_2 y_{t-1}^2 + (1-\theta_2) \sigma_{t-1}^2
\end{eqnarray*}
describes a financial time series $(y_t)$ with time-varying volatility $(\sigma_t)$.
The model is parametrised by $\theta = (\theta_1,\theta_2)$, $\theta_1 > 0$ and $0 < \theta_2 < 1$ and Bayesian inference for $\theta$ is considered, based on data $y = (y_t)$ that represent 2,000 daily percentage returns of the S\&P 500 stock index (from December 6, 2005 to November 14, 2013).
Following \cite{Chen2018SteinPoints}, an improper uniform prior was placed on $\theta$. The domain $\mathcal{X} = \mathbb{R}_+ \times (0,1)$ is bounded and, for this example, the posterior $P$ places negligible mass near the boundary $\partial \mathcal{X}$.
This ensures that the boundary conditions described in Sec. \ref{subsec: discrepancy} hold essentially to machine precision, as argued in \cite{Chen2018SteinPoints}.

\begin{figure}
\centering
\includegraphics[width = 0.5\textwidth]{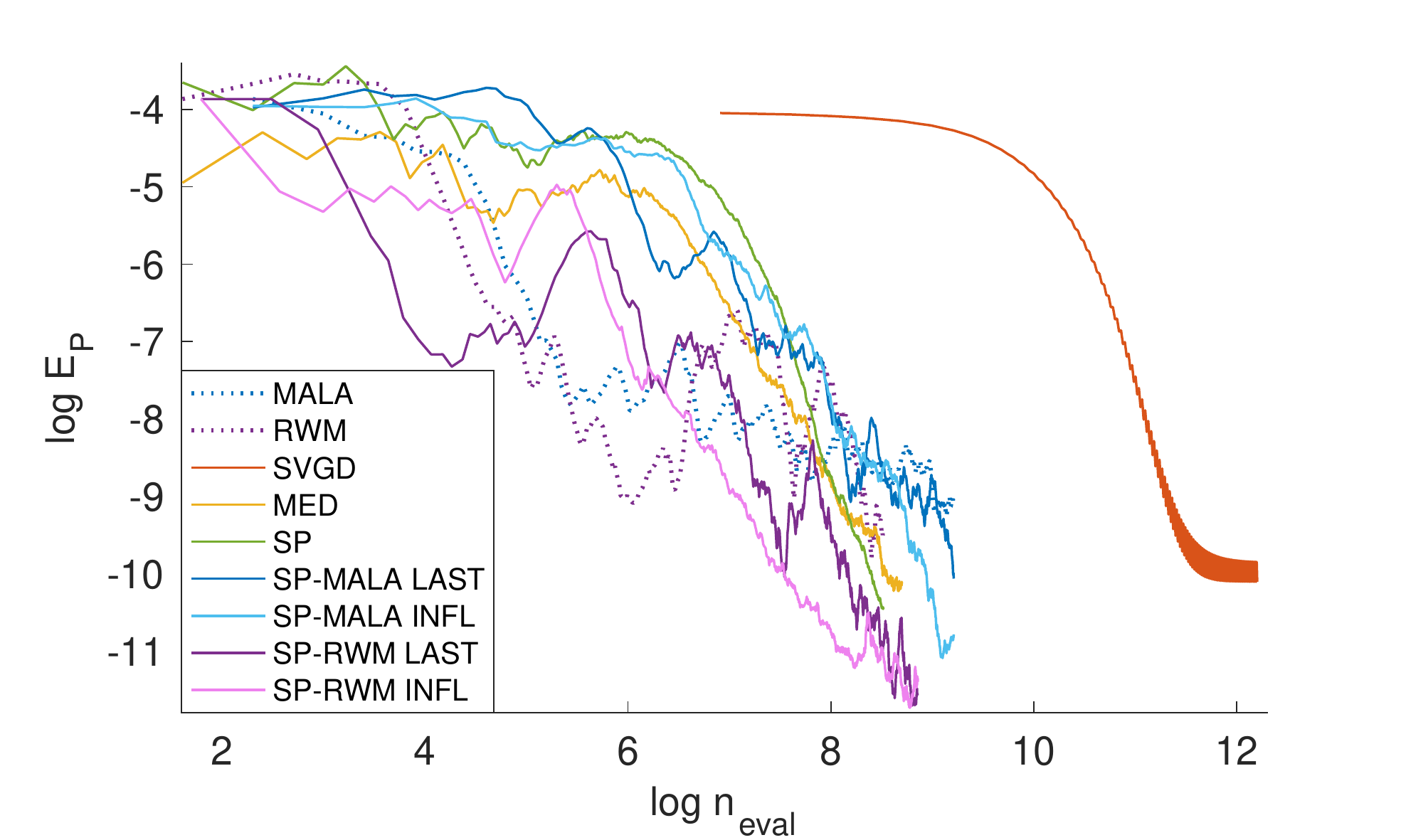}
\caption{IGARCH experiment.
The new SP-MCMC method was compared against the original SP method of \cite{Chen2018SteinPoints}, as well as against MCMC, MED \citep{RoshanJoseph2015} and SVGD \citep{Liu2016SVGD}.
The implementation of all existing methods is described in Appendix \ref{ap: extra results}.
Each method produced an empirical measure $\frac{1}{n}\sum_{i=1}^n \delta_{x_i}$ whose distance to the target $P$ was quantified by the energy distance $E_P$.
The computational cost was quantified by the number $n_{\text{eval}}$ of times either $\tilde{p}$ or its gradient were evaluated.
}
\label{fig:IGARCH}
\end{figure}

\begin{figure*}[t]
\centering
\begin{subfigure}[t]{0.49\textwidth}
\centering
\includegraphics[width = 1.0\textwidth]{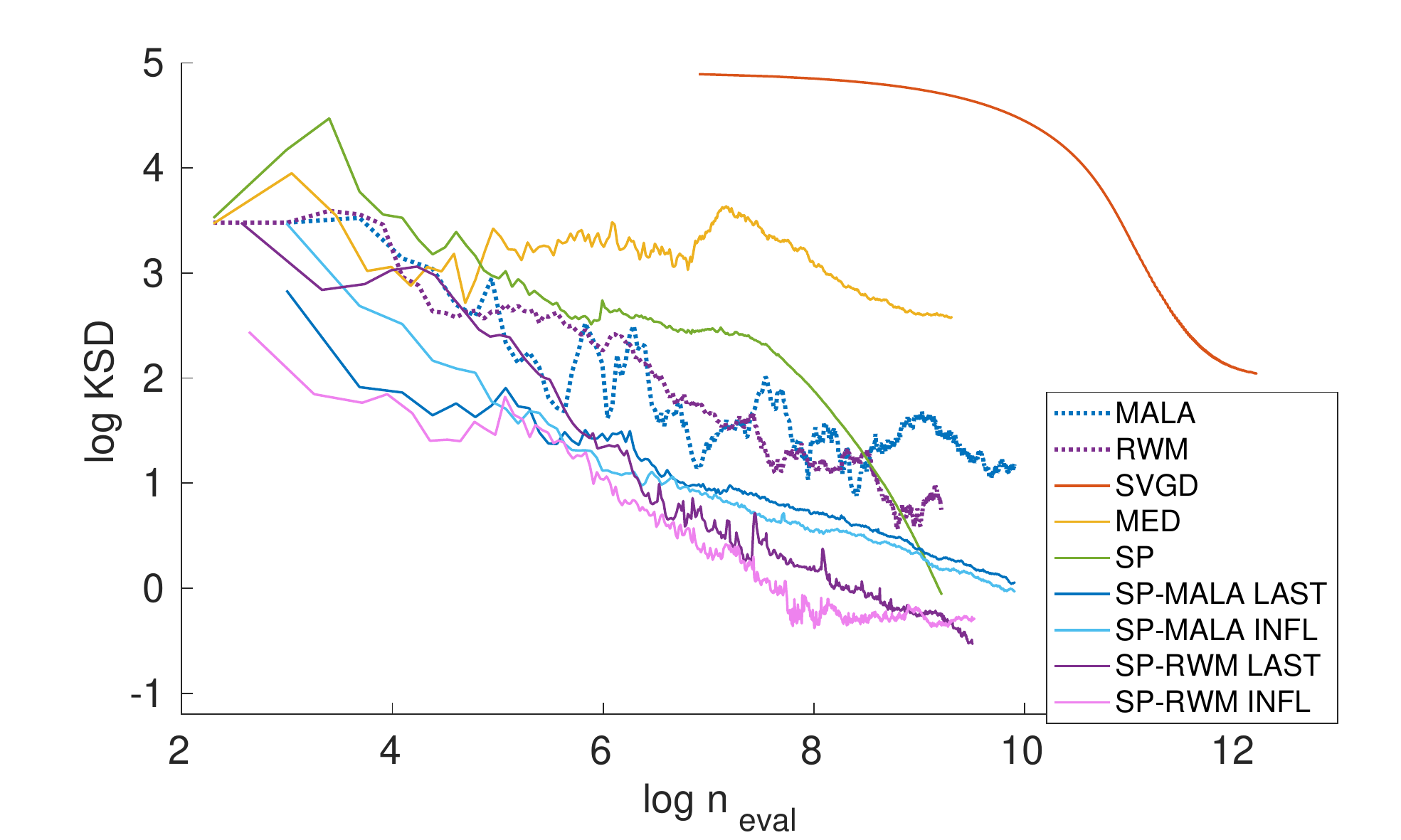}
\caption{$d = 4$}
\label{fig:ODE low dim}
\end{subfigure}
\begin{subfigure}[t]{0.49\textwidth}
\centering
\includegraphics[width = 1.0\textwidth]{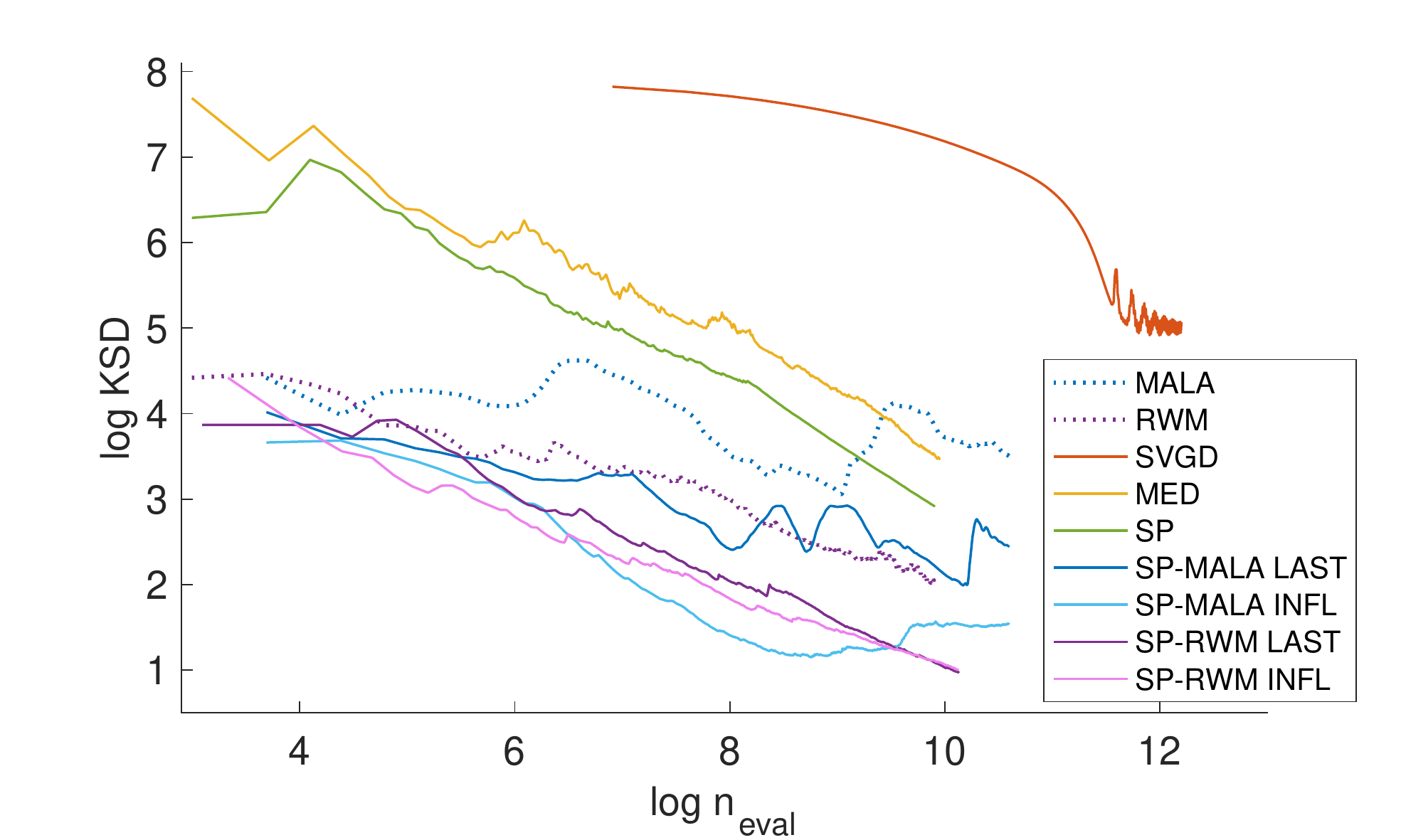}
\caption{$d = 10$}
\label{fig:ODE high dim}
\end{subfigure}
\cprotect\caption{ODE experiment, $d$-dimensional.
The new SP-MCMC method was compared against the original SP method of \cite{Chen2018SteinPoints}, as well as against standard MCMC, MED \citep{RoshanJoseph2015} and SVGD \citep{Liu2016SVGD}.
Each method produced an empirical measure $\frac{1}{n}\sum_{i=1}^n \delta_{x_i}$ whose distance to the target $P$ was quantified by the kernel Stein discrepancy (KSD).
The computational cost was quantified by the number $n_{\text{eval}}$ of times either $\tilde{p}$ or its gradient were evaluated.
}
\end{figure*}

For objectivity, the \emph{energy distance} $E_P$ \cite{Szekely2004,Baringhaus2004} was used to assess closeness of all empirical measures to the target.\footnote{The energy distance $E_P$ is equivalent to MMD based on the conditionally positive definite kernel $k(x,y) = - \|x - y\|_2$ \citep{Sejdinovic2013}.
It was computed using a high-quality empirical approximation of $P$ obtained from a large MCMC output.
}
SP-MCMC was implemented with $m_j = 5 \; \forall j$.
In addition to SP-MCMC, the methods SP, MED, SVGD and standard MCMC were also considered, with implementation described in Appendix \ref{ap: extra results}.
All methods produced a point set of size $n = 1000$.
The results, presented in Fig. \ref{fig:IGARCH}, are indexed by the computational cost of running each method, which is a count of the total number $n_{\text{eval}}$ of times either $\tilde{p}$ or $\nabla \log \tilde{p}$ were evaluated.
It can be seen that SP-MCMC offers improved performance over the original SP method for fixed computational cost, and in turn over both MED and SVGD in this experiment.
Typical point sets produced by each method are displayed in Fig. \ref{fig: IGARCH points}.
The performance of the pre-conditioned kernel on this task was investigated in Appendix \ref{subsubsec: preconditioner test}.

\subsection{System of Coupled ODEs} \label{subsec: ODE}

Our final example is more challenging and offers an opportunity to explore the limitations of SP-MCMC in higher dimensions.
The context is an indirectly observed ODE
\begin{eqnarray*}
y_i & = & g(u(t_i)) + \epsilon_i, \hspace{18pt} \epsilon_i \stackrel{\text{i.i.d.}}{\sim} \mathcal{N}(0,\sigma^2 \mathrm{I}) \\
\dot{u}(t) & = & f_\theta(t,u), \hspace{40pt} u(0) = u_0
\end{eqnarray*}
and, in particular, Bayesian inference for the parameter $\theta$ in the gradient field.
Here $y_i \in \mathbb{R}^p$, $u(t) \in \mathbb{R}^q$ and $\theta \in \mathbb{R}^d$ for $p,q,d \in \mathbb{N}$.
For our experiment, $f_\theta$ and $g$ comprised two instantiations of the Goodwin oscillator \citep{Goodwin1965a}, one low-dimensional with $(q,d) = (2,4)$ and one higher-dimensional with $(q,d) = (8,10)$.
In both cases $p = 2$, $\sigma = 0.1$ and 40 measurements were observed at uniformly-spaced time points in $[41,80]$.
The Goodwin oscillator does not permit a closed form solution, meaning that each evaluation of the likelihood function requires the numerical integration of the ODE at a non-negligible computational cost.
SP-MCMC was implemented with the \verb!INFL! criterion and $m_j = 10$ ($d = 4$), $m_j = 20$ ($d = 10$).
Full details of the ODE and settings for MED and SVGD are provided in Appendix \ref{subsubsec: Goodwin details}.

In this experiment, KSD was used to assess closeness of all empirical measures to the target.\footnote{The more challenging nature of this experiment meant accurate computation of the energy distance was precluded, due to the fact that a sufficiently high-quality empirical approximation of $P$ could not be obtained.}
Naturally, SP and SP-MCMC are favoured by this choice of assessment criterion, as these methods are designed to directly minimise KSD. 
Therefore our main focus here is on the comparison between SP and SP-MCMC.
All methods produced a point set of size $n = 1000$.
Results are shown in Fig. \ref{fig:ODE low dim} (low-dimensional) and Fig. \ref{fig:ODE high dim} (high-dimensional).
Note how the gain in performance of SP-MCMC over SP is more substantial when $d = 10$ compared to when $d = 4$, supporting our earlier intuition for the advantage of local optimisation using a Markov kernel.


\section{Theoretical Results}
\label{sec:theory}

Let $\Omega$ be a probability space on which the collection of random variables $Y_{j,l} : \Omega \rightarrow \mathcal{X}$ representing the $l^{\text{th}}$ state of the Markov chain run at the $j^{\text{th}}$ iteration of SP-MCMC are defined. 
Each of the three algorithms that we consider correspond to a different initialisation of these Markov chains and we use $\mathbb{E}$ to denote expectation over randomness in the $Y_{j,l}$. 
For example, the algorithm called \verb+LAST+ would set $Y_{j,1}(\omega) = x_{j-1}$. 
It is emphasised that the results of this section hold for \emph{any} choice of function \verb!crit! that takes values in $\mathcal{X}$.
As a stepping-stone toward our main result, we first extend the theoretical analysis of the original SP method to the case where the global search is replaced by a Monte Carlo search based on $m_i$ independent draws from $P$ at iteration $i$ of the SP method.
\begin{theorem}[i.i.d.\ SP-MCMC Convergence]
\label{thm:stein-consistent-main-text}
Suppose that the kernel $k_0$ satisfies $\int_{\mathcal{X}} k_0(x,\cdot) \mathrm{d}P(x) \equiv 0$ and $\mathbb{E}_{Z \sim P}[e^{\gamma k_0(Z,Z) }] < \infty$ for some $\gamma > 0$.
Let $(m_j)_{j = 1}^n \subset \mathbb{N}$ be a fixed sequence, and consider idealised Markov chains with $Y_{j,l} \stackrel{\text{\emph{i.i.d.}}}{\sim} P$ for all $1 \leq l \leq m_j$, $j \in \mathbb{N}$.
Let $\{x_i\}_{i=1}^n$ denote the output of SP-MCMC. 
Then, writing $a \wedge b = \min\{a,b\}$, $\exists \; C>0$ such that
\begin{align*}
\mathbb{E}\left[D_{\mathcal{K}_0, P}(\{x_i\}_{i=1}^n)^2\right] \leq \textstyle \frac{C}{n} \sum_{i=1}^n \frac{\log(n \wedge m_i) \wedge \sup_{x \in \mathcal{X}} k_0(x,x)}{n \wedge m_i}.  
\end{align*} 
\end{theorem}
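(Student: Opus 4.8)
The plan is to pass to the RKHS picture, write an exact recursion for the (un-normalised) squared KSD, and then control the per-step increment by a short argument combining the Stein identity with a maximal inequality.

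\textbf{Setup and recursion.} Working in the RKHS $\mathcal{K}_0$ with reproducing kernel $k_0$, I would set $g_j := \sum_{i=1}^j k_0(x_i,\cdot)\in\mathcal{K}_0$ and $e_j := \|g_j\|_{\mathcal{K}_0}^2 = j^2\,D_{\mathcal{K}_0,P}(\{x_i\}_{i=1}^j)^2$; the hypothesis $\int_{\mathcal X} k_0(x,\cdot)\,\mathrm dP(x)\equiv 0$ makes this identification exact. Since $\langle g_{j-1},k_0(y,\cdot)\rangle_{\mathcal{K}_0}=g_{j-1}(y)$ and $\|k_0(y,\cdot)\|_{\mathcal{K}_0}^2=k_0(y,y)$, and Step iii of SP-MCMC picks $x_j=Y_{j,l}$ minimising $\|g_{j-1}+k_0(Y_{j,l},\cdot)\|_{\mathcal{K}_0}^2$, expanding the square yields the exact identity
\begin{equation*}
e_j \;=\; e_{j-1} \;+\; \min_{1\le l\le m_j}\bigl(\,2\,g_{j-1}(Y_{j,l}) + k_0(Y_{j,l},Y_{j,l})\,\bigr).
\end{equation*}

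\textbf{The one-step bound (the crux).} Let $\mathcal{F}_{j-1}=\sigma(x_1,\dots,x_{j-1})$, let $\tilde m_j:=n\wedge m_j$, and note that the $Y_{j,l}\stackrel{\mathrm{iid}}{\sim}P$ are independent of $\mathcal{F}_{j-1}$. Restricting the minimum to the first $\tilde m_j$ candidates only increases it, and then I would apply the elementary decoupling inequality $\min_l(a_l+b_l)\le \min_l a_l+\max_l b_l$ with $a_l=2g_{j-1}(Y_{j,l})$ and $b_l=k_0(Y_{j,l},Y_{j,l})\ge 0$. For the cross term, the Stein identity gives $\mathbb{E}[g_{j-1}(Y_{j,l})\mid\mathcal{F}_{j-1}]=\langle g_{j-1},\int k_0(y,\cdot)\,\mathrm dP(y)\rangle=0$, so bounding the minimum by the average yields $\mathbb{E}[\min_{l\le\tilde m_j} g_{j-1}(Y_{j,l})\mid\mathcal{F}_{j-1}]\le 0$. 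For the self-kernel term, the standard maximal inequality under the sub-exponential assumption — $\exp(\gamma\max_l W_l)\le\sum_l\exp(\gamma W_l)$ followed by Jensen — gives $\mathbb{E}[\max_{l\le\tilde m_j}k_0(Y_{j,l},Y_{j,l})]\le \gamma^{-1}\log(\tilde m_j\,\mathbb{E}_{Z\sim P}[e^{\gamma k_0(Z,Z)}])$, and trivially this maximum is also $\le\sup_x k_0(x,x)$. Taking the smaller of the two bounds, and absorbing the additive constant $\gamma^{-1}\log\mathbb{E}[e^{\gamma k_0(Z,Z)}]$ (this uses $m_j\ge 2$, so that $\log\tilde m_j$ is bounded below), there is a constant $C$ depending only on $\gamma$ and $\mathbb{E}[e^{\gamma k_0(Z,Z)}]$ with
\begin{equation*}
\mathbb{E}[\,e_j\mid\mathcal{F}_{j-1}\,]\;\le\; e_{j-1} + C\bigl(\log(n\wedge m_j)\wedge\textstyle\sup_x k_0(x,x)\bigr).
\end{equation*}

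\textbf{Telescoping and conclusion.} Taking total expectations and summing over $j=2,\dots,n$ gives $\mathbb{E}[e_n]\le e_1 + C\sum_{i=2}^n(\log(n\wedge m_i)\wedge\sup_x k_0(x,x))$; dividing by $n^2$ and using $1/n\le 1/(n\wedge m_i)$ bounds the main term by $\tfrac{C}{n}\sum_{i=1}^n\frac{\log(n\wedge m_i)\wedge\sup_x k_0(x,x)}{n\wedge m_i}$, which is the claim. The leftover $e_1/n^2=k_0(x_1,x_1)/n^2$ is absorbed into $C$: it is dominated by the $i=1$ summand once $n$ is large, and the finitely many small $n$ are handled by enlarging the constant. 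The only genuinely delicate point is the one-step bound: one must (i) discard all but $\tilde m_j=n\wedge m_j$ of the samples, since keeping all $m_j$ only yields $\log m_j$, which is strictly worse than $\log(n\wedge m_j)$ when $m_j\gg n$, and (ii) resist conditioning on $\{k_0(Y_{j,l},Y_{j,l})\le\tau\}$ before treating the cross term — that route re-introduces a bias of order $\sqrt{e_{j-1}}$ and forces a bootstrap — which is precisely what the decoupling inequality $\min_l(a_l+b_l)\le\min_l a_l+\max_l b_l$ circumvents.
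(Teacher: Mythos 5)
Your proof is correct (granting $m_j\ge 2$ so that $\log(n\wedge m_j)$ is bounded below, an edge case that also silently affects the paper's own derivation of Theorem~1 from their Theorem~S.2), and it takes a genuinely different and considerably simpler route than the paper. The paper's proof passes through a general deterministic lemma (its Theorem~S.1, ``Generalized Stein Point Convergence'') in which the cross term $2\inf_{x\in\mathcal{Y}_j}g_{j-1}(x)$ is handled by Cauchy--Schwarz followed by an arithmetic--geometric-mean split $2\sqrt{a_{j-1}\|h_j\|^2}\le r_j\|h_j\|^2+a_{j-1}/r_j$; this produces a multiplicative recursion $a_j\le(1+1/r_j)a_{j-1}+\dots$ whose unrolling yields the $\exp(\sum 1/r_j)$ factor, and then $\|h_j\|$ is controlled via a truncation set $B_j$ and a bias--variance decomposition of the truncated kernel mean embedding. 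Your decoupling $\min_l(a_l+b_l)\le\min_l a_l+\max_l b_l$ lets you sidestep all of this: because the $Y_{j,l}$ are genuinely i.i.d.\ from $P$ and the kernel mean embedding vanishes, the cross-term $\mathbb{E}[\min_l g_{j-1}(Y_{j,l})\mid\mathcal{F}_{j-1}]$ is nonpositive without any Cauchy--Schwarz or truncation, and the diagonal term is handled by the standard sub-exponential maximal inequality. Your one-step bound $\mathbb{E}[e_j\mid\mathcal{F}_{j-1}]\le e_{j-1}+C(\log(n\wedge m_j)\wedge\sup_x k_0(x,x))$ therefore telescopes to $\mathbb{E}[D^2]\le \tfrac{C}{n^2}\sum_j(\log(n\wedge m_j)\wedge\sup_x k_0)$, which is in fact \emph{strictly tighter} than the stated bound whenever some $m_j<n$ (it decays like $1/n$ rather than $1/(n\wedge m_j)$, so it gives consistency for any fixed $m\ge 2$, which the paper's bound does not); you then deliberately loosen via $1/n\le 1/(n\wedge m_j)$ to match the theorem statement. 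What your approach gives up is generality: the Stein-identity step that kills the cross term in one line uses i.i.d.\ $P$-distributed proposals essentially, and it does not transfer to the Markov-chain setting of Theorem~2, where the proposals carry a $\rho^{|l-l'|}$-decaying bias. The paper's Theorem~S.1/S.2 scaffolding exists precisely so the same skeleton can be reused there with $V$-uniform ergodicity substituting for independence; your elementary argument is the better proof of Theorem~1 but would need the paper's machinery anyway for Theorem~2.
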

The constant $C$ depends on $k_0$ and $P$, and the proof in Appendix \ref{ap: prf sec 1} makes this dependence explicit.

It follows that SP-MCMC with independent sampling from $P$ is consistent whenever each $m_j$ grows with $n$.
When $m_j = m$ for all $j$ we obtain: 
\begin{eqnarray*}
\mathbb{E}\left[D_{\mathcal{K}_0, P}(\{x_i\}_{i=1}^n)^2\right] \leq \textstyle C \frac{\log(n \wedge m) \wedge \sup_{x \in \mathcal{X}} k_0(x,x)}{n \wedge m} ,
\end{eqnarray*}
and by choosing $m = n$, we recover the rate \eqref{eq: SP theory result} of the original SP algorithm which optimizes over all of $\mathcal{X}$ \cite{Chen2018SteinPoints}.
For bounded kernels, the result improves over the $O(1/n + 1/\sqrt{m})$ independent sampling kernel herding rate established in \citep[App.~B]{Lacoste-julien2015}.
Thm.~\ref{thm:stein-consistent-main-text} more generally accommodates unbounded kernels at the cost of a $\log(n \wedge m)$ factor.

The role of Thm. \ref{thm:stein-consistent-main-text} is limited to providing a stepping stone to Thm. \ref{thm: main theorem}, as it is not practical to obtain exact samples from $P$ in general.
To state our result in the general case, restrict attention to $\mathcal{X} = \mathbb{R}^d$, consider a function $V: \mathcal{X} \to [1,\infty)$ and define the associated operators
$\| f \|_V := \sup_{x \in \mathcal{X}}| f(x) |/V(x)$, $\| \mu \|_{V} := \sup_{f: \| f\|_V \leq 1} | \int f \mathrm{d}\mu |$ respectively on functions $f : \mathcal{X} \rightarrow \mathbb{R}$ and on signed measures $\mu$ on $\mathcal{X}$.
A Markov chain $(Y_i)_{i \in \mathbb{N}} \subset \mathcal{X}$ with $n$th step transition kernel $\mathrm{P}^n$ is called \emph{$V$-uniformly ergodic} \citep[][Chap. 16]{Meyn2012} if $\exists R \in [0,\infty), \rho \in (0,1)$ such that $\| \mathrm{P}^n(y,\cdot)-P \|_{V} \leq R V(y) \rho^n$ for all initial states $y \in \mathcal{X}$ and all $n \in \mathbb{N}$.
The proof of the following is provided in Appendix \ref{ap: prf sec 2}:

\begin{theorem}[SP-MCMC Convergence]
\label{thm: main theorem}
Suppose $\int_{\mathcal{X}} k_0(x,\cdot) \mathrm{d}P(x) \equiv 0$ with $\E_{Z \sim P}[e^{\gamma k_0(Z, Z)}] < \infty$ for $\gamma > 0$.
For a sequence $(m_j)_{j = 1}^n \subset \mathbb{N}$, let $\{x_i\}_{i=1}^n$ denote the output of SP-MCMC, based on time-homogeneous reversible Markov chains $(Y_{j,l})_{l = 1}^{m_j}$, $j \in \mathbb{N}$, generated using the same $V$-uniformly ergodic transition kernel.
Define 
$
V_\pm(s) := \sup_{x : k_0(x,x) \leq s^2} k_0(x,x)^{1/2}V(x)^{\pm 1}
$
and $S_i = \sqrt{2\log(n \wedge m_i)/\gamma}$.
Then $\exists \; C>0$ such that
\begin{align*}
\mathbb{E}\left[D_{\mathcal{K}_0, P}(\{x_i\}_{i=1}^n)^2\right] \leq \textstyle \frac{C}{n} \sum_{i=1}^n \frac{S_i^2}{n} + \frac{ V_+(S_i) V_-(S_i)}{m_i} .
\end{align*}
\end{theorem}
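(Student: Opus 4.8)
\emph{Proof strategy.} The plan is to re-use the machinery behind Theorem~\ref{thm:stein-consistent-main-text}, isolate the single place where independence of the $Y_{j,l}$ was exploited, and recover that estimate for a $V$-uniformly ergodic chain by a truncation-plus-change-of-measure argument. Write $\mu_{j-1} := \frac{1}{j-1}\sum_{i=1}^{j-1} k_0(\cdot,x_i) \in \mathcal{K}_0$, so that $D_{\mathcal{K}_0,P}(\{x_i\}_{i=1}^{j-1}) = \|\mu_{j-1}\|_{\mathcal{K}_0}$ by the Stein identity $\int k_0(x,\cdot)\mathrm{d}P\equiv 0$. Since step (iii) of SP-MCMC picks $x_j$ from the chain states to minimise KSD, the reproducing property yields the exact identity
\begin{align*}
j^2 D_{\mathcal{K}_0,P}(\{x_i\}_{i=1}^j)^2 = (j-1)^2 D_{\mathcal{K}_0,P}(\{x_i\}_{i=1}^{j-1})^2 + \min_{1\le l\le m_j} h_{j-1}(Y_{j,l}),
\end{align*}
where $h_{j-1}(y) := 2(j-1)\langle \mu_{j-1}, k_0(\cdot,y)\rangle_{\mathcal{K}_0} + k_0(y,y)$. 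Let $\mathcal{F}_{j-1}$ be the $\sigma$-algebra generated by $x_1,\dots,x_{j-1}$ and the index $i^*$ selected at iteration $j$, so that $x_{i^*}$ and $h_{j-1}$ are $\mathcal{F}_{j-1}$-measurable. Taking conditional expectations, summing over $j$ and unrolling the recursion reduces the theorem — exactly as for Theorem~\ref{thm:stein-consistent-main-text} — to bounding $\mathbb{E}[\min_{1\le l\le m_j} h_{j-1}(Y_{j,l}) \mid \mathcal{F}_{j-1}]$, the only change being that $(Y_{j,l})_l$ is now the $V$-uniformly ergodic chain started at $x_{i^*}$ rather than $m_j$ i.i.d.\ draws from $P$.

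\emph{Truncation.} Fix the good set $G_i := \{y\in\mathcal{X} : k_0(y,y)\le S_i^2\}$, $S_i = \sqrt{2\log(n\wedge m_i)/\gamma}$. Markov's inequality applied to $e^{\gamma k_0(Z,Z)}$ gives $P(G_i^c) \le e^{-\gamma S_i^2}\,\mathbb{E}_{Z\sim P}[e^{\gamma k_0(Z,Z)}] = \mathbb{E}_{Z\sim P}[e^{\gamma k_0(Z,Z)}]/(n\wedge m_i)^2$, which, paired with a crude polynomial-in-$j$ second-moment bound on the unbounded part of $h_{j-1}$, yields the $S_i^2/n$ contribution after summation. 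On $G_i$, Cauchy--Schwarz in $\mathcal{K}_0$ gives $|h_{j-1}(y)| \le 2(j-1)\|\mu_{j-1}\|_{\mathcal{K}_0}\sqrt{k_0(y,y)} + k_0(y,y)$, so the truncated function $\tilde h := h_{j-1}\mathbf{1}_{G_i}$ (capped suitably above and below) satisfies $\|\tilde h\|_V \le \big(2(j-1)\|\mu_{j-1}\|_{\mathcal{K}_0} + S_i\big)\,V_-(S_i)$, since $\sqrt{k_0(y,y)}/V(y)\le V_-(S_i)$ on $G_i$.

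\emph{Change of measure via $V$-uniform ergodicity.} It remains to replace the law of $(Y_{j,l})_l$ by the i.i.d.\ product inside $\mathbb{E}[\min_l \tilde h(Y_{j,l})\mid\mathcal{F}_{j-1}]$, after which the i.i.d.\ estimate from the proof of Theorem~\ref{thm:stein-consistent-main-text} applies. I would do this one coordinate at a time, paying at each swap $\|\tilde h\|_V \cdot \|\mathrm{P}^l(x_{i^*},\cdot) - P\|_V \le \|\tilde h\|_V\, R\, V(x_{i^*})\,\rho^l$, where time-homogeneity and reversibility of the kernel are used so that the partially swapped process remains of the controllable type. Summing the geometric factors, $\sum_{l\ge 1}\rho^l = \rho/(1-\rho) = O(1)$, and dividing by $m_j$ (the minimum spreads the error over the $m_j$ available times) gives a correction of order $\|\tilde h\|_V\, V(x_{i^*})/m_j$; bounding $\mathbb{E}[V(x_{i^*})\sqrt{k_0(x_{i^*},x_{i^*})}\mid\mathcal{F}_{j-1}]$-type quantities against $V_+(S_i)$ (treating the event $x_{i^*}\notin G_i$ separately, once more via the exponential moment) turns the product $\|\tilde h\|_V V(x_{i^*})$ into $V_+(S_i)V_-(S_i)$ plus lower-order pieces, the second term of the bound. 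Feeding the resulting per-step estimate into the unrolled recursion and simplifying gives the claim.

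\emph{Main obstacle.} The genuinely new difficulty, absent from Theorem~\ref{thm:stein-consistent-main-text}, is the interaction of the \emph{minimum} over the $m_j$ chain states with both the \emph{dependence} among those states and the \emph{random, potentially $V$-large} initialisation $x_{i^*}$. A union bound over $l$ would destroy the $1/m_j$ decay, while a single-coordinate bound discards the benefit of the minimum; making the coordinate-swap comparison compatible with the Markov dependence, and simultaneously controlling $V(x_{i^*})$ for a state produced by earlier iterations (which is precisely what forces the appearance of $V_+$), is where the real work lies.
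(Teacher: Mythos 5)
Your high-level plan — re-use the Theorem~\ref{thm:stein-consistent-main-text} recursion, truncate to a ball $\{k_0(y,y)\le S_i^2\}$ using the exponential tail, and recover the $1/m_j$ behaviour from $V$-uniform ergodicity — matches the paper's strategy, and the algebraic identity $a_j = a_{j-1} + \min_{l} h_{j-1}(Y_{j,l})$ with $a_j := j^2 D_{\mathcal{K}_0,P}(\{x_i\}_{i=1}^j)^2$ is exactly the starting point the paper uses.

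Where your sketch genuinely diverges, and where it would fail, is the ``change of measure'' step. The paper never analyses $\mathbb{E}[\min_l h_{j-1}(Y_{j,l})]$ by swapping the Markov law for the i.i.d.\ product law. Instead (cf.\ Theorem~\ref{thm:stein-point-convergence}), the minimum over the truncated chain states $\mathcal{Y}_j$ is upper-bounded by the value at an explicit \emph{convex combination} $h_j(\cdot) := \frac{1}{m_j}\sum_{l=1}^{m_j} k_0(Y_{j,l},\cdot)\mathbb{I}[Y_{j,l}\in B_j]$ of feature vectors: $2\inf_{x\in\mathcal{Y}_j}\langle f_{j-1}, k_0(x,\cdot)\rangle \le 2\langle f_{j-1}, h_j\rangle \le r_j\|h_j\|_{\mathcal{K}_0}^2 + a_{j-1}/r_j$. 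This reduces the whole problem to bounding $\mathbb{E}[\|h_j - k_j^-\|_{\mathcal{K}_0}^2]$, a \emph{quadratic form} in the chain states, which can be expanded into a double sum over $(l,l')$. The $1/m_j$ factor then comes from the $1/m_j^2$ prefactor of this double sum combined with the geometric decay $\rho^{|l'-l|}$ that $V$-uniform ergodicity supplies term-by-term — not from any division-by-$m_j$ associated with the minimum. Your proposal to perform a coordinate-by-coordinate swap on the nonlinear functional $\min_l\tilde h(Y_{j,l})$ and then claim ``the minimum spreads the error over the $m_j$ available times'' is not a valid step: telescoping the swap costs, even with the geometric factors, gives an $O(1)$ total error that does not decay in $m_j$, so the i.i.d.\ estimate you want to invoke afterwards would not recover the stated rate. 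You flag this as the ``main obstacle,'' correctly, but the resolution you gesture towards does not close it; the actual resolution is to abandon the direct analysis of the minimum and pass to the convex combination.

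A secondary discrepancy: you attribute the appearance of $V_+$ to the need to control $V(x_{i^*})$ for the random initial state. In the paper's argument $V_+(S_j)=\sup_{y\in B_j}V(y)k_0(y,y)^{1/2}$ arises from the conditioning chain state $Y_{j,l'}$ (itself constrained to the truncation ball $B_j$ because $h_y$ vanishes outside it), not from $x_{i^*}$; indeed the final bound is stated independently of the initial state. The dependence on the past that you worry about — through $\|\mu_{j-1}\|_{\mathcal{K}_0}$ entering $\|\tilde h\|_V$ — is also handled in the paper not by bounding $V(x_{i^*})$ but by the AM--GM trick, which trades the cross-term $\langle f_{j-1},h_j\rangle$ for $a_{j-1}/r_j$ and feeds it into the recursion. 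Without both of these structural moves (convex combination in lieu of the min, and AM--GM in lieu of controlling $\|\mu_{j-1}\|_{\mathcal{K}_0}$ directly), your sketch does not go through.
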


We give an example of verifying the preconditions of Thm. \ref{thm: main theorem} for MALA.
Let $\pset$ denote the set of distantly dissipative\footnote{The target $P$ is said to be \emph{distantly dissipative} \citep{Eberle2016,Gorham2016} if $\kappa_0 \stackrel{\Delta}{=} \lim \inf_{r \rightarrow \infty} \kappa(r) > 0$ for
$
\kappa(r) = \textstyle \inf \left\{ -2 \frac{\langle \nabla \log [\tilde{p}(x) - \tilde{p}(y)] , x - y \rangle}{\|x-y\|_2^2} : \|x-y\|_2 = r \right\}.
$} distributions with $\grad \log p$ Lipschitz on $\mathcal{X} = \mathbb{R}^d$.
Let $C_b^{(r,r)}$ be the set of functions $k : \mathbb{R}^d \times \mathbb{R}^d \rightarrow \mathbb{R}$ with $(x,y) \mapsto \nabla_x^l \nabla_y^l k(x,y)$ continuous and uniformly bounded for $l \in \{0,\dots,r\}$.
Let $q(x,y)$ be a density for the proposal distribution of MALA, and let $\alpha(x,y)$ denote the acceptance probability for moving from $x$ to $y$, given that $y$ has been proposed.
Let $A(x) = \{y \in \mathcal{X} : \alpha(x,y) = 1 \}$ denote the region where proposals are always accepted and let $R(x) = \mathcal{X} \setminus A(x)$.
Let $I(x) := \{y : \|y\|_2 \leq \|x\|_2 \}$.
MALA is said to be \emph{inwardly convergent} \citep[][Sec.~4]{Roberts1996} if
\begin{align}
\lim_{\|x\|_2 \rightarrow \infty} \int_{A(x) \Delta I(x)} q(x,y) \mathrm{d}y = 0 \label{asm: inw conv}
\end{align}
where $A\Delta B$ denotes the symmetric set difference $(A \cup B) \setminus (A \cap B)$. 
The proof of the following is provided in Appendix \ref{ap: prf mala consistency}:

\begin{theorem}[SP-MALA Convergence]
\label{thm: SPMCMC consistent}
Suppose $k_0$ has the form \cref{eq:stein_kernel}, based on a kernel $k \in C_b^{(1,1)}$ and a target $P\in\pset$ such that $\int_{\mathcal{X}} k_0(x,\cdot) \mathrm{d}P(x) \equiv 0$. 
Let $(m_j)_{j = 1}^n \subset \mathbb{N}$ be a fixed sequence and let $\{x_i\}_{i=1}^n$ denote the output of SP-MCMC, based on Markov chains $(Y_{j,l})_{l = 1}^{m_j}$, $j \in \mathbb{N}$, generated using MALA transition kernel with step size $h$ sufficiently small.
Assume $P$ is such that MALA is inwardly convergent.
Then MALA is $V$-uniformly ergodic for $V(x) = 1 + \|x\|_2$ and $\exists \; C>0$ such that
\begin{eqnarray*}
\mathbb{E}\left[D_{\mathcal{K}_0, P}(\{x_i\}_{i=1}^n)^2\right] \leq \textstyle \frac{C}{n} \sum_{i=1}^n \frac{\log(n \wedge m_i)}{n \wedge m_i} .
\end{eqnarray*}
\end{theorem}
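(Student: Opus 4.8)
The plan is to verify the two preconditions of Theorem~\ref{thm: main theorem} for the MALA kernel with $V(x) = 1 + \|x\|_2$, and then to bound the quantities $S_i$, $V_+(S_i)$, $V_-(S_i)$ that appear in the conclusion of that theorem, so that the abstract rate collapses to the stated $\frac{C}{n}\sum_i \frac{\log(n\wedge m_i)}{n\wedge m_i}$. First I would establish $V$-uniform ergodicity of MALA. Under the hypotheses $P \in \pset$ (distantly dissipative, $\nabla\log p$ Lipschitz) together with inward convergence \eqref{asm: inw conv} and $h$ sufficiently small, the results of \cite{Roberts1996} (see also \cite{Eberle2016,Gorham2016} for the role of distant dissipativity in establishing the drift condition) give that MALA is geometrically ergodic with a Foster--Lyapunov drift function of the form $V(x) = 1 + \|x\|_2$; one checks the associated minorization condition holds on sublevel sets of $V$, which are compact, because the MALA proposal density $q(x,\cdot)$ has a positive Gaussian-type lower bound uniformly on compacta and the acceptance probability is bounded below there. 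By \cite[Chap.~16]{Meyn2012}, the drift-plus-minorization pair yields $\|\mathrm{P}^n(y,\cdot) - P\|_V \leq R\, V(y)\rho^n$, i.e.\ $V$-uniform ergodicity with this explicit $V$. I would also note here that $\int_{\mathcal X} k_0(x,\cdot)\,\mathrm dP(x) \equiv 0$ is assumed, and that $\E_{Z\sim P}[e^{\gamma k_0(Z,Z)}] < \infty$ for some $\gamma > 0$ follows because, for $k \in C_b^{(1,1)}$, the kernel $k_0$ from \eqref{eq:stein_kernel} satisfies $k_0(x,x) \le c_1 + c_2 \|\nabla\log\tilde p(x)\|_2^2 \le c_3 (1 + \|x\|_2^2)$ by the Lipschitz property of $\nabla\log p$, and $P$, being distantly dissipative, is sub-Gaussian so has finite exponential moments of $\|x\|_2^2$ up to some threshold; rescaling $\gamma$ if necessary closes this. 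Thus all the preconditions of Theorem~\ref{thm: main theorem} hold.

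Next I would control the growth functions. From $k_0(x,x) \le c_3(1 + \|x\|_2^2)$ we get that $\{x : k_0(x,x) \le s^2\} \subseteq \{x : \|x\|_2^2 \le s^2/c_3\}$ (for $s$ large), and from the complementary direction — which requires a mild lower bound, namely that $k_0(x,x)$ does not decay to $0$, ensured since $k_0(x,x) \ge \nabla_x\cdot\nabla_{x'}k(x,x')|_{x'=x} > 0$ is bounded below away from zero by continuity and the form \eqref{eq:stein_kernel} when $\nabla\log\tilde p$ is bounded on compacta, or more robustly by a direct estimate — one obtains that on this set $k_0(x,x)^{1/2} \le s$ and $V(x)^{\pm 1} = (1+\|x\|_2)^{\pm 1} \le 1 + s/\sqrt{c_3}$ (for the $+$ exponent) and $\le 1$ (for the $-$ exponent). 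Hence $V_+(s) \lesssim s^2$ and $V_-(s) \lesssim s$, so $V_+(S_i) V_-(S_i) \lesssim S_i^3$. Since $S_i = \sqrt{2\log(n\wedge m_i)/\gamma}$, we have $S_i^2 \asymp \log(n\wedge m_i)$ and $S_i^3 \asymp (\log(n\wedge m_i))^{3/2}$. Plugging into the bound of Theorem~\ref{thm: main theorem} gives
\begin{align*}
\mathbb{E}\left[D_{\mathcal{K}_0, P}(\{x_i\}_{i=1}^n)^2\right]
\;\le\; \frac{C'}{n}\sum_{i=1}^n \frac{\log(n\wedge m_i)}{n} + \frac{(\log(n\wedge m_i))^{3/2}}{m_i}.
\end{align*}
Since $n \wedge m_i \le n$ and $n \wedge m_i \le m_i$, and because a cruder but cleaner estimate $V_+(s)V_-(s) \lesssim s^2$ is in fact available — tracking more carefully, $k_0(x,x)^{1/2} V(x)$ and $k_0(x,x)^{1/2} V(x)^{-1}$ multiply to $k_0(x,x) \le s^2$ on the sublevel set, so directly $V_+(s) V_-(s) \le \sup_{k_0(x,x)\le s^2} k_0(x,x) \le s^2$ — the second term is actually $\lesssim S_i^2/m_i \asymp \log(n\wedge m_i)/m_i \le \log(n\wedge m_i)/(n\wedge m_i)$, and likewise the first term is $\le \log(n\wedge m_i)/(n\wedge m_i)$. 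Combining the two terms absorbs the constants and yields the claimed
\begin{align*}
\mathbb{E}\left[D_{\mathcal{K}_0, P}(\{x_i\}_{i=1}^n)^2\right] \;\le\; \frac{C}{n}\sum_{i=1}^n \frac{\log(n\wedge m_i)}{n\wedge m_i}.
\end{align*}

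I expect the main obstacle to be the first step: carefully assembling the $V$-uniform ergodicity of MALA from the literature. The delicate points are that geometric ergodicity of MALA is known to be fragile (it can fail without inward convergence, which is why that hypothesis is imposed), that one must produce a drift condition with the \emph{specific} Lyapunov function $V(x) = 1+\|x\|_2$ rather than some abstract $V$ — this is where distant dissipativity and smallness of $h$ enter, controlling the drift of the accepted proposal toward the origin — and that $V$-uniform ergodicity in the sense of \cite{Meyn2012} (uniform-in-$y$ geometric rate modulo the factor $V(y)$) is slightly stronger than bare geometric ergodicity and must be extracted from the drift-and-minorization pair via the standard theorems. The remaining steps — the exponential moment bound on $k_0(Z,Z)$ and the polynomial control of $V_\pm$ — are routine given the quadratic growth bound $k_0(x,x) \lesssim 1 + \|x\|_2^2$ that follows immediately from $k \in C_b^{(1,1)}$ and the Lipschitz gradient assumption.
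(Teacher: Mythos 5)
Your overall architecture matches the paper's: verify the hypotheses of Theorem~\ref{thm: main theorem} for MALA with $V(x)=1+\|x\|_2$ (drift from distant dissipativity and small $h$, transfer to MALA via inward convergence in the style of Roberts--Tweedie, exponential moment of $k_0(Z,Z)$ from the quadratic growth bound plus sub-Gaussianity of $P$), and then collapse the abstract bound by controlling $V_\pm(S_i)$. However, the decisive quantitative step --- showing $V_+(S_i)V_-(S_i)\lesssim S_i^2$ --- is flawed in your write-up, in two ways. First, the containment $\{x: k_0(x,x)\le s^2\}\subseteq\{x:\|x\|_2\lesssim s\}$ does \emph{not} follow from the upper bound $k_0(x,x)\le c_3(1+\|x\|_2^2)$; an upper bound makes the sublevel set larger, not smaller. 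What is needed is a \emph{lower} bound $k_0(x,x)\gtrsim\|x\|_2^2$ for large $\|x\|_2$, and ``$k_0(x,x)$ bounded away from zero'' is not sufficient for this. The paper obtains exactly this growth from distant dissipativity, which forces $1+\|x\|_2\le\frac{8}{\kappa_0}\|\nabla\log p(x)\|_2$ outside a ball, whence $V(x)\lesssim 1+k_0(x,x)^{1/2}$ on all of $\mathcal{X}$; this is a genuine use of $P\in\pset$ that your sketch skips. Second, your ``cleaner estimate'' $V_+(s)V_-(s)\le\sup_{k_0(x,x)\le s^2}k_0(x,x)\le s^2$ is a logical error: $V_+(s)V_-(s)$ is a product of two separate suprema, and a product of suprema is \emph{not} bounded by the supremum of the product (the inequality runs the other way). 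Without that step, your own valid bounds give only $V_+(S_i)V_-(S_i)\lesssim S_i^3$, hence an extra $(\log(n\wedge m_i))^{1/2}$ factor, which does not prove the stated rate. The correct repair (the paper's) is to bound $V_-(s)$ by a \emph{constant}, not by $s$: since $k_0(x,x)^{1/2}\lesssim 1+\|x\|_2=V(x)$ by your quadratic upper bound, $\sup_x k_0(x,x)^{1/2}V(x)^{-1}\le C_0$ uniformly, while $V_+(s)\le C_0(s+s^2)$ from $V(x)\lesssim 1+k_0(x,x)^{1/2}$; then $V_+(S_i)V_-(S_i)\lesssim S_i^2\asymp\log(n\wedge m_i)$ and the claimed bound follows.

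A secondary remark: the ergodicity step is only gestured at in your proposal. The paper does concrete work here --- it first proves a drift estimate for the ULA proposal, $\|x+\tfrac{h}{2}\nabla\log p(x)\|_2^2\le(1-\tfrac{\kappa h}{2})\|x\|_2^2$ for $\|x\|_2$ large and $h<h_0$ (using distant dissipativity and the Lipschitz score), deduces drift conditions for several Lyapunov functions including $1+\|x\|_2^2$ (and then $1+\|x\|_2$ via Lemma~15.2.9 of Meyn--Tweedie and an equivalence lemma), and only then transfers the drift to MALA using the inward-convergence assumption to kill the rejection-region term, following Theorem~4.1 of Roberts--Tweedie. Citing Roberts--Tweedie for ``geometric ergodicity with $V=1+\|x\|_2$'' without these estimates leaves the specific drift function unjustified, and the specific $V$ matters because it must be comparable to $1+k_0(x,x)^{1/2}$ for the rate calculation above.
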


Our final result, proved in Appendix \ref{ap: prf sec 3}, establishes that the pre-conditioner kernel proposed in Sec. \ref{subsec: precondition} can control weak congergence to $P$ when the pre-conditionner $\Lambda$ is symmetric positive definite (denoted $\Lambda \psdgt 0$).
It is a generalisation of Thm.~8 of \citet{Gorham2017}, who treated the special case of $\Lambda = I$:

\begin{theorem}[Pre-conditioned IMQ KSD Controls Convergence]
\label{thm:pre-conditioning}
Suppose $k_0$ is a Stein kernel \cref{eq:stein_kernel} for a target $P\in\pset$ and a pre-conditioned IMQ base kernel \cref{eq: IMQ kernel} with $\beta \in (-1,0)$ and $\Lambda \psdgt 0$. 
If $D_{\mathcal{K}_0,P}(\{x_i\}_{i=1}^n) \rightarrow 0$ then $\frac{1}{n} \sum_{i=1}^n \delta_{x_i}$ converges weakly to $P$.
\end{theorem}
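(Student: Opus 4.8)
The plan is to reduce the claim to the unit-length-scale case $\Lambda = I$, which is exactly Thm.~8 of \cite{Gorham2017}, by a global linear change of variables. Set $T := \Lambda^{-\frac12}$ (well-defined and invertible since $\Lambda \psdgt 0$), let $Q := T_\# P$ be the pushforward of $P$ under $T$, and let $\tilde k(u,u') := (1 + \|u - u'\|_2^2)^\beta$ be the IMQ kernel of length-scale one. I would first collect three elementary facts. (i) Writing $x = \Lambda^{\frac12} u$, the score of $Q$ is $\grad_u \log q(u) = \Lambda^{\frac12}\,(\grad \log p)(\Lambda^{\frac12}u)$; hence it is Lipschitz (with constant multiplied by the operator norm of $\Lambda$), and since $\langle \grad \log q(u) - \grad \log q(u'),\, u - u'\rangle = \langle (\grad\log p)(x) - (\grad\log p)(x'),\, x - x'\rangle$ while $\lambda_{\min}^{1/2}\|u-u'\|_2 \le \|x-x'\|_2 \le \lambda_{\max}^{1/2}\|u-u'\|_2$ ($\lambda_{\min},\lambda_{\max}$ the extreme eigenvalues of $\Lambda$), distant dissipativity of $P$ forces distant dissipativity of $Q$; thus $Q \in \pset$. (ii) The pre-conditioned IMQ kernel \cref{eq: IMQ kernel} is precisely $k(x,x') = \tilde k(Tx, Tx')$, so the base RKHS satisfies $\mathcal{K} = \{f \circ T : f \in \tilde{\mathcal K}\}$ with $\|f\circ T\|_k = \|f\|_{\tilde k}$. (iii) The Stein identity $\int \tilde k_0(u,\cdot)\,\mathrm{d}Q(u) \equiv 0$ for the Stein kernel $\tilde k_0$ built from $\tilde k$ and $Q$ transfers from the assumed identity for $P$ under the same change of variables.

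The key step is to show that, up to constants depending only on the spectrum of $\Lambda$, the pre-conditioned KSD for $P$ at $\{x_i\}_{i=1}^n$ is equivalent to the standard IMQ KSD for $Q$ at the transformed points $\{Tx_i\}_{i=1}^n$. For $g \in B(\mathcal{K}^d)$ write $g = \tilde f \circ T$ with $\tilde f \in B(\tilde{\mathcal K}^d)$ and define the pushforward vector field $\tilde g := T\, g(T^{-1}\cdot) = \Lambda^{-\frac12}\tilde f$. A direct computation shows the Langevin operators are intertwined, $\mathcal{A}_Q \tilde g = (\mathcal{A}_P g)\circ T^{-1}$: the divergence term of $\mathcal{A}_P g$ acquires a factor $\Lambda^{-1}$ under the change of variables which is exactly cancelled by the matrix $T$ in the definition of $\tilde g$, while the drift term matches by the score identity in (i). Hence $\frac1n\sum_{i=1}^n (\mathcal{A}_P g)(x_i) = \frac1n\sum_{i=1}^n (\mathcal{A}_Q \tilde g)(Tx_i)$. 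As $g$ ranges over $B(\mathcal{K}^d)$, the field $\tilde g = \Lambda^{-\frac12}\tilde f$ ranges over an ellipsoid in $\tilde{\mathcal K}^d$ sandwiched as $\lambda_{\max}^{-1/2} B(\tilde{\mathcal K}^d) \subseteq \{\tilde g\} \subseteq \lambda_{\min}^{-1/2}B(\tilde{\mathcal K}^d)$. Taking suprema over these sets gives
\[
\lambda_{\max}^{-1/2}\, D_{\tilde{\mathcal K}_0, Q}(\{Tx_i\}_{i=1}^n) \;\le\; D_{\mathcal{K}_0, P}(\{x_i\}_{i=1}^n) \;\le\; \lambda_{\min}^{-1/2}\, D_{\tilde{\mathcal K}_0, Q}(\{Tx_i\}_{i=1}^n).
\]

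Given this equivalence the conclusion follows quickly. If $D_{\mathcal{K}_0,P}(\{x_i\}_{i=1}^n) \to 0$ then $D_{\tilde{\mathcal K}_0, Q}(\{Tx_i\}_{i=1}^n) \to 0$; since $Q \in \pset$ and $\tilde k$ is the $\Lambda = I$ case of \cref{eq: IMQ kernel}, Thm.~8 of \cite{Gorham2017} yields that $\frac1n\sum_{i=1}^n \delta_{Tx_i}$ converges weakly to $Q$. Applying the continuous map $T^{-1} = \Lambda^{\frac12}$ (weak convergence is preserved by pushforward under a continuous map) and using $(T^{-1})_\# Q = (T^{-1}\circ T)_\# P = P$ gives that $\frac1n\sum_{i=1}^n \delta_{x_i} = (T^{-1})_\#\big(\frac1n\sum_{i=1}^n \delta_{Tx_i}\big)$ converges weakly to $P$, as required.

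The main obstacle is the middle paragraph: carefully verifying how the Langevin--Stein construction behaves under the linear map. The Langevin operator is \emph{not} invariant (its divergence term rescales), so one must pair the pushforward of vector fields with the change of variables so that the spurious $\Lambda^{\pm 1}$ factors cancel, and then track that the image of the RKHS unit ball under this pushforward is an ellipsoid rather than a ball --- which is exactly what produces the harmless $\lambda_{\min},\lambda_{\max}$ constants and where $\Lambda \psdgt 0$ (rather than merely $\Lambda \psdge 0$) is used.
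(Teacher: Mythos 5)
Your proposal is correct and follows essentially the same route as the paper: reduce to the unit pre-conditioner case via the global linear change of variables $x \mapsto \Lambda^{-\frac12}x$, check that the transformed target remains in $\pset$ (Lipschitz score and distant dissipativity are preserved, with constants governed by the eigenvalues of $\Lambda$), and then invoke Thm.~8 of \cite{Gorham2017} together with the fact that weak convergence is preserved under the continuous inverse map. The only difference is one of explicitness: you spell out the intertwining of the Langevin operators and the two-sided bound $\lambda_{\max}^{-1/2} D_{\tilde{\mathcal{K}}_0,Q} \leq D_{\mathcal{K}_0,P} \leq \lambda_{\min}^{-1/2} D_{\tilde{\mathcal{K}}_0,Q}$, a step the paper's Lemma \ref{lem:preconditioned-kernels} leaves implicit in its ``global change of coordinates'' remark.
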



\section{Conclusion}
\label{sec:conclusion}
This paper proposed fundamental improvements to the SP method of \cite{Chen2018SteinPoints}, establishing, in particular, that the global search used to select each point can be replaced with a finite-length sample path from an MCMC method.
The convergence of the proposed SP-MCMC method was established, with an explicit bound provided on the KSD in terms of the $V$-uniform ergodicity of the Markov transition kernel.

Potential extensions to our SP-MCMC method include the use of fast approximate Markov kernels for $P$ (such as the unadjusted Langevin algorithm; see Appendix \ref{ap: prf mala consistency}), fast approximations to KSD \citep{Jitkrittum2017,Huggins2018}, exploitation of conditional independence structure in $P$ \citep{Wang2018,Zhuo2018} and extension to a general Riemannian manifold $\mathcal{X}$ \citep{Liu2018a,Barp2018RS}. 
One could also attempt to use our MCMC optimization approach to accelerate related algorithms such as kernel herding \cite{Chen2010,Bach2012,Lacoste-julien2015}. 
Other recent approaches to quantisation in the Bayesian context include \cite{Futami2018,Hu2018,Frogner2018,Zhang2018,Chen2018,Li2019}, and an assessment of the relative performance of these methods would be of interest.
However, we note that these approaches are not accompanied by the same level of theoretical guarantees that we have established.


\section*{Acknowledgements}

The authors are grateful to the reviewers for their critical feedback on the manuscript.
WYC was supported by the Australian Research Council Centre of Excellence for Mathematics and Statistical Frontiers.
AB was supported by a Roth Scholarship from the Department of Mathematics at Imperial College London, UK.
WYC, AB, FXB, MG and CJO were supported by the Lloyd's Register Foundation programme on data-centric engineering at the Alan Turing Institute, UK.
MG was supported by the EPSRC grants [EEP/P020720/1, EP/R018413/1, EP/R034710/1, EP/R004889/1] and a Royal Academy of Engineering Research Chair.




\newpage 
\onecolumn
\appendix

\setcounter{table}{0}  
\renewcommand{\thetable}{S\arabic{table}} 
\setcounter{figure}{0} 
\renewcommand{\thefigure}{S\arabic{figure}}

\section{Supplementary Material}

In this supplement, Sections \ref{ap: prf sec 1}, \ref{ap: prf sec 2}, \ref{ap: prf mala consistency} and \ref{ap: prf sec 3} contain complete proofs for the results stated in the main text, Section \ref{ap: mala explain} details the Markov transition kernel that was used in all experiments and Section \ref{ap: extra results} contains an in-depth presentation of our empirical results which were summarised at a high level in the main text.

\subsection{Proof of Theorem \ref{thm:stein-consistent-main-text}} \label{ap: prf sec 1}

Note that the performance of greedy algorithms for minimisation of MMD was studied in \cite{DeMarchi2005,Santin2017}.
Our analysis, and that in \cite{Chen2018SteinPoints}, differ in several respects from this work - not least in that our arguments do not require the set $\mathcal{X}$ to be compact.

First, we state and prove a generalisation of Theorem 5 in \cite{Chen2018SteinPoints}, which quantifies KSD convergence for point sets produced by approximately optimizing over arbitrary subsets of $\mathcal{X}$:

\begin{theorem}[Generalized Stein Point Convergence]
\label{thm:stein-point-convergence}
Suppose $k_0$ is a reproducing kernel with $\int_{\mathcal{X}} k_0(x,\cdot) \mathrm{d}P(x) \equiv 0$.
Fix $n \in \mathbb{N}$, and, for each $1 \leq j \leq n$, any $\mathcal{Y}_j \subseteq \mathcal X$ and any $h_j$ in the convex hull of $\{ k_0(x,\cdot) \}_{x\in \mathcal{Y}_j}$.
Fix $\delta > 0$ and, for each $1 \leq i \leq n$, fix $S_i \geq 0$ and $r_i > 0$.
Any point set $\{x_i\}_{i=1}^n \subset \mathcal{X}$ satisfying 
\begin{align}
\frac{k_0(x_j,x_j)}{2}&+\displaystyle\sum_{i=1}^{j-1} k_0(x_i, x_j) 
	\leq \frac{\delta}{2} + \frac{S_j^2}{2}
	+ \displaystyle\inf_{x\in \mathcal{Y}_j}\displaystyle\sum_{i=1}^{j-1} k_0(x_i, x)  \label{eq: close enough}
\end{align}
for each $1 \leq j \leq n$ also satisfies 
\begin{align}
D_{\mathcal{K}_0,P}(\{x_i\}_{i=1}^n) 
	\leq \exp \left( \frac 1 2 \sum_{j=1}^n \frac 1 {r_j} \right) \sqrt{\frac{\delta}{n}+\frac{1}{n^2}\sum_{i=1}^{n}\left(S_i^2 + r_i\|h_i\|_{\mathcal{K}_0}^2\right)}. \label{eq: thm4 result}
\end{align}
\end{theorem}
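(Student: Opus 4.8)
The plan is to recast the KSD as the norm of a partial-sum element of $\mathcal{K}_0$ and to track how that norm grows across the $n$ greedy-type steps. Since $\int_{\mathcal{X}} k_0(x,\cdot)\,\mathrm{d}P(x) \equiv 0$, the quantity $D_{\mathcal{K}_0,P}(\{x_i\}_{i=1}^n)$ is precisely the MMD in $\mathcal{K}_0$ between $\frac1n\sum_{i=1}^n\delta_{x_i}$ and $P$; hence, writing $g_j := \sum_{i=1}^{j} k_0(x_i,\cdot) \in \mathcal{K}_0$ with $g_0 = 0$, we have $D_{\mathcal{K}_0,P}(\{x_i\}_{i=1}^n) = \frac1n\|g_n\|_{\mathcal{K}_0}$. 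By the reproducing property, $\langle g_{j-1},k_0(x_j,\cdot)\rangle_{\mathcal{K}_0} = \sum_{i=1}^{j-1}k_0(x_i,x_j)$ and $\|k_0(x_j,\cdot)\|_{\mathcal{K}_0}^2 = k_0(x_j,x_j)$, so expanding $\|g_j\|_{\mathcal{K}_0}^2 = \|g_{j-1}+k_0(x_j,\cdot)\|_{\mathcal{K}_0}^2$ shows that the left-hand side of \eqref{eq: close enough} equals $\tfrac12\big(\|g_j\|_{\mathcal{K}_0}^2 - \|g_{j-1}\|_{\mathcal{K}_0}^2\big)$.

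Next I would turn the right-hand side of \eqref{eq: close enough} into an inner product. For any $x \in \mathcal{Y}_j$ the reproducing property gives $\sum_{i=1}^{j-1}k_0(x_i,x) = \langle g_{j-1},k_0(x,\cdot)\rangle_{\mathcal{K}_0}$, and since a linear functional on $\mathcal{K}_0$ takes, over the convex hull of $\{k_0(x,\cdot)\}_{x\in\mathcal{Y}_j}$, values no smaller than its infimum over $\mathcal{Y}_j$, we get $\inf_{x\in\mathcal{Y}_j}\sum_{i=1}^{j-1}k_0(x_i,x) \le \langle g_{j-1},h_j\rangle_{\mathcal{K}_0}$ — this is the step that lets an arbitrary $h_j$ in the convex hull enter the bound. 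Substituting and clearing the factor $\tfrac12$ yields $\|g_j\|_{\mathcal{K}_0}^2 - \|g_{j-1}\|_{\mathcal{K}_0}^2 \le \delta + S_j^2 + 2\langle g_{j-1},h_j\rangle_{\mathcal{K}_0}$. Applying the weighted Young inequality $2\langle g_{j-1},h_j\rangle_{\mathcal{K}_0} \le \tfrac{1}{r_j}\|g_{j-1}\|_{\mathcal{K}_0}^2 + r_j\|h_j\|_{\mathcal{K}_0}^2$ then produces the one-step recursion
\[
\|g_j\|_{\mathcal{K}_0}^2 \;\le\; \Big(1+\tfrac{1}{r_j}\Big)\|g_{j-1}\|_{\mathcal{K}_0}^2 + \delta + S_j^2 + r_j\|h_j\|_{\mathcal{K}_0}^2 .
\]

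Finally I would unroll this recursion from $g_0=0$, giving $\|g_n\|_{\mathcal{K}_0}^2 \le \sum_{i=1}^n\big(\delta + S_i^2 + r_i\|h_i\|_{\mathcal{K}_0}^2\big)\prod_{j=i+1}^n\big(1+\tfrac{1}{r_j}\big)$, and then bound every partial product by $\prod_{j=1}^n(1+\tfrac{1}{r_j}) \le \exp\big(\sum_{j=1}^n\tfrac{1}{r_j}\big)$ (using $1+t\le e^t$ and that each factor is $\ge 1$). This gives $\|g_n\|_{\mathcal{K}_0}^2 \le \exp\big(\sum_{j=1}^n\tfrac1{r_j}\big)\big(n\delta + \sum_{i=1}^n(S_i^2 + r_i\|h_i\|_{\mathcal{K}_0}^2)\big)$; dividing by $n^2$, taking square roots and recalling $D_{\mathcal{K}_0,P}(\{x_i\}_{i=1}^n) = \frac1n\|g_n\|_{\mathcal{K}_0}$ delivers \eqref{eq: thm4 result}. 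I do not expect a genuine obstacle here: the argument is an energy/recursion estimate, and the only points needing care are the convexity reduction of the infimum to $\langle g_{j-1},h_j\rangle_{\mathcal{K}_0}$ (which, if $\mathcal{Y}_j$ is infinite, is handled by noting that $h_j$ is by definition a finite convex combination, or via a limiting argument) and the bookkeeping in Young's inequality so that the constants line up exactly with \eqref{eq: thm4 result}.
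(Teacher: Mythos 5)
Your proof is correct and follows essentially the same approach as the paper: expressing the KSD as $\frac1n\|\sum_i k_0(x_i,\cdot)\|_{\mathcal{K}_0}$, deriving a one-step recursion from \eqref{eq: close enough} via the convex-hull/linear-functional step, and unrolling with the bound $1+t \le e^t$. The only cosmetic difference is that you invoke the weighted Young inequality in one stroke where the paper applies Cauchy--Schwarz and then the arithmetic--geometric mean inequality, which are the same estimate.
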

\begin{proof}
Let  $a_n := n^2 D_{\mathcal{K}_0,P}(\{x_i\}_{i=1}^n)^2 = \sum_{i=1}^n \sum_{i'=1}^n k_0(x_i,x_{i'}) = \| \sum_{i=1}^n k_0(x_i, \cdot) \|^2_{\mathcal K_0}$. Then 
\begin{align}
a_n \; =\; \sum_{i=1}^n\sum_{i'=1}^n k_0(x_i,x_{i'}) \;=\;
 a_{n-1}+ k_0(x_n,x_n) +2\sum_{i=1}^{n-1}k_0(x_i,x_n) \; \leq \; a_{n-1}+ \delta  +  S_n^2+ 2 \inf_{x\in \mathcal{Y}_n} \sum_{i=1}^{n-1}k_0(x_i,x),  \label{eq: an recursion}
\end{align}
where for the final inequality we have used \eqref{eq: close enough} with $j = n$. 
Next, we let $f_n := \sum_{i=1}^n k_0(x_i, \cdot)$ so that $\|f_n\|_{\mathcal{K}_0} = \sqrt{a_n}$. 
Applying in the first instance the Cauchy-Schwarz inequality, then making use of the arithmetic-geometric inequality\footnote{Recall that the arithmetic-geometric mean inequality states that for any constants $b_1,\ldots,b_m \geq 0$, $\frac{1}{m}\sum_{k=1}^m b_k \geq (\prod_{k=1}^m b_k)^{\frac{1}{m}}$.}, we get:
\begin{align}
2 \inf_{x \in \mathcal{Y}_n} f_{n-1}(x) \;= \;
  2 \inf_{f \in \mathcal M_n} \langle f_{n-1},f \rangle_{\mathcal K_0} \; & \leq \; 2  \langle f_{n-1},h_n \rangle_{\mathcal K_0} \nonumber \\
 & \leq 2 \; \sqrt{\|f_{n-1}\|^2_{\mathcal K_0} \|h_n\|^2_{\mathcal K_0} } \; = 2 \; \sqrt{\left(\frac{\|f_{n-1}\|^2_{\mathcal K_0}}{r_n}\right) (r_n \|h_n\|^2_{\mathcal K_0} )} \nonumber \\
 & \leq \; r_n \| h_n \|_{\mathcal K_0}^2 + \frac{a_{n-1}}{r_n}.   \label{eq: a-g ineq}
\end{align}
Combining \eqref{eq: an recursion} and \eqref{eq: a-g ineq} establishes the recurrence relation
\begin{align}
a_n \leq \left(1 + \frac{1}{r_n} \right) a_{n-1}+\delta +S_n^2 +r_n \| h_n \|_{\mathcal K_0}^2.  \label{eq: recurrance relation}
\end{align}
Expanding the recurrence leads to a product of terms of the form $(1 + \frac{1}{r_n})$ which must be controlled.
To this end, we use the fact that $\log(1+x) \leq x $ for $x\geq0$ implies that
$$  
\log \prod_{j=1}^n\left(1 + \frac{1}{r_{j}} \right)  
\; = \;
\sum_{j=1}^n \log \left(1 + \frac{1}{r_{j}} \right)
 \; \leq  \; 
\sum_{j=1}^n \frac{1}{r_{j}}, 
$$ 
and, noting that the function $i \mapsto \prod_{j=1}^i (1 + 1/r_{n-j+1})$ is increasing, we can bound the product
$$
\prod_{j=1}^i \left(1 + \frac{1}{r_{n-j+1}} \right)
\; \leq \;
\prod_{j=1}^n \left(1 + \frac{1}{r_{j}} \right) 
 \; \leq \;
  \exp \left( \sum_{j=1}^n \frac 1 {r_j} \right),
 $$
uniformly in $i$.
This implies that the recurrence relation in \eqref{eq: recurrance relation} satisfies
\begin{eqnarray*} 
a_n 
& \leq & 
\sum_{i=0}^{n-1} \left( \delta + S_{n-i}^2 + r_{n-i} \| h_{n-i} \|_{\mathcal K_0}^2 \right) \prod_{j=1}^i\left(1 + \frac{1}{r_{n-j+1}} \right) \\ 
& \leq & 
 \exp \left( \sum_{j=1}^n \frac 1 {r_j} \right)\sum_{i=0}^{n-1} \left( \delta + S_{n-i}^2 + r_{n-i} \| h_{n-i} \|_{\mathcal K_0}^2 \right), \end{eqnarray*}
from which the result is established.
\end{proof}

Theorem \ref{thm:stein-point-convergence} is a refinement of the argument used in the first part of the proof of Theorem 5 in \cite{Chen2018SteinPoints}.
It serves to make explicit the roles of $S_j$ and $\|h_j\|_{\mathcal{K}_0}$ and distinguishes between the content of \eqref{eq: thm4 result} and subsequent assumptions on $k_0$ and $P$ that are used to bound the terms that are involved.

The result of Theorem \ref{thm:stein-point-convergence} provides an upper bound in the situation where the sets $\mathcal{Y}_j$ are fixed.
To make use of Theorem \ref{thm:stein-point-convergence} in the context of SP-MCMC, where the sets $\mathcal{Y}_j$ are instead randomly generated, we must therefore establish probabilistic bounds on the quantities $S_i$ and $\|h_i\|_{\mathcal{K}_0}$ that appear in the statement of Theorem \ref{thm:stein-point-convergence}.
This is the content of the next result:

\begin{theorem}[Generalized i.i.d.\ SP-MCMC Convergence] \label{thm: SP ideal MC general result}
Suppose $k_0$ is a reproducing kernel with $\int_{\mathcal{X}} k_0(x,\cdot) \mathrm{d}P(x) \equiv 0$ and $\mathbb{E}_{Z \sim P}[e^{\gamma k_0(Z,Z) }] < \infty$.
Fix a sequence $(m_j)_{j=1}^n \subset \mathbb{N}$ and, for each $j \in \mathbb{N}$, let $\mathcal{Y}_j$ be the set of independent random variables $\{Y_{j,l}\}_{l=1}^{m_j}$, with each $Y_{j,l} \sim P$. 
For each $1 \leq i \leq n$, fix $\tilde{S}_i \geq 0$ and $r_i > 0$.
Then $\exists C > 0$ such that
\begin{eqnarray}
\E\left[D_{\mathcal{K}_0, P}(\{x_i\}_{i=1}^n)^2\right]  
\; \leq \; C \exp\Bigg( \sum_{j=1}^n \frac{1}{r_j} \Bigg) \frac{1}{n^2} \sum_{i=1}^n \left( r_i e^{- \frac{\gamma}{2} \tilde{S}_i^2} + \left( 1 + \frac{r_i}{m_i} \right) \min\left\{\tilde{S}_i^2 , \sup_{x \in \mathcal{X}} k_0(x,x) \right\}  \right) ,  \label{eq: gen iid spmcmc result}
\end{eqnarray}
where in each case the total expectation $\mathbb{E}$ is taken over realisations of the random sets $\mathcal{Y}_j$, $j \in \mathbb{N}$.
\end{theorem}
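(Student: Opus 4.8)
The plan is to control the unnormalised quantity $a_n := n^2 D_{\mathcal{K}_0,P}(\{x_i\}_{i=1}^n)^2 = \big\| \sum_{i=1}^n k_0(x_i,\cdot) \big\|_{\mathcal{K}_0}^2$ via a one-step recursion in expectation. Writing $f_{j-1} := \sum_{i=1}^{j-1} k_0(x_i,\cdot) \in \mathcal{K}_0$, the reproducing property gives $a_j = a_{j-1} + 2 f_{j-1}(x_j) + k_0(x_j,x_j)$ (cf.\ \eqref{eq: an recursion}). Let $\mathcal{F}_{j-1}$ be the $\sigma$-algebra generated by the first $j-1$ sample paths; then $f_{j-1}$ is $\mathcal{F}_{j-1}$-measurable, while $Y_{j,1},\dots,Y_{j,m_j}$ are i.i.d.\ $P$ and independent of $\mathcal{F}_{j-1}$. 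Since step iii of SP-MCMC selects $x_j \in \{Y_{j,l}\}_{l=1}^{m_j}$ minimising $D_{\mathcal{K}_0,P}(\{x_i\}_{i=1}^{j-1}\cup\{y\})^2 = j^{-2}\big(a_{j-1} + 2 f_{j-1}(y) + k_0(y,y)\big)$, and a minimum over a finite set is at most its average, $2 f_{j-1}(x_j) + k_0(x_j,x_j) \leq \frac{2}{m_j}\sum_{l=1}^{m_j} f_{j-1}(Y_{j,l}) + \frac{1}{m_j}\sum_{l=1}^{m_j} k_0(Y_{j,l},Y_{j,l})$, hence $a_j \leq a_{j-1} + \frac{2}{m_j}\sum_l f_{j-1}(Y_{j,l}) + \frac{1}{m_j}\sum_l k_0(Y_{j,l},Y_{j,l})$.

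Next I would truncate each self-interaction at level $\tilde{S}_j$, writing $k_0(Y_{j,l},Y_{j,l}) = \big(k_0(Y_{j,l},Y_{j,l})\wedge \tilde{S}_j^2\big) + \big(k_0(Y_{j,l},Y_{j,l})-\tilde{S}_j^2\big)_+$, and take $\mathbb{E}[\,\cdot\mid\mathcal{F}_{j-1}]$. The cross term vanishes: $\mathbb{E}[ f_{j-1}(Y_{j,l}) \mid \mathcal{F}_{j-1}] = \int f_{j-1}\,\mathrm{d}P = \big\langle f_{j-1}, \int k_0(y,\cdot)\,\mathrm{d}P(y)\big\rangle_{\mathcal{K}_0} = 0$, the $\mathcal{K}_0$-valued (Bochner) integral being well defined since $\int k_0(y,y)^{1/2}\,\mathrm{d}P(y) < \infty$ (a consequence of $\mathbb{E}_{Z\sim P}[e^{\gamma k_0(Z,Z)}]<\infty$) and equal to $0$ by symmetry of $k_0$ together with the hypothesis $\int k_0(x,\cdot)\,\mathrm{d}P(x)\equiv 0$. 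This leaves $\mathbb{E}[a_j\mid\mathcal{F}_{j-1}] \leq a_{j-1} + \mathbb{E}_{Z\sim P}[k_0(Z,Z)\wedge\tilde{S}_j^2] + \mathbb{E}_{Z\sim P}[(k_0(Z,Z)-\tilde{S}_j^2)_+]$. I would bound the first term by $\tilde{S}_j^2 \wedge \mathbb{E}_{Z\sim P}[k_0(Z,Z)] \leq \tilde{S}_j^2\wedge\sup_{x\in\mathcal{X}}k_0(x,x)$, and, using $(t-s)_+\leq \gamma^{-1}e^{\gamma(t-s)}$, bound the tail term by $\gamma^{-1}\mathbb{E}_{Z\sim P}[e^{\gamma k_0(Z,Z)}]\,e^{-\gamma\tilde{S}_j^2}\leq \gamma^{-1}\mathbb{E}_{Z\sim P}[e^{\gamma k_0(Z,Z)}]\,e^{-\gamma\tilde{S}_j^2/2}$.

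Summing over $j = 2,\dots,n$, absorbing $a_1 = k_0(x_1,x_1)$, and dividing by $n^2$ then gives $\mathbb{E}[D_{\mathcal{K}_0,P}(\{x_i\}_{i=1}^n)^2] \leq \frac{C'}{n^2}\sum_{i=1}^n\big(\tilde{S}_i^2\wedge\sup_{x\in\mathcal{X}}k_0(x,x) + e^{-\gamma\tilde{S}_i^2/2}\big)$ for a constant $C'$ depending only on $k_0$ and $P$. Since $\exp\!\big(\sum_{j} 1/r_j\big)\geq 1$, $(1+r_i/m_i)\geq 1$, and $\exp\!\big(\sum_{j} 1/r_j\big)\, r_i \geq e^{1/r_i} r_i \geq e$, each summand of \eqref{eq: gen iid spmcmc result} bounds below, up to a fixed multiplicative constant, the corresponding summand of the display above; enlarging the constant therefore recovers \eqref{eq: gen iid spmcmc result}.

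The step I expect to need the most care is making the mean-zero cancellation fully rigorous — Bochner-integrability of $y\mapsto k_0(y,\cdot)$ into $\mathcal{K}_0$, the interchange of expectation with the inner product, and uniformity of the tail estimate in $j$. A less direct alternative, closer to the recurrence \eqref{eq: recurrance relation} of Theorem~\ref{thm:stein-point-convergence} and the route one would follow to prepare the Markov-chain extension (where this cross term no longer vanishes), is to apply that recurrence conditionally on the realised paths with $h_j := \frac{1}{m_j}\sum_l k_0(Y_{j,l},\cdot)$ and $S_j^2$ a truncated maximal self-interaction, invoking the arithmetic--geometric step $2\langle f_{j-1},h_j\rangle_{\mathcal{K}_0}\leq a_{j-1}/r_j + r_j\|h_j\|_{\mathcal{K}_0}^2$; this makes $r_i$ and $\exp(\sum_j 1/r_j)$ appear explicitly, at the cost of additionally controlling $\mathbb{E}[\|h_j\|_{\mathcal{K}_0}^2]$ (which, by the same cancellation, equals $m_j^{-1}\mathbb{E}_{Z\sim P}[k_0(Z,Z)]$) and the rare event that $k_0(Y_{j,l},Y_{j,l})>\tilde{S}_j^2$ for every $l$.
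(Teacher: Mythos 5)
Your proof is correct and is genuinely simpler than the paper's. The paper routes through Theorem \ref{thm:stein-point-convergence} conditionally on the realised paths: it truncates the samples to $B_j$, sets $h_j = \frac{1}{m_j}\sum_l k_0(Y_{j,l},\cdot)\mathbb{I}[Y_{j,l}\in B_j]$, bounds $\|h_j\|_{\mathcal{K}_0}^2$ in expectation via a decomposition into $\|k_j^-\|^2$ (a tail-probability estimate using an exponentially tilted change of measure) and $\mathbb{E}\|h_j-k_j^-\|^2$ (which collapses to $O(1/m_j)$ by independence), and only then squares and takes expectations, so the $r_i$ and the $\exp(\sum_j 1/r_j)$ factor are inherited from the arithmetic--geometric step inside Theorem \ref{thm:stein-point-convergence}. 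You instead run the one-step recursion directly in expectation, use ``min $\leq$ average'' over the $m_j$ candidates, and observe that the cross term $\mathbb{E}[f_{j-1}(Y_{j,l})\mid\mathcal F_{j-1}]$ vanishes exactly by the Stein identity, so no AM--GM step is needed at all. This produces the strictly sharper bound $\mathbb{E}[D^2]\leq \frac{C'}{n^2}\sum_i(\tilde S_i^2\wedge\sup k_0 + e^{-\gamma\tilde S_i^2/2})$, with no $m_i$ and no exponential blow-up, which you then correctly show dominates the stated right-hand side (via $e^{1/r}r\geq e$, $\exp(\sum 1/r_j)\geq 1$, $1+r_i/m_i\geq 1$). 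The only (small) thing to note is that the stated theorem is therefore looser than what the i.i.d.\ case actually delivers; your approach also sidesteps a wrinkle in the paper's argument, namely that $h_j$ as defined there is a sub-convex (not convex) combination when some $Y_{j,l}\notin B_j$, and $\mathcal Y_j$ may be empty. What the paper's route buys --- and you say this explicitly in your final paragraph --- is that it is the correct skeleton for the Markov-chain extension (Theorem \ref{thm: main theorem}), where $\mathbb{E}[f_{j-1}(Y_{j,l})\mid\mathcal F_{j-1}]\neq 0$ and the cross term must be absorbed via $V$-uniform ergodicity; your direct cancellation does not survive there, so the $r_i$ machinery becomes essential.
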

\begin{proof}

\def\indic#1{\mathbb{I}\left[{#1}\right]} 
\def\norm#1{\|{#1}\|} 

Recall that the $j$th iteration of SP-MCMC requires that random variables $(Y_{j,l})_{l=1}^{m_j}$ are instantiated.
Define the set $\mathcal{Y}_j$ to consist of the subset of these $m_j$ samples for which $Y_{j,l} \in B_j := \{ x \in \mathcal{X} : k_0(x,x) \leq \tilde{S}_j^2\}$ is satisfied. 
Note that SP-MCMC selects the $j$th point $x_j$ from the collection $\{Y_{j,l}\}_{l=1}^{m_j}$ such that
\begin{align*}
\frac{k_0(x_j,x_j)}{2}+\displaystyle\sum_{i=1}^{j-1} k_0(x_i, x_j) 
	&= \displaystyle\inf_{x \in \{ Y_{j,l} : l = 1,\dots,m_j \}}\frac{k_0(x,x)}{2}
	+\displaystyle\sum_{i=1}^{j-1} k_0(x_i, x)   \\
	&\leq \displaystyle\inf_{x\in \mathcal{Y}_j}\frac{k_0(x,x)}{2}
	+\displaystyle\sum_{i=1}^{j-1} k_0(x_i, x)
	\; \leq \; \frac{1}{2}\min\left\{\tilde{S}_j^2 , \sup_{x \in \mathcal{X}} k_0(x,x) \right\}+ \displaystyle\inf_{x\in \mathcal{Y}_j} 	\displaystyle\sum_{i=1}^{j-1} k_0(x_i, x). 
\end{align*}
so that \eqref{eq: close enough} is satisfied with $\delta = 0$ and $S_j := \min\left\{ \tilde{S}_j , \sup_{x \in \mathcal{X}} k_0(x,x)^{1/2} \right\}$.

Let $h_j(\cdot) := \frac{1}{m_j} \sum_{l=1}^{m_j} k_0(Y_{j,l}, \cdot ) \mathbb{I}[Y_{j,l} \in B_j]$, which is an element of the convex hull of $\{k_0(x,\cdot)\}_{x \in \mathcal{Y}_j}$.
Define also the truncated kernel mean embeddings
\begin{align*}
k_j^-(\cdot) := \int k_0(x,\cdot) \mathbb{I}[x \in B_j] \mathrm{d}P(x), \qquad k_j^+(\cdot) := \int k_0(x,\cdot) \mathbb{I}[x \notin B_j] \mathrm{d}P(x).
\end{align*}
From the triangle inequality followed by Jensen's inequality
\begin{align}
\|h_j\|_{\mathcal{K}_0}^2 &\leq 2 \left( \| k_j^- \|_{\mathcal{K}_0}^2 + \|h_j - k_j^- \|_{\mathcal{K}_0}^2 \right) .  \label{eq: triangle ineq h trun IID}
\end{align}
In what follows we aim to bound the two terms on the right hand side of \eqref{eq: triangle ineq h trun IID}.

\paragraph{Bound on $\|k_j^{-}\|_{\mathcal{K}_0}^2$:}

For the first term in \eqref{eq: triangle ineq h trun IID}, since $\int_{\mathcal{X}} k_0(x,\cdot) \mathrm{d}P(x) \equiv 0$ we have $k_j^+ = - k_j^-$.
Thus, we deduce that
\begin{align*}
\|k_j^-\|_{\mathcal{K}_0}^2 = \|k_j^+\|_{\mathcal{K}_0}^2
	&= \iint k_0(x,y) \mathbb{I}[x \notin B_j] \mathrm{d}P(x) \mathbb{I}[y \notin B_j] \mathrm{d}P(y) \\
	&\leq \left(\int \sqrt{k_0(x,x)}\mathbb{I}[x \notin B_j] \mathrm{d}P(x) \right)^2 
	\; \leq \; \int k_0(x,x)\mathbb{I}[x \notin B_j] \mathrm{d}P(x)
\end{align*}
where the final two inequalities follow by Cauchy-Schwarz and Jensen's inequality.
Now, let $Y = k_0(Z,Z)$ for $Z \sim P$ and $b := \mathbb{E}[e^{\gamma Y}] < \infty$.
Following Appendix A.1.3 of \cite{Chen2018SteinPoints}, we will bound the tail expectation above by considering the biased random variable $Y^* = k_0(Z^*,Z^*)$
for $Z^*$ with density 
$$
\rho(z^*) = \frac{k_0(z^*,z^*) p(z^*)}{\mathbb{E}[Y]}.
$$
To this end we have, by the relation $x \leq e^x$,
\[
\mathbb{E}[e^{\frac{\gamma}{2} Y^*}] 
	= \mathbb{E}[e^{\frac{\gamma}{2} k_0(Z^*,Z^*)}] 
	= \frac{\mathbb{E}[k_0(Z,Z)e^{\frac{\gamma}{2} k_0(Z,Z)}]}{\mathbb{E}[Y]}
	= \frac{\mathbb{E}[\frac{\gamma}{2} Y e^{\frac{\gamma}{2} Y}] }{\frac{\gamma}{2} \mathbb{E}[Y]}
	\leq \frac{ \mathbb{E}[e^{\gamma Y}] }{\lambda \mathbb{E}[Y]}
	= \frac{2b}{\gamma \mathbb{E}[Y]} .
\]
From an application of Markov's inequality we see that
$$
\mathbb{P}[Y^* \geq \tilde{S}_j^2] = \mathbb{P}[e^{\frac{\gamma}{2} Y^*} \geq e^{\frac{\gamma}{2} \tilde{S}_j^2}] \leq \frac{\mathbb{E}[e^{\frac{\gamma}{2} Y^*}]}{e^{\frac{\gamma}{2} \tilde{S}_j^2}} \leq \frac{2b}{\gamma \mathbb{E}[Y]} e^{-\frac{\gamma}{2} \tilde{S}_j^2}
$$
and as a consequence
\begin{align}
\|k_j^-\|_{\mathcal{K}_0}^2
	\leq \int k_0(x,x)\mathbb{I}[k_0(x,x) > \tilde{S}_j^2] \mathrm{d}P(x)
	= {\mathbb{E}[Y]} \int \mathbb{I}[k_0(x,x) > \tilde{S}_j^2] \rho(x)\mathrm{d}x
	= {\mathbb{E}[Y]} \mathbb{P}[Y^* \geq \tilde{S}_j^2]
	\leq \frac{2b}{\gamma} e^{-\frac{\gamma}{2} \tilde{S}_j^2} \label{eq: kjminus bound} .
\end{align}

\paragraph{Bound on $\|h_j - k_j^{-}\|_{\mathcal{K}_0}^2$:}

For the second term in \eqref{eq: triangle ineq h trun IID}, we have that
\begin{eqnarray*}
\|h_j - k_j^-\|_{\mathcal{K}_0}^2 & = & \frac{1}{m_j^2} \sum_{l,l' = 1}^{m_j} k_0(Y_{j,l},Y_{j,l'}) \mathbb{I}[Y_{j,l}, Y_{j,l'} \in B_j] - \frac{2}{m_j} \sum_{l=1}^{m_j} \int k_0(x, Y_{j,l}) \mathbb{I}[x, Y_{j,l} \in B_j] \mathrm{d}P(x) \\
& & + \iint k_0(x,x') \mathbb{I}[x,x' \in B_j] \mathrm{d}P(x) \mathrm{d}P(x') \\
& = & \frac{1}{m_j} \sum_{l=1}^{m_j} \left\{ \frac{1}{m_j} \sum_{l'=1}^{m_j} k_0(Y_{j,l},Y_{j,l'}) \mathbb{I}[Y_{j,l}, Y_{j,l'} \in B_j] - \int k_0(x, Y_{j,l}) \mathbb{I}[x, Y_{j,l} \in B_j] \mathrm{d}P(x) \right\} \\
& & - \int \left\{ \frac{1}{m_j} \sum_{l = 1}^{m_j} k_0(x,Y_{j,l}) \mathbb{I}[x,Y_{j,l} \in B_j] - \int k_0(x,x') \mathbb{I}[x,x' \in B_j] \mathrm{d}P(x') \right\} \mathrm{d}P(x) \\
& = & \frac{1}{m_j^2} \sum_{l=1}^{m_j} \sum_{l'=1}^{m_j} \left\{ h_{Y_{j,l'}}(Y_{j,l}) - \int h_{Y_{j,l}}(x) \mathrm{d}P(x) \right\} - \frac{1}{m_j} \sum_{l = 1}^{m_j} \int \left\{ h_x(Y_{j,l}) - \int h_x(x') \mathrm{d}P(x') \right\} \mathrm{d}P(x) \\
& = & \frac{1}{m_j^2} \sum_{l=1}^{m_j} \sum_{l'=1}^{m_j} \left\{ h_{Y_{j,l'}}(Y_{j,l}) - \int h_{Y_{j,l'}}(x) \mathrm{d}P(x) \right\} - \frac{1}{m_j} \sum_{l = 1}^{m_j} \int \left\{ h_x(Y_{j,l}) - \int h_x(x') \mathrm{d}P(x') \right\} \mathrm{d}P(x)
\end{eqnarray*}
where $h_x(x') := k_0(x,x') \mathbb{I}[x,x' \in B_j]$.
Thus, letting again $Z \sim P$ be independent of all other random variables that we have defined,
\begin{eqnarray}
\mathbb{E} \left[ \|h_j - k_j^-\|_{\mathcal{K}_0}^2 \right] = \frac{1}{m_j^2} \sum_{l=1}^{m_j} \sum_{l'=1}^{m_j} \left\{ \mathbb{E}[ h_{Y_{j,l'}}(Y_{j,l}) ] - \mathbb{E}[ h_{Y_{j,l'}}(Z) ] \right\} - \frac{1}{m_j} \sum_{l = 1}^{m_j} \int \left\{ \mathbb{E}[ h_x(Y_{j,l}) ] - \mathbb{E}[ h_x(Z) ] \right\} \mathrm{d}P(x) . \label{eq: truncated embedding error IID}
\end{eqnarray}
Since the $Y_{j,l}$ are assumed to be independent and distributed according to $P$, all of the terms in \eqref{eq: truncated embedding error IID} vanish apart from the diagonal terms in the first sum, and moreover all of the diagonal terms are identical:
\begin{eqnarray*}
\mathbb{E} \left[ \|h_j - k_j^-\|_{\mathcal{K}_0}^2 \right] = \frac{1}{m_j^2} \sum_{l=1}^{m_j} \left\{ \mathbb{E}[ h_{Y_{j,l}}(Y_{j,l}) ] - \mathbb{E}[ h_{Y_{j,l}}(Z) ] \right\} = \frac{1}{m_j}  \left\{ \mathbb{E}[ h_{Y_{j,1}}(Y_{j,1}) ] - \mathbb{E}[ h_{Y_{j,1}}(Z) ] \right\} .
\end{eqnarray*}
An application of the triangle inequality and Cauchy-Schwarz leads us to the bound
\begin{eqnarray*}
\left| \mathbb{E} \left[ \|h_j - k_j^-\|_{\mathcal{K}_0}^2 \right] \right| & \leq & \frac{1}{m_j}  \left\{ \left| \mathbb{E}[ h_{Y_{j,1}}(Y_{j,1}) ] \right| + \left| \mathbb{E}[ h_{Y_{j,1}}(Z) ] \right| \right\} \\
& = & \frac{1}{m_j} \left\{ \left| \mathbb{E}[k_0(Y_{j,1},Y_{j,1}) \mathbb{I}[Y_{j,1} \in B_j] ] \right| + \left| \mathbb{E}[k_0(Y_{j,1},Z) \mathbb{I}[Y_{j,1},Z \in B_j] ] \right| \right\} \\
& \leq & \frac{1}{m_j} \left\{ \mathbb{E}[k_0(Y_{j,1},Y_{j,1}) \mathbb{I}[Y_{j,1} \in B_j] ]  + \mathbb{E}[k_0(Y_{j,1},Y_{j,1})^{\frac{1}{2}} k_0(Z,Z)^{\frac{1}{2}} \mathbb{I}[Y_{j,1},Z \in B_j] ] \right\} \\
& \leq &  \frac{2}{m_j} \min\left\{\tilde{S}_j^2 , \sup_{x \in \mathcal{X}} k_0(x,x) \right\} . 
\end{eqnarray*}

\paragraph{Overall Bound:}

Combining our bounds for the terms in \eqref{eq: triangle ineq h trun IID} leads to
\begin{eqnarray}
\mathbb{E}[ \|h_j\|_{\mathcal{K}_0}^2 ] \; \leq \; 2 \left( \| k_j^- \|_{\mathcal{K}_0}^2 + \mathbb{E}\left[ \|h_j - k_j^- \|_{\mathcal{K}_0}^2 \right] \right) 
\; \leq \; \frac{4b}{\gamma} e^{-\frac{\gamma}{2} \tilde{S}_j^2} +  \frac{4}{m_j} \min\left\{ \tilde{S}_j^2 , \sup_{x \in \mathcal{X}} k_0(x,x) \right\} .  \label{eq: EH bound IID}
\end{eqnarray}
Finally, we square the conclusion \eqref{eq: thm4 result} of Theorem \ref{thm:stein-point-convergence} (with, recall, $\delta = 0$ and $S_j := \min\{ \tilde{S}_j , \sup_{x \in \mathcal{X}} k_0(x,x)^{1/2} \}$) and take expectations, which combine with \eqref{eq: EH bound IID} to produce \eqref{eq: gen iid spmcmc result} with $C = \max\left\{ \frac{4b}{\gamma} , 4 \right\}$.
\end{proof}

Now we are in a position to prove Theorem \ref{thm:stein-consistent-main-text}, which follows as a specific instance of Theorem \ref{thm: SP ideal MC general result}:

\begin{proof}[Proof of Theorem \ref{thm:stein-consistent-main-text}]

The result follows as a special case of the general result of Theorem \ref{thm: SP ideal MC general result} with $S_i = \sqrt{\frac{2}{\gamma} \log(n \wedge m_i)}$ and $r_i = n$ for $i = 1,\dots,n$.
Indeed, with these settings we can bound the conclusion \eqref{eq: gen iid spmcmc result} of Theorem \ref{thm: SP ideal MC general result} as follows (with $C$ a generic constant):
\begin{eqnarray*}
\mathbb{E}\left[D_{\mathcal{K}_0, P}(\{x_i\}_{i=1}^n)^2\right] 
& \leq & C \exp\Bigg( \sum_{j=1}^n \frac{1}{r_j} \Bigg) \frac{1}{n^2} \sum_{i=1}^n \left( r_i e^{- \frac{\gamma}{2} S_i^2} +  \left( 1 + \frac{r_i}{m_i} \right) \min\left\{S_i^2 , \sup_{x \in \mathcal{X}} k_0(x,x) \right\}  \right) \\
& = & C \frac{1}{n^2} \sum_{i=1}^n \left( \frac{n}{n \wedge m_i} +  \left( 1 + \frac{n}{m_i} \right) \min \left\{ \frac{2}{\gamma} \log(n \wedge m_i) , \sup_{x \in \mathcal{X}} k_0(x,x) \right\}  \right) \\
& \leq & C \frac{1}{n} \sum_{i=1}^n \left( \frac{1}{n \wedge m_i} +  \frac{1}{n \wedge m_i} \min \left\{ \log(n \wedge m_i) , \sup_{x \in \mathcal{X}} k_0(x,x) \right\}  \right) \\
& \leq & C \frac{1}{n} \sum_{i=1}^n  \frac{1}{n \wedge m_i} \min \left\{ \log(n \wedge m_i) , \sup_{x \in \mathcal{X}} k_0(x,x) \right\} 
\end{eqnarray*}
as claimed.
\end{proof}

To conclude this section, we remark that the general result established in Thm. \ref{thm: SP ideal MC general result} also implies conditions under which Monte Carlo search strategies can be successfully applied to the \emph{Stein Herding} algorithm proposed in \cite{Chen2018SteinPoints}.
However, our focus on the greedy version of SP in this work was motivated by the stronger theoretical guarantees posessed by the greedy method, as well as the superior empirical performance reported in \cite{Chen2018SteinPoints}.


\subsection{Proof of Theorem \ref{thm: main theorem}} \label{ap: prf sec 2}

Now we turn to the main task of establishing consistency of SP-MCMC in the Markov chain context.
Necessarily, any quantitative result must depend on mixing properties of the Markov chain being used.
In this research we focused on time-homogeoeus Markov chains and the notion of mixing called \emph{$V$-uniform ergodicity}, defined in Sec. \ref{sec:theory} of the main text.
Recall that, for a function $V: \mathcal{X} \to [1,\infty)$, $V$-uniform ergodicy is the property that
$$ \| \mathrm{P}^n(y,\cdot)-P \|_{V} \leq R V(y) \rho^n$$
for some $R \in [0,\infty)$ and $\rho \in (0,1)$ and for all initial states $y \in \mathcal{X}$ and all $n \in \mathbb{N}$.
The assumption of $V$-uniform ergodicity enables us to provide results that hold for \emph{any} choice of function \verb!crit! that takes values in $\mathcal{X}$.
This includes the functions \verb!LAST!, \verb!RAND! and \verb!INFL! from the main text, but in general the value of \verb!crit!$(\{x_i\}_{i=1}^j)$ is not restricted to be in $\{x_i\}_{i=1}^j$ and can be an arbitrary point in $\mathcal{X}$.
This permits the development of quite general strategies for SP-MCMC, beyond those explicitly conisdered in the main text.

Armed with the notion of $V$-uniform ergodicity, we derive the following general result: 
\begin{theorem}[SP-MCMC for $V$-Uniformly Ergodic Markov Chains]
\label{thm: V-UE MCMC general}
Suppose $\int_{\mathcal{X}} k_0(x,\cdot) \mathrm{d}P(x) \equiv 0$ and let $(m_j)_{j = 1}^n \subset \mathbb{N}$ be a fixed sequence.
Fix a function $V : \mathcal{X} \rightarrow [1,\infty)$ and consider time-homogeneous reversible Markov chains $(Y_{j,l})_{l = 1}^{m_j}$, $j \in \mathbb{N}$, generated using the same $V$-uniformly ergodic transition kernel.
Suppose $\exists \gamma> 0$ such that $b := \E[e^{\gamma k_0(Y, Y)}] < \infty$.
Let $\{x_i\}_{i=1}^n$ denote the output of SP-MCMC. 
For each $1 \leq i \leq n$, fix $S_i \geq 0$ and $r_i > 0$.
Then for some constant $C$, 
\begin{eqnarray}
\mathbb{E} \left[ D_{\mathcal{K}_0,P}(\{x_i\}_{i=1}^n)^2 \right] \leq C \exp\left( \sum_{i=1}^n \frac{1}{r_i} \right) \frac{1}{n^2} \sum_{i=1}^n \left( S_i^2 + r_i e^{- \frac{\gamma}{2} S_i^2} +  V_+(S_i) V_-(S_i) \frac{r_i}{m_i} \right) \label{eq: complicated V-UE result}
\end{eqnarray}
where in each case the total expectation $\mathbb{E}$ is taken over realisations of the random sets $\mathcal{Y}_j = \{Y_{j,l}\}_{l = 1}^{m_j}$, $j \in \mathbb{N}$.
\end{theorem}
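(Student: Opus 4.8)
\emph{The plan} is to retain the architecture of the i.i.d.\ argument (Theorem~\ref{thm: SP ideal MC general result}) and replace its ``only the diagonal terms survive'' variance computation by a quantitative mixing estimate driven by $V$-uniform ergodicity. First I would reduce to Theorem~\ref{thm:stein-point-convergence}: for each $j$ set $B_j := \{x \in \mathcal{X} : k_0(x,x) \le S_j^2\}$, let $\mathcal{Y}_j$ be the subset of $\{Y_{j,l}\}_{l=1}^{m_j}$ lying in $B_j$, and take $h_j(\cdot) := \frac1{m_j}\sum_{l=1}^{m_j} k_0(Y_{j,l},\cdot)\,\mathbb{I}[Y_{j,l}\in B_j]$, which lies in the convex hull of $\{k_0(x,\cdot)\}_{x\in\mathcal{Y}_j}$. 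Since SP-MCMC selects $x_j$ from $\{Y_{j,l}\}_{l=1}^{m_j}$ to minimise $\frac12 k_0(x,x)+\sum_{i<j}k_0(x_i,x)$, restricting that infimum to $\mathcal{Y}_j$ (where $k_0(x,x)\le S_j^2$) shows that \eqref{eq: close enough} holds with $\delta=0$ and the stated $S_j$. Squaring \eqref{eq: thm4 result}, taking expectations and bounding $\|h_j\|_{\mathcal K_0}^2\le 2\|k_j^-\|_{\mathcal K_0}^2+2\|h_j-k_j^-\|_{\mathcal K_0}^2$ with $k_j^-(\cdot):=\int k_0(x,\cdot)\mathbb{I}[x\in B_j]\,\mathrm dP(x)$, the task reduces to bounding $\|k_j^-\|^2$ and $\mathbb{E}\|h_j-k_j^-\|^2$.

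The term $\|k_j^-\|^2$ depends only on $P$, so the i.i.d.\ computation applies verbatim: $\int k_0(x,\cdot)\mathrm dP\equiv0$ gives $\|k_j^-\|_{\mathcal K_0}^2\le\int k_0(x,x)\mathbb{I}[k_0(x,x)>S_j^2]\,\mathrm dP(x)$ by Cauchy--Schwarz and Jensen, and size-biasing by $k_0(Z,Z)$ followed by Markov's inequality yields $\|k_j^-\|_{\mathcal K_0}^2\le\frac{2b}{\gamma}e^{-\gamma S_j^2/2}$; this supplies the $r_i e^{-\gamma S_i^2/2}$ contribution.

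The fluctuation term is the heart. Writing $\phi(x):=k_0(x,\cdot)\mathbb{I}[x\in B_j]$, one has $h_j-k_j^-=\frac1{m_j}\sum_l(\phi(Y_{j,l})-k_j^-)$, and expanding the square splits into a diagonal block $\frac1{m_j^2}\sum_l\mathbb{E}\|\phi(Y_{j,l})-k_j^-\|^2$ and an off-diagonal block over pairs $l<l'$. On $B_j$, $\|\phi(x)\|^2=k_0(x,x)\le S_j^2$, so the diagonal block is $O((S_j^2+\|k_j^-\|^2)/m_j)$ and feeds the $S_i^2$ term after multiplication by $r_j$. For a pair $l<l'$ I would condition on $Y_{j,l}$: the Markov property gives $\mathbb{E}[\phi(Y_{j,l'})\mid Y_{j,l}]-k_j^-=\int(\phi(y)-k_j^-)\,\mathrm d(\mathrm P^{\,l'-l}(Y_{j,l},\cdot)-P)$, whose $\mathcal K_0$-norm, by RKHS duality, is $\sup_{\|g\|_{\mathcal K_0}\le1}\big|\int g(y)\mathbb{I}[y\in B_j]\,\mathrm d(\mathrm P^{\,l'-l}(Y_{j,l},\cdot)-P)\big|$ (the constant $\langle k_j^-,g\rangle$ drops out). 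Since $|g(y)|\le k_0(y,y)^{1/2}$, the definition of $V_-$ gives $\|g(\cdot)\mathbb{I}[\cdot\in B_j]\|_V\le V_-(S_j)$, so $V$-uniform ergodicity bounds the display by $R\,V_-(S_j)\,V(Y_{j,l})\,\rho^{\,l'-l}$. Pairing with $\|\phi(Y_{j,l})-k_j^-\|\le k_0(Y_{j,l},Y_{j,l})^{1/2}\mathbb{I}[Y_{j,l}\in B_j]+\|k_j^-\|$ and using $k_0(y,y)^{1/2}\mathbb{I}[y\in B_j]\,V(y)\le V_+(S_j)$, each off-diagonal term is $\lesssim R\rho^{\,l'-l}V_+(S_j)V_-(S_j)$ plus $\|k_j^-\|$-weighted remainders; summing $\sum_{d\ge1}\rho^d=\rho/(1-\rho)$ and dividing by $m_j$ produces the $V_+(S_i)V_-(S_i)r_i/m_i$ term. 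Reversibility enters here to keep the decorrelation estimate and its constants clean (it is also all that is needed for the MALA application of Theorem~\ref{thm: SPMCMC consistent}). Reassembling the per-$j$ bounds gives \eqref{eq: complicated V-UE result}.

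\emph{The hard part} is that, unlike the i.i.d.\ case, the chain launched by \texttt{crit} is not started from $P$: the mixing estimates come with a factor $V(Y_{j,l})$ and, after inserting the geometric drift bound for $\mathbb{E}[V(Y_{j,l})]$, a factor $V$ of the (arbitrary) launch point $x_{i^*}$. The key is that every such burn-in contribution is multiplied by $\|k_j^-\|$, which is exponentially small in $S_j^2$, and by $V_-(S_j)\le S_j$, so that $S_j\|k_j^-\|$ is uniformly bounded (indeed $\lesssim e^{-cS_j^2}$); together with the a priori boundedness of $B_j$ (automatic when $k_0(x,x)\to\infty$, which holds in all our examples and makes $V$ of any selected point bounded), these remainders can be folded into $C$. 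Carrying this out while tracking the dependence of $C$ on $b,\gamma,R,\rho$ and the drift constants is the delicate bookkeeping; the rest is routine manipulation around Theorem~\ref{thm:stein-point-convergence}.
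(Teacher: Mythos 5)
Your overall architecture coincides with the paper's: reduce to Theorem~\ref{thm:stein-point-convergence} with $\delta=0$ and $S_j$ via the truncated set $B_j$, bound $\|k_j^-\|_{\mathcal{K}_0}^2\le \tfrac{2b}{\gamma}e^{-\gamma S_j^2/2}$ exactly as in the i.i.d.\ case, and control the fluctuation term by pairing $k_0(\cdot,\cdot)^{1/2}\mathbb{I}[\cdot\in B_j]$ with $V$ and $1/V$ to produce $V_+(S_j)V_-(S_j)/m_j$. The genuine gap is in your treatment of the fluctuation term. Expanding $\|\tfrac{1}{m_j}\sum_l(\phi(Y_{j,l})-k_j^-)\|_{\mathcal{K}_0}^2$ and conditioning on the earlier state, each off-diagonal pair yields, besides the clean $R\rho^{l'-l}V_+(S_j)V_-(S_j)$ piece, a remainder of order $\|k_j^-\|_{\mathcal{K}_0}\,V_-(S_j)\,R\rho^{l'-l}\,\E[V(Y_{j,l})]$, and your plan to fold these into $C$ does not work: $\E[V(Y_{j,l})]$ is governed by the launch point chosen by \texttt{crit}, which under the theorem's hypotheses may be an arbitrary point of $\mathcal{X}$; the selected points $x_i$ are not constrained to lie in $B_j$, and nothing in the statement gives $k_0(x,x)\to\infty$ or a bound on $V$ at launch points. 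The exponential smallness of $\|k_j^-\|_{\mathcal{K}_0}$ in $S_j$ is irrelevant to this unbounded factor: after multiplication by $r_j$ the remainder is of order $(r_j/m_j)\,\E[V(Y_{j,l})]$ up to constants, which is not dominated by any of the three terms in \eqref{eq: complicated V-UE result} without an extra assumption or argument. (A secondary slip: the diagonal block times $r_j$ is $r_jS_j^2/m_j$, which is not $O(S_i^2)$ when $r_i>m_i$; it should be routed to the $V_+(S_i)V_-(S_i)r_i/m_i$ term using $\sup_{B_j}k_0\le V_+(S_j)V_-(S_j)$.)

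The paper avoids this by a different grouping, not by better bookkeeping: in \eqref{eq: truncated embedding error} each double-sum summand is $\E[h_{Y_{j,l'}}(Y_{j,l})]-\E[h_{Y_{j,l'}}(Z)]$, so the random ``test point'' $y=Y_{j,l'}$ appears in both halves of the difference; the ergodicity factor $V(y)$ is then always evaluated at a point carrying the indicator $\mathbb{I}[y\in B_j]$ and a factor $k_0(y,y)^{1/2}$, which together give $V_+(S_j)$, so every double-sum term is bounded by $RV_+(S_j)V_-(S_j)\rho^{|l-l'|}$ with no unpaired $\E[V(Y_{j,l})]$ (reversibility and the symmetry $h_x(x')=h_{x'}(x)$ are used to cover $l'<l$). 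The launch-point dependence is thereby confined to the single-sum term $\tfrac{1}{m_j}\sum_l\int\{\E[h_x(Y_{j,l})]-\E[h_x(Z)]\}\,\mathrm{d}P(x)$, which is geometrically summable in $l$ and bounded in the paper uniformly over initial states. If you regroup your expansion in this way, the problematic remainders disappear and the rest of your outline goes through essentially as in Theorem~\ref{thm: SP ideal MC general result}; as written, however, the disposal of the burn-in remainders is not justified.
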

\begin{proof}
\def\indic#1{\mathbb{I}\left[{#1}\right]} 
\def\norm#1{\|{#1}\|} 

The stucture of the proof is initially identical to that used in Theorem \ref{thm: SP ideal MC general result}.
Indeed, proceeding as in the proof of Theorem \ref{thm: SP ideal MC general result} we set up the triangle inequality in \eqref{eq: triangle ineq h trun IID} and attempt to control both terms in this bound.
For the first term we proceed identically to obtain the bound on $\|k_j^-\|_{\mathcal{K}_0}^2$ in \eqref{eq: kjminus bound}.
For the second term we proceed identically to obtain the bound 
\begin{eqnarray}
\mathbb{E} \left[ \|h_j - k_j^-\|_{\mathcal{K}_0}^2 \right] = \frac{1}{m_j^2} \sum_{l=1}^{m_j} \sum_{l'=1}^{m_j} \left\{ \mathbb{E}[ h_{Y_{j,l'}}(Y_{j,l}) ] - \mathbb{E}[ h_{Y_{j,l'}}(Z) ] \right\} - \frac{1}{m_j} \sum_{l = 1}^{m_j} \int \left\{ \mathbb{E}[ h_x(Y_{j,l}) ] - \mathbb{E}[ h_x(Z) ] \right\} \mathrm{d}P(x) \label{eq: truncated embedding error}
\end{eqnarray}
from \eqref{eq: truncated embedding error IID}, where $Z \sim P$ is independent of all other random variables that we have defined.
However, the subsequent argument in Theorem \ref{thm: SP ideal MC general result} exploited independence of the random variables $Y_{j,l}$, which does not hold in the Markov chain context.
Thus our aim in the sequel is to leverage $V$-uniform ergodicity to control \eqref{eq: truncated embedding error}.

For the first term in \eqref{eq: truncated embedding error} we exploit the definition of the $\|\cdot\|_V$ norm and the Cauchy-Schwarz inequality to see that, for each $l' \geq l$, we have that
\begin{eqnarray}
\left| \mathbb{E}[ h_{Y_{j,l'}}(Y_{j,l}) | Y_{j,l'} = y ] - \mathbb{E}[ h_{Y_{j,l'}}(Z) | Y_{j,l'} = y ] \right| & = & \left| \mathbb{E}[ h_{y}(Y_{j,l}) | Y_{j,l'} = y ] - \mathbb{E}[ h_{y}(Z) ] \right| \nonumber \\
& \leq & \|\mathrm{P}^{l'-l}(y,\cdot) - P\|_V \|h_y\|_V \nonumber  \\
& = & \|\mathrm{P}^{l'-l}(y,\cdot) - P\|_V \sup_{x \in \mathcal{X}} \frac{|k_0(y,x) \mathbb{I}[x,y \in B_j]|}{V(x)} \nonumber \\
& \leq & \|\mathrm{P}^{l'-l}(y,\cdot) - P\|_V \sup_{x \in \mathcal{X}} \frac{k_0(x,x)^{\frac{1}{2}} k_0(y,y)^{\frac{1}{2}} \mathbb{I}[x,y \in B_j]}{V(x)} \nonumber \\
& = & \|\mathrm{P}^{l'-l}(y,\cdot) - P\|_V k_0(y,y)^{\frac{1}{2}} \mathbb{I}[y \in B_j] \sup_{x \in B_j} \frac{k_0(x,x)^{\frac{1}{2}}}{V(x)} . \label{eq: several terms}
\end{eqnarray}
At this point we exploit $V$-uniform ergodicity to obtain that, for some $R \in [0,\infty)$ and $\rho \in (0,1)$, 
\begin{eqnarray}
\eqref{eq: several terms} & \leq & R V(y) \rho^{l' - l} \times k_0(y,y)^{\frac{1}{2}} \mathbb{I}[y \in B_j] \times \sup_{x \in B_j} \frac{k_0(x,x)^{\frac{1}{2}}}{V(x)} \nonumber \\
& \leq & R \rho^{l' - l} \times \sup_{y \in B_j} V(y) k_0(y,y)^{\frac{1}{2}} \times \sup_{x \in B_j} \frac{k_0(x,x)^{\frac{1}{2}}}{V(x)} \nonumber \\
& \leq & R \rho^{l'-l} \times V_+(S_j) V_-(S_j). \label{eq: use v-ergodic}
\end{eqnarray}
Note that from the symmetry $h_x(x') = h_{x'}(x)$, together with the fact that the Markov chain is reversible, the above bound holds also for $l' < l$ if $l' - l$ is replaced by $|l' - l|$.
Thus, from Jensen's inequality,
\begin{eqnarray*}
\left| \mathbb{E}[h_{Y_{j,l'}}(Y_{j,l})] - \mathbb{E}[h_{Y_{j,l'}}(Z)] \right| & = & \left| \mathbb{E} \left[ \mathbb{E}[h_{Y_{j,l'}}(Y_{j,l}) | Y_{j,l'}] - \mathbb{E}[h_{Y_{j,l'}}(Z) | Y_{j,l'}] \right] \right|
\\
& \leq & \mathbb{E} \left| \left[ \mathbb{E}[h_{Y_{j,l'}}(Y_{j,l}) | Y_{j,l'}] - \mathbb{E}[h_{Y_{j,l'}}(Z) | Y_{j,l'}] \right] \right|
\\
& \leq & R V_+(S_j) V_-(S_j)\rho^{|l' - l|} 
\end{eqnarray*}
from which it follows that
\begin{eqnarray*}
\left| \frac{1}{m_j^2} \sum_{l=1}^{m_j} \sum_{l'=1}^{m_j} \left\{ \mathbb{E}[ h_{Y_{j,l'}}(Y_{j,l}) ] - \mathbb{E}[ h_{Y_{j,l'}}(Z) ] \right\} \right| & \leq & R V_+(S_j) V_-(S_j) \frac{1}{m_j^2} \sum_{l=1}^{m_j} \sum_{l'=1}^{m_j} \rho^{|l' - l|} \\
& = & R V_+(S_j) V_-(S_j) \left[ \frac{1}{m_j} + \frac{2}{m_j} \sum_{r=1}^{m_j-1} \left( \frac{m_j-r}{m_j} \right) \rho^r \right] \\
& \leq & R V_+(S_j) V_-(S_j) \left[ \frac{1}{m_j} + \frac{2}{m_j} \sum_{r=1}^{\infty} \rho^r \right] \\
& = & R V_+(S_j) V_-(S_j) \left( \frac{1 + \rho}{1 - \rho} \right) \frac{1}{m_j} .
\end{eqnarray*}
For the second term in \eqref{eq: truncated embedding error} we use the same approach as in \eqref{eq: use v-ergodic} to obtain that
$$
\left| \mathbb{E}[h_x(Y_{j,l}) | Y_{j,0} = x_{j-1}] - \mathbb{E}[h_x(Z) | Y_{j,0} = x_{j-1}] \right| \leq R V_+(S_j) V_-(S_j) \rho^l
$$
independently of $x_{j-1}$, and hence that
\begin{eqnarray*}
\left| \frac{1}{m_j} \sum_{l = 1}^{m_j} \int \mathbb{E}[ h_x(Y_{j,l}) ] - \mathbb{E}[ h_x(Z) ] \mathrm{d}P(x) \right| & \leq & R V_+(S_j) V_-(S_j) \frac{1}{m_j} \sum_{l=1}^{m_j} \rho^l \\
& \leq & R V_+(S_j) V_-(S_j) \left( \frac{\rho}{1-\rho} \right) \frac{1}{m_j}.
\end{eqnarray*}

\paragraph{Overall Bound:}

Combining our bounds for the terms in \eqref{eq: triangle ineq h trun IID} leads to
\begin{eqnarray}
\mathbb{E}[ \|h_j\|_{\mathcal{K}_0}^2 ] & \leq & 2 \left( \| k_j^- \|_{\mathcal{K}_0}^2 + \mathbb{E}\left[ \|h_j - k_j^- \|_{\mathcal{K}_0}^2 \right] \right) \nonumber \\
& \leq & \frac{4b}{\gamma} e^{-\frac{\gamma}{2} S_j^2} + 2 R V_+(S_j) V_-(S_j) \left( \frac{1+\rho}{1-\rho} \right) \frac{1}{m_j} + 2 R V_+(S_j) V_-(S_j) \left( \frac{\rho}{1-\rho} \right) \frac{1}{m_j} \nonumber \\
& = & \frac{4b}{\gamma} e^{-\frac{\gamma}{2} S_j^2} + 2 R V_+(S_j) V_-(S_j) \left( \frac{1+2\rho}{1-\rho} \right) \frac{1}{m_j} . \label{eq: EH bound}
\end{eqnarray}
In a similar manner to the proof of Theorem \ref{thm: SP ideal MC general result} we can square \eqref{eq: thm4 result} (with, recall, $\delta = 0$) and take expectations, which combine with \eqref{eq: EH bound} to produce \eqref{eq: complicated V-UE result} with $C = \max\left\{ \frac{4b}{\gamma} , 2R \left( \frac{1+2\rho}{1-\rho} \right) \right\}$.

\end{proof}

Theorem \ref{thm: main theorem} in the main text follows as a special case of the previous result:

\begin{proof}[Proof of Theorem \ref{thm: main theorem}]
The result follows from specialising \eqref{eq: complicated V-UE result} to the case $S_i^2 = \frac{2}{\gamma} \log(n \wedge m_i)$ and $r_i = n$.
Indeed, with these settings \eqref{eq: complicated V-UE result} can be upper-bounded as follows (with $C$ playing the role of a generic constant changing from line to line):
\begin{eqnarray}
\mathbb{E} \left[ D_{\mathcal{K}_0,P}(\{x_i\}_{i=1}^n)^2 \right] & \leq & C \exp\left( \sum_{i=1}^n \frac{1}{r_i} \right) \frac{1}{n^2} \sum_{i=1}^n \left( S_i^2 + r_i e^{- \frac{\gamma}{2} S_i^2} +  V_+(S_i) V_-(S_i) \frac{r_i}{m_i} \right) \nonumber \\
& = & C \frac{1}{n^2} \sum_{i=1}^n \left( \frac{2}{\gamma} \log(n \wedge m_i) + \frac{n}{n \wedge m_i} + V_+(S_i) V_-(S_i) \frac{n}{m_i} \right) \nonumber \\
& \leq & C \frac{1}{n} \sum_{i=1}^n \left( \frac{\log(n \wedge m_i)}{n} + \frac{ V_+(S_i) V_-(S_i)}{m_i} \right)
\end{eqnarray}
as claimed.
\end{proof}

\subsection{Proof of Theorem \ref{thm: SPMCMC consistent}} \label{ap: prf mala consistency}

Our aim is to check the ergodicity preconditions of \cref{thm: main theorem} when the Metropolis-adjusted Langevin algorithm (MALA) transition kernel is employed.
To this end, we will establish $V$-uniform ergodicity of MALA for the specific choice $V(x) = 1 + \|x\|_2$.
This will imply the convergence of SP-MCMC, as motivated by the following result:

\begin{theorem}[SP-MCMC Convergence 2]
\label{thm: SPMCMC conv 2}
Suppose $k_0$ has the form \cref{eq:stein_kernel}, based on a kernel $k \in C_b^{(1,1)}$ and a target $P\in\pset$ such that $\int_{\mathcal{X}} k_0(x,\cdot) \mathrm{d}P(x) \equiv 0$. 
Let $(m_j)_{j = 1}^n \subset \mathbb{N}$ be a fixed sequence.
Consider the function $V(x) = 1 + \|x\|_2$ and consider time-homogeneous reversible Markov chains $(Y_{j,l})_{l = 1}^{m_j}$, $j \in \mathbb{N}$, generated using the same $V$-uniformly ergodic transition kernel.
Let $\{x_i\}_{i=1}^n$ denote the output of SP-MCMC. 
Then $\exists \; C>0$ such that
\begin{eqnarray*}
\mathbb{E}\left[D_{\mathcal{K}_0, P}(\{x_i\}_{i=1}^n)^2\right] \leq C \; \frac{1}{n} \sum_{i=1}^n \frac{\log(n \wedge m_i)}{n \wedge m_i} 
\end{eqnarray*}
where in each case the total expectation $\mathbb{E}$ is taken over realisations of the random sets $\mathcal{Y}_j = \{Y_{j,l}\}_{l=1}^{m_j}$, $j \in \mathbb{N}$.
\end{theorem}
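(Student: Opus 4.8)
The plan is to obtain \cref{thm: SPMCMC conv 2} directly from \cref{thm: main theorem} specialised to $V(x) = 1 + \|x\|_2$. Since this theorem already \emph{assumes} $V$-uniform ergodicity of the (time-homogeneous, reversible) transition kernel for exactly this $V$, no further mixing analysis is required at this stage; the remaining work is (i) to check the exponential-moment precondition $\E_{Z\sim P}[e^{\gamma k_0(Z,Z)}] < \infty$ and (ii) to simplify the quantities $S_i$, $V_+(S_i)$, $V_-(S_i)$ appearing in the conclusion of \cref{thm: main theorem}.

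For (i) I would first record that $k_0(x,x)$ grows at most quadratically. Evaluating \cref{eq:stein_kernel} on the diagonal expresses $k_0(x,x)$ as a sum of terms each formed from $k$ and its first derivatives---uniformly bounded because $k \in C_b^{(1,1)}$---and from $\nabla\log\tilde p(x)$; since $P \in \pset$ has Lipschitz $\nabla\log p$ we have $\|\nabla\log\tilde p(x)\|_2 \le a + b\|x\|_2$, so $k_0(x,x) \le c_1(1 + \|x\|_2^2)$ for some constant $c_1$. Next, distant dissipativity of $P$, together with continuity of $\nabla\log p$, gives $\langle \nabla\log p(x), x\rangle \le c' - \tfrac{\kappa_0}{4}\|x\|_2^2$ once $\|x\|_2$ is large, which on integrating along rays yields the sub-Gaussian tail bound $p(x) \le c_2 e^{-c_3\|x\|_2^2}$. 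Combining the two, $\E_{Z\sim P}[e^{\gamma k_0(Z,Z)}]$ splits into an integral over a fixed ball (finite by continuity) and an integral over its complement bounded by $c_2 e^{\gamma c_1}\int e^{(\gamma c_1 - c_3)\|x\|_2^2}\,\mathrm{d}x$, which is finite whenever $\gamma < c_3/c_1$. This verifies the precondition.

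Applying \cref{thm: main theorem} then yields $\E[D_{\mathcal K_0,P}(\{x_i\}_{i=1}^n)^2] \le \tfrac{C}{n}\sum_{i=1}^n\big(\tfrac{S_i^2}{n} + \tfrac{V_+(S_i)V_-(S_i)}{m_i}\big)$ with $S_i = \sqrt{2\log(n\wedge m_i)/\gamma}$. For the $V_\pm$ factors I would use the bound from step (i) in both directions. The \emph{upper} bound $k_0(x,x)^{1/2} \le \sqrt{c_1}\,(1+\|x\|_2) = \sqrt{c_1}\,V(x)$, valid for all $x$, gives $V_-(s) \le \sqrt{c_1}$ uniformly in $s$. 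For $V_+$ I would establish a matching \emph{lower} bound: in \cref{eq:stein_kernel} the term $k(x,x)\|\nabla\log\tilde p(x)\|_2^2$ dominates, since the (preconditioned) IMQ base kernel \cref{eq: IMQ kernel} has $k(x,x)\equiv 1$ while distant dissipativity forces $\|\nabla\log\tilde p(x)\|_2 \ge \tfrac{\kappa_0}{2}\|x\|_2 - a'$, so $k_0(x,x) \ge c_4\|x\|_2^2 - c_5$ for large $\|x\|_2$; hence $\{x : k_0(x,x) \le s^2\} \subseteq \{x : \|x\|_2 \le c_6 s\}$ for $s$ large, and therefore $V_+(s) \le s\,(1 + c_6 s) \le c_7 s^2$. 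Thus $V_+(S_i)V_-(S_i) \le c_7\sqrt{c_1}\,S_i^2 = O(\log(n\wedge m_i))$, and substituting, together with $\tfrac1n + \tfrac1{m_i} \le \tfrac{2}{n\wedge m_i}$, collapses the bound to the claimed $\tfrac{C}{n}\sum_{i=1}^n \tfrac{\log(n\wedge m_i)}{n\wedge m_i}$.

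The main obstacle is the two-sided quadratic control of $k_0(x,x)$, and especially the lower bound: it is this that makes $\{x : k_0(x,x)\le s^2\}$ bounded, and forces $V_+$ to grow no faster than $s^2$ rather than the $s^3$ one would get from the crude estimate $V_-(s) \le s$. It relies on the base kernel not vanishing at infinity, so that the $\|\nabla\log\tilde p\|_2^2$ term in \cref{eq:stein_kernel} cannot be cancelled by the lower-order cross terms, and on extracting the linear growth $\|\nabla\log\tilde p(x)\|_2 \gtrsim \|x\|_2$ from distant dissipativity via Cauchy--Schwarz against $x - y$. The accompanying sub-Gaussian tail estimate for $p$ used in step (i) is the other place that needs care; the rest is routine bookkeeping.
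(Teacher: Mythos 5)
Your argument is correct and takes essentially the same route as the paper's proof: both rest on two-sided quadratic control of $k_0(x,x)$ (the upper bound from $k \in C_b^{(1,1)}$ together with the Lipschitz score, the lower bound from distant dissipativity forcing $\|\nabla \log p(x)\|_2 \gtrsim \|x\|_2$), which gives $V_-(s) = O(1)$ and $V_+(s) = O(s^2)$, after which one substitutes $S_i = \sqrt{2\log(n\wedge m_i)/\gamma}$ into \cref{thm: main theorem}. The only differences are cosmetic: the paper packages the two bounds as $\frac{1}{C_0} \leq V(x)/(1+k_0(x,x)^{1/2}) \leq C_0$ rather than via sublevel-set containment, and it defers your step (i), the verification of $\E_{Z \sim P}[e^{\gamma k_0(Z,Z)}] < \infty$ via sub-Gaussianity of distantly dissipative targets, to the proof of \cref{thm: SPMCMC consistent} and \cref{lem: sub-Gaussian} instead of carrying it out inside this theorem.
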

\begin{proof}
Firstly, consider the case where $\exists C_0 > 0$ such that $\frac{1}{C_0} \leq V_0(x) := \frac{V(x)}{1 + k_0(x,x)^{1/2}} \leq C_0$.
In this situation we have that the functions $V_+$ and $V_-$ defined in Theorem \ref{thm: main theorem} satisfy $V_+(s) \leq C_0(s+s^2)$ and $V_-(s) \leq C_0$.
It therefore follows from Theorem \ref{thm: main theorem} that
\begin{eqnarray*}
\mathbb{E} \left[ D_{\mathcal{K}_0,P}(\{x_i\}_{i=1}^n)^2 \right] \; \leq \; C \frac{1}{n}  \sum_{i=1}^n \left( \frac{\log(n \wedge m_i)}{n} + \frac{ V_+(S_i) V_-(S_i)}{m_i} \right) \; \leq \; C \frac{1}{n} \sum_{i=1}^n \frac{\log(n \wedge m_i)}{n \wedge m_i} 
\end{eqnarray*}
again with $C$ a generic constant.
It is therefore sufficient to establish that $\frac{1}{C_0} \leq V_0(x) \leq C_0$ is a consequence of the $V$-uniform ergodicity with $V(x) = 1 + \|x\|_2$ that we have assumed.
The lower and upper bounds on $V_0$ are derived separately in the sequel.

\paragraph{Lower Bound:}

If $\nabla \log p$ is Lipschitz and $k(x,y) = \psi(\|x-y\|_2^2)$ with $\psi \in C^2$ (which is the case for the pre-conditioned IMQ kernel), then it can be shown that $k_0(x,x) \leq B \|x\|_2^2 + D$ for some $B$ and $D$.
Indeed, recall from \eqref{eq:stein_kernel} that
\begin{align*}
k_0(x,y) & \; = \; \nabla_x \cdot \nabla_{y} k(x,y) + \left\langle \nabla_x k(x,y) , \nabla_{y} \log p(y) \right\rangle   + \left\langle \nabla_{y} k(x,y) , \nabla_x \log p(x) \right\rangle  + k(x,y) \left\langle \nabla_{x} \log p(x) , \nabla_{y} \log p(y) \right\rangle
\end{align*}
and note $\nabla_x k(x,y) = 2(x-y) \psi'(\|x-y\|_2^2)$, thus $\nabla_x k(x,x)=0$ and $k_0(x,x) = \psi(0) \| \nabla_{x} \log p(x) \|_2^2 $. 
From Lipschitz continuity of $ \nabla \log p$, say with Lipschitz constant $C$, we have that $\| \nabla \log p(x) \|_2 \leq C \|x\|_2 + \| \nabla \log p(0) \|_2$ . Let $A=\| \nabla \log p(0) \|_2$.
Thus 
$$\| \nabla \log p(x) \|_2^2 \leq C^2 \|x\|_2^2 + 2AC \|x\|_2 +A^2.$$
For $\| x \|_2 \leq 1$ we have that
$ C^2 \|x\|_2^2 + 2AC \|x\|_2 +A^2 \leq  C^2 \|x\|_2^2 + 2AC +A^2$, and for $\| x\|_2 \geq 1$ we have that
$ C^2 \|x\|_2^2 + 2AC \|x\|_2 +A^2 \leq  (C^2+2AC) \|x\|_2^2 +A^2$. Hence for all $x$, $\exists D,B$ with
\begin{align}
k_0(x,x) \leq B \| x \|_2^2 +D \label{eq: quad bound on k0}
\end{align}
as claimed.
This result implies that 
$$
\sup_{x \in \mathcal{X}} V_0(x)^{-1} = \sup_{x \in \mathcal{X}} \frac{1 + k_0(x,x)^{\frac{1}{2}}}{V(x)} \leq \sup_{x \in \mathcal{X}} \frac{1 + \sqrt{B \|x\|_2^2 + D}}{1 + \|x\|_2} < \infty .
$$

\paragraph{Upper Bound:}

For the converse direction we make use of the distant dissipativity assumption.
Recall that this implies
\begin{eqnarray*}
\kappa(\|x-y\|_2) \|x-y\|_2^2 \; \leq \; -2 \langle \nabla \log p(x) - \nabla \log p(y) , x - y \rangle
\end{eqnarray*}
and thus, setting $y = 0$, taking the absolute value on the right hand side and using the Cauchy-Schwarz inequality,
\begin{eqnarray*}
\kappa(\|x\|_2) \|x\|_2^2 \; \leq \; 2 | \langle \nabla \log p(x) - \nabla \log p(0) , x \rangle | \; \leq \; 2 \|\nabla \log p(x) - \nabla \log p(0)\|_2 \|x\|_2 .
\end{eqnarray*}
Rearranging, and using the triangle inequality,
\begin{eqnarray*}
\kappa(\|x\|_2) \|x\|_2 \; \leq \; 2 \|\nabla \log p(x) - \nabla \log p(0)\|_2  \; \leq \; 2 \| \nabla \log p(x) \|_2 + 2 \| \nabla \log p(0) \|_2
\end{eqnarray*}
and rearranging again,
\begin{eqnarray*}
- 2 \| \nabla \log p(0) \|_2 + \kappa(\|x\|_2) \|x\|_2 \; \leq \; 2 \| \nabla \log p(x) \|_2 .
\end{eqnarray*}
Now, since $\kappa_0 = \lim \inf_{r \rightarrow \infty} \kappa(r) > 0$, there $\exists R$ such that, for all $\|x\|_2 > R$, $\kappa(\|x\|_2) \geq \frac{\kappa_0}{2} > 0$ and hence, for $\|x\|_2 > R$,
\begin{eqnarray*}
- 2 \| \nabla \log p(0) \|_2 + \frac{\kappa_0}{2} \|x\|_2 & \leq & 2 \| \nabla \log p(x) \|_2 ,
\end{eqnarray*}
where we are free to additionally assume that 
\begin{eqnarray}
R > 1 + \frac{8}{\kappa_0} \|\nabla \log p(0) \|_2 . \label{eq: R big enough}
\end{eqnarray}
Since $\|x\|_2 > R$, it follows that
\begin{eqnarray*}
- 2 \| \nabla \log p(0) \|_2 + + \frac{\kappa_0}{4} R +  \frac{\kappa_0}{4} \|x\|_2 & \leq & 2 \| \nabla \log p(x) \|_2 
\end{eqnarray*}
and from \eqref{eq: R big enough} we further deduce that
\begin{eqnarray*}
1 + \|x\|_2 & \leq & \frac{8}{\kappa_0} \| \nabla \log p(x) \|_2 
\end{eqnarray*}
for all $\|x\|_2 > R$. 
Thus
\begin{eqnarray*}
\sup_{x \in \mathcal{X}} V_0(x) \; = \; \sup_{x \in \mathcal{X}} \frac{V(x)}{1 + k_0(x,x)^{\frac{1}{2}}} & \leq & \sup_{\|x\|_2 \leq R} \frac{1 + \|x\|_2}{1 + k_0(x,x)^{\frac{1}{2}}} + \sup_{\|x\|_2 > R} \frac{1 + \|x\|_2}{1 + k_0(x,x)^{\frac{1}{2}}} \\
& \leq & (1 + R) + \sup_{\|x\|_2 > R} \frac{\frac{8}{\kappa_0} \|\nabla \log p(x) \|_2}{1 + \psi(0)^{1/2} \|\nabla \log p(x) \|_2} \; < \; \infty
\end{eqnarray*}
as required.
\end{proof}

The implication of Theorem \ref{thm: SPMCMC conv 2} is that we can seek to establish $V$-uniform ergodicity of MALA in the case $V(x) = 1 + \|x\|_2$.
To this end, we present Lemmas \ref{lem: Equivalent Ergodicity}, \ref{lem: SE ergo ULA}, Proposition \ref{prop: SE ergo ULA} and Theorem \ref{thrm-squared-exp-ergod} next:

\begin{lemma}\label{lem: Equivalent Ergodicity} Let $U, V: \mathcal X \to [1,\infty)$ be functions such that $V<cU$ and $U<aV$ for $c,a >0$. Then a Markov chain is $U$-uniformly ergodic if and only if it is $V$-uniformly ergodic.
\end{lemma}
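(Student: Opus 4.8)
The plan is to reduce $V$-uniform ergodicity to a statement about the norm $\|\cdot\|_V$ and then exploit the fact that, under the hypotheses, the norms $\|\cdot\|_U$ and $\|\cdot\|_V$ on functions (and hence on signed measures) are equivalent up to multiplicative constants. First I would record the pointwise consequence of the assumptions: from $V < cU$ and $U < aV$ one has $\frac{1}{a}U(x) \leq V(x) \leq cU(x)$ for every $x \in \mathcal{X}$ (the strictness is inessential). Dividing through by $V(x)$ and by $U(x)$ and taking suprema over $x \in \mathcal{X}$ in the definition $\|f\|_V = \sup_{x} |f(x)|/V(x)$ immediately yields $\frac{1}{c}\|f\|_U \leq \|f\|_V \leq a\|f\|_U$ for every measurable $f : \mathcal{X} \to \mathbb{R}$.

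Next I would dualise this to the measure norms. Since $\|f\|_V \leq 1$ forces $\|f\|_U \leq c\|f\|_V \leq c$, the unit ball of $\|\cdot\|_V$ is contained in $c$ times the unit ball of $\|\cdot\|_U$, so $\|\mu\|_V \leq c\|\mu\|_U$ for every signed measure $\mu$ on $\mathcal{X}$; symmetrically, $\|f\|_U \leq 1$ forces $\|f\|_V \leq a$, giving $\|\mu\|_U \leq a\|\mu\|_V$. Hence $\frac{1}{a}\|\mu\|_U \leq \|\mu\|_V \leq c\|\mu\|_U$.

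Finally I would substitute into the definition of ergodicity. Assume the chain is $V$-uniformly ergodic, so that $\|\mathrm{P}^n(y,\cdot) - P\|_V \leq R\,V(y)\,\rho^n$ for all $y \in \mathcal{X}$ and all $n \in \mathbb{N}$, for some $R \in [0,\infty)$ and $\rho \in (0,1)$. Applying the lower norm equivalence to $\mu = \mathrm{P}^n(y,\cdot) - P$ and then the bound $V(y) \leq cU(y)$ gives $\|\mathrm{P}^n(y,\cdot) - P\|_U \leq a\|\mathrm{P}^n(y,\cdot) - P\|_V \leq aR\,V(y)\,\rho^n \leq (acR)\,U(y)\,\rho^n$, so the chain is $U$-uniformly ergodic with the same rate $\rho$ and constant $acR$. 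The converse implication follows verbatim after interchanging the roles of $U$ and $V$ (and of $a$ and $c$).

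There is essentially no obstacle here: this is a routine change-of-Lyapunov-function/norm-equivalence argument. The only points requiring a moment's care are the direction of the inclusions of unit balls when passing to the dual norm, and the harmless replacement of the strict hypotheses by non-strict pointwise inequalities; I would also remark in passing that the normalisation $U,V \geq 1$ plays no role in the argument, only positivity and comparability, but I would retain the stated hypotheses for consistency with the rest of the section.
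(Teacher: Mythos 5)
Your proof is correct and follows essentially the same route as the paper: the pointwise comparison gives the equivalence $\tfrac{1}{c}\|f\|_U \leq \|f\|_V \leq a\|f\|_U$, which dualises to the measure norms and is then substituted into the ergodicity bound, with the converse by symmetry. The only cosmetic difference is that you argue the $V\Rightarrow U$ direction explicitly while the paper argues $U\Rightarrow V$, which is immaterial.
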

\begin{proof}
Suppose $\exists R_U \in [0,\infty), \rho_U \in (0,1)$ such that 
$$ \| \mathrm{P}^n(y,\cdot)-P \|_{U} \leq R_U U(y) \rho_U^n$$ for all initial states $y \in \mathcal{X}$.
From the definition of the $V$-norm, we have that $\frac1 c \| f \|_U \leq \| f \|_V \leq a \| f \|_U$, and moreover
$$
\|\mu \|_V = \sup_{\| f\|_V \neq 0} \frac{| \mu f |}{\|f \|_V} \leq  c  \|\mu \|_U.
$$
Together, these imply that 
$$
\| \mathrm{P}^n(y,\cdot)-P \|_{V} \leq  c \| \mathrm{P}^n(y,\cdot)-P \|_{U} \leq c R_U U(y) \rho_U^n \leq  acR_U V(y) \rho_U^n = R_V V(y) \rho_V^n
$$
where $R_V := a c R_U \in [0,\infty)$ and $\rho_V = \rho_U \in (0,1)$.
Thus $U$-uniform ergodicity implies $V$-uniform ergodicity. 
The converse result follows by symmetry.
\end{proof}

The next Lemma concerns the unadjusted Langevin algorithm (ULA), whose proposal distribution is identical to MALA, but the acceptance/rejection step is not perfomed \citep{Roberts1996}.
As such, ULA does not leave $P$ invariant but leaves a different distribution, which we denote $\tilde{P}$, invariant.

\begin{lemma}[Properties of ULA Proposal Distribution] 
\label{lem: SE ergo ULA}
Suppose $P \in \pset$ and $\mathcal{X} = \mathbb{R}^d$. 
Let $c(x) := x + \frac h 2 \nabla \log p(x)$.
Then $\exists R, h_0, \kappa > 0$ such that, for $\|x\|_2 > R$, it holds that
\begin{align}
\|c(x)\|_2^2 < \Big( 1 - \frac{\kappa h}{2} \Big) \|x\|_2^2 \label{eq: cx bound}
\end{align}
whenever the step size $h$ satisfies $h < h_0$.
Moreover let $Y$ be distributed as the MALA proposal distribution starting from $x \in \mathcal{X}$, namely $Y \stackrel{d}{=} c(x) + \sqrt{h}Z$, where $h > 0$ and $Z \sim \mathcal{N}(0,I)$ where $I$ is a $d \times d$ identity matrix.
Then $\exists R_2, h_0, \kappa_2 > 0$ such that for $\|x\|_2 > R_2$, $s < 1 / 2h$, it holds that
$$
\E\left[ \exp\left(s \| Y \|_2^2\right) \right] 
\; < \;
 \frac{1}{(1-2sh)^{d/2}} \exp\left( \left[ \frac{1 - \frac{\kappa_2}{2} }{1 - 2 s h} \right] s \| x \|_2^2 \right) 
$$
whenever the step size $h$ satisfies $h < h_0$.
Let $A(h) := \kappa_2h/(4 - \kappa_2 h)$.
Then, furthermore, if $s < \min \left\{ 1/2h, \kappa_2/8 \right\}$, then
$$
\E\left[ \exp\left(s \| Y \|_2^2\right) \right] 
\; < \;
 \frac{4^{d/2}}{(4-\kappa_2 h)^{d/2}} \exp\left([1 - A(h)] s \| x \|_2^2 \right) 
$$
and $A(h) \in (0,1)$ whenever the step size $h$ satisfies $h < \min \left\{ h_0 , 2/\kappa_2 \right\}$.
\end{lemma}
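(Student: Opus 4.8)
The plan is to handle the three displays in order, with essentially all of the work in the first. For \eqref{eq: cx bound} I would expand $\|c(x)\|_2^2 = \|x\|_2^2 + h\langle x,\nabla\log p(x)\rangle + \tfrac{h^2}{4}\|\nabla\log p(x)\|_2^2$ and bound the two correction terms using the two hypotheses on $P$. For the cross term, applying distant dissipativity with second argument $0$, exactly as in the proof of Theorem \ref{thm: SPMCMC conv 2}, gives $\kappa(\|x\|_2)\|x\|_2^2 \le -2\langle\nabla\log p(x)-\nabla\log p(0),x\rangle$, and since $\kappa_0 = \liminf_{r\to\infty}\kappa(r) > 0$ there is an $R_0$ with $\kappa(\|x\|_2) \ge \kappa_0/2$ for $\|x\|_2 > R_0$; combined with Cauchy--Schwarz this yields $\langle x,\nabla\log p(x)\rangle \le -\tfrac{\kappa_0}{4}\|x\|_2^2 + A\|x\|_2$ with $A := \|\nabla\log p(0)\|_2$. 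For the quadratic term, Lipschitz continuity of $\nabla\log p$ gives $\|\nabla\log p(x)\|_2^2 \le C^2\|x\|_2^2 + 2AC\|x\|_2 + A^2$ as in the derivation of \eqref{eq: quad bound on k0}. Substituting and factoring out $\|x\|_2^2$, the resulting multiplicative factor is $1 - \tfrac{\kappa_0 h}{4} + \tfrac{h^2C^2}{4} + (\text{terms of order }1/\|x\|_2)$; choosing $h_0$ small enough that $\tfrac{h^2C^2}{4} \le \tfrac{\kappa_0 h}{16}$ for all $h < h_0$, and then $R$ large enough that the $1/\|x\|_2$ terms contribute at most $\tfrac{\kappa_0 h}{16}$ for $\|x\|_2 > R$, makes this factor strictly below $1 - \tfrac{\kappa_0 h}{8}$, which gives \eqref{eq: cx bound} with $\kappa := \kappa_0/4$.

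The second display then follows by combining \eqref{eq: cx bound} with the exact Gaussian moment generating function. The MALA proposal from $x$ is $Y \sim \mathcal N(c(x), hI_d)$, so its coordinates $Y_i \sim \mathcal N(c(x)_i, h)$ are independent, and the identity $\E[e^{sW^2}] = (1-2s\sigma^2)^{-1/2}\exp\big(\tfrac{s\mu^2}{1-2s\sigma^2}\big)$ for $W \sim \mathcal N(\mu,\sigma^2)$ and $s < 1/(2\sigma^2)$, multiplied over the $d$ coordinates, gives $\E[\exp(s\|Y\|_2^2)] = (1-2sh)^{-d/2}\exp\big(\tfrac{s\|c(x)\|_2^2}{1-2sh}\big)$ for $s < 1/(2h)$. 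Setting $R_2 := R$ and $\kappa_2 := \kappa$ and using $\|c(x)\|_2^2 < (1-\kappa_2 h/2)\|x\|_2^2$ for $\|x\|_2 > R_2$, $h < h_0$ — together with $s/(1-2sh) > 0$ — yields the stated bound with exponent coefficient $(1-\kappa_2 h/2)/(1-2sh)$.

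For the third display I would strengthen the restriction on $s$ to $s < \min\{1/(2h),\kappa_2/8\}$ and re-estimate the two $h$-dependent quantities from the previous step. From $s < \kappa_2/8$ we get $2sh < \kappa_2 h/4$, hence $1-2sh > (4-\kappa_2 h)/4 > 0$; this bounds the prefactor $(1-2sh)^{-d/2} < \big(4/(4-\kappa_2 h)\big)^{d/2}$ and the exponent coefficient $\tfrac{1-\kappa_2 h/2}{1-2sh} < \tfrac{1-\kappa_2 h/2}{1-\kappa_2 h/4} = \tfrac{2(2-\kappa_2 h)}{4-\kappa_2 h} = 1 - A(h)$. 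Substituting both into the second display gives the claimed inequality, and finally $A(h) = \kappa_2 h/(4-\kappa_2 h) > 0$ iff $\kappa_2 h < 4$ while $A(h) < 1$ iff $\kappa_2 h < 4 - \kappa_2 h$, i.e.\ $h < 2/\kappa_2$, so $A(h) \in (0,1)$ for $h < \min\{h_0, 2/\kappa_2\}$.

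The main obstacle is the first display: one must set up the accounting so that the beneficial $-\tfrac{\kappa_0 h}{4}\|x\|_2^2$ term from distant dissipativity dominates both the $O(h^2\|x\|_2^2)$ term (which forces $h$ small, quantified against $\kappa_0$ and the Lipschitz constant $C$) and the lower-order-in-$\|x\|_2$ terms (which forces $\|x\|_2$ large), keeping the two thresholds $R$ and $h_0$ mutually consistent. Once \eqref{eq: cx bound} is in hand, the remaining two displays are bookkeeping with the explicit Gaussian MGF.
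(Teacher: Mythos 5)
Your proof is correct and follows essentially the same route as the paper: expand $\|c(x)\|_2^2$, use distant dissipativity (with second argument $0$) to make the cross term negative at leading order, use Lipschitz continuity of $\nabla\log p$ to control the quadratic term, and then combine \eqref{eq: cx bound} with the Gaussian moment generating function (equivalently, the noncentral chi-squared MGF used by the paper) for the two exponential-moment bounds. The only differences are cosmetic bookkeeping — you keep the lower-order-in-$\|x\|_2$ terms explicit and suppress them by taking $R$ large, whereas the paper folds $\|\nabla\log p(0)\|_2$ into a modified Lipschitz constant $\tilde L$ valid for $\|x\|_2 > R_2$ — and your version of the second display has the (correct) coefficient $1-\kappa_2 h/2$, whereas the lemma statement in the paper has a typo omitting the factor $h$.
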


\begin{proof}

We expand 
\begin{align}
\|c(x) \|_2^2
\; = \;
 \left\langle x+ \frac{h}{2}\nabla \log p(x),x+\frac{h}{2}\nabla \log p(x) \right\rangle
\; = \;
 \| x\|_2^2 +h \left\langle x,\nabla \log p(x) \right\rangle+\frac{h^2}{4} \left\| \nabla \log p(x)\right\|_2^2 . \label{eq: c expand}
\end{align}
Since $P$ is distantly dissipative, we know for any $r > 0$ and any $x$ with $\| x \|_2 =r$, 
$$
\kappa(r)
 \; \leq \;
 -2 \frac{\langle \nabla \log p(x)-\nabla \log p(0),x \rangle}{\| x\|_2^2}.
$$ 
Moreover, since $\kappa_0 :=\lim \inf_{r \rightarrow \infty} \kappa(r) >0$, $\exists R_1 > 0$ and $\kappa_1 \in (0,\kappa_0)$ such that for all $\|x\|_2 > R_1$, we have $\kappa(\|x\|_2) \geq \kappa_1$.
Thus for any $\|x\|_2 > R_1$ we have that
\begin{align*}
\kappa_1 \| x\|_2^2 
& \; \leq \; -2 \langle \nabla \log p(x)-\nabla \log p(0),x \rangle \\
& \; = \; - \langle \nabla \log p(x),x \rangle + \langle \nabla \log p(0),x \rangle \leq - \langle \nabla \log p(x),x \rangle + \| \nabla \log p(0)\|_2 \|x \|_2   
\end{align*}
Let $\kappa_2 \in (0, \kappa_1)$, so that for $\| x \|_2 > \| \nabla \log p(0) \|_2 /\kappa_1 -\kappa_2$, we have $\kappa_2 \| x \|^2_2 \leq \kappa_1 \| x \|_2^2 - \| \nabla \log p(0) \| \|x \|_2$. 
Hence, for $\| x \|_2 > R_2 := \max\left\{ R_1 , \| \nabla \log p(0) \|_2 /\kappa_1 -\kappa_2 \right\}$, we have that 
$$ 
\kappa_2 \| x \|^2 
\; \leq \; - \langle \nabla \log p(x),x \rangle .
$$
Since $\nabla \log p$ is Lipschitz, there exists a constant $L$ such that for all $x$ we have $\| \nabla \log p(x) - \nabla \log p(0) \|_2 \leq L \|x\|_2$.
It follows that for all $\|x\|_2 > R_2$,
$$
\| \nabla \log p(x) \|_2 
\; \leq \;
\underbrace{\left( L + \frac{\|\nabla \log p(0) \|_2}{R_2} \right)}_{=: \tilde{L}} \|x\|_2 .
$$
Hence for $\| x \|_2 > R_2$ we have, from \eqref{eq: c expand} and the bounds just obtained, that $\|c(x)\|_2^2 \leq  \| x \|_2^2 - h \kappa_2 \| x\|_2^2 + (h^2/4) \tilde{L}^2 \|x \|_2^2$. 
For $h < h_0 := 2 \kappa_2 / \tilde{L}^2$ we have that $h \kappa_2)/2 > (h^2/4) \tilde{L}^2 $, so that $1 - h \kappa_2 + \frac{h^2}{4} \tilde{L}^2 < 1 - (\kappa_2 h)/2$ and therefore have that $\|c(x) \|_2^2 < \left( 1 - \kappa_2 h/2 \right) \| x \|_2^2$. 
The first part of the Lemma is now established.

For the second part,
In this theorem, we consider the proposal distribution of MALA which is an unadjusted Langevin algorithm (ULA). First note that
\begin{align}
\E\left[ \exp\left(s \| Y \|_2^2\right) \right] 
\; = \;
 \E \left[ \exp\left(s \left\| \sqrt{h}Z+c(x) \right\|_2^2\right) \right] 
 \; = \;
  \E \left[ \exp\left(sh \left\| Z+ \frac{c(x)}{\sqrt{h}}\right \|_2^2\right) \right]
 \; = \;
  \E \left[ \exp\left( sh W^2 \right) \right] , \label{eq: Y and W RVs}
\end{align}
where $W := \| Z+ c(x)/\sqrt{h} \|_2^2$ is a non-central chi-squared random variable with non-centrality parameter $\lambda = \|c(x)\|_2^2/h$ and degrees of freedom $d$. 
The last expression is recognised as the moment generating function $M_W(t) = \mathbb{E}[e^{tW}]$ of $W$, evaluated at $t = sh$. 
Recall that $M_W(t) = (1/(1-2t)^{d/2}) \exp(\lambda t/1 - 2t)$, valid for $2t < 1$ \citep[Sec. 26.4.25 of][]{Abramowitz1972}.
Hence, for $2sh < 1$ we have that
$$ 
\E\left[ \exp\left(s \| Y \|_2^2\right) \right]
\; = \; 
\frac{1}{(1-2sh)^{d/2}} \exp\left (\frac{s \| c(x) \|_2^2}{1 - 2 s h} \right) .
$$

We then observe that if additionally $s < \kappa_2 / 8$ and $h < 2/ \kappa_2$ then 
$$
\frac{1 - \frac{\kappa_2}{2}h}{1 - 2 s h} 
\; < \; 
\frac{1 - \frac{\kappa_2}{2}h}{1 - \frac{\kappa_2}{4} h} = 1 - A(h),
 \qquad
 \frac{1}{(1 - 2 s h)^{d/2}}
   \; < \;
 \frac{4^{d/2}}{(4 - \kappa_2 h)^{d/2}}, 
 \qquad  \text{and} \qquad 
 A(h) \; = \; \frac{\kappa_2 h}{4 - \kappa_2 h} < 1
$$
as required.
\end{proof}

\begin{proposition}[$U_s$-Uniform Ergodicity of ULA] 
\label{prop: SE ergo ULA}
Suppose $P \in \pset$ and $\mathcal{X} = \mathbb{R}^d$. 
The one-step transition kernel $\mathrm{P} = \mathrm{P}^1$ of ULA satisfies
\begin{align}\label{eq:ula-drift}
\mathrm{P}U_s(x) \leq \tau_1 U_s(x) + \tau_0
\end{align}
for some $\tau_1 < 1$ and $\tau_0 \in \mathbb{R}$, for each of $U_s(x) = \exp(s\|x\|_2)$ (any $s > 0$), $\exp(s\|x\|_2^2)$ (some $s > 0$), and $U_s(x) = 1 + \|x\|_2^s$ ($s \in \{1,2\}$).
Thus ULA is $U_s$-uniformly ergodic for its invariant distribution $\tilde{P}$ for each of these $U_s(x)$.
\end{proposition}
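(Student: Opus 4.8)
The plan is to establish, for each candidate Lyapunov function $U_s$, a geometric (Foster--Lyapunov) drift inequality $\mathrm{P}U_s(x) \le \tau_1 U_s(x) + \tau_0$ with $\tau_1 < 1$ and $\tau_0 < \infty$, and then invoke the standard equivalence between such a drift toward a small set and $V$-uniform ergodicity \citep[Thm.~16.0.1]{Meyn2012}. The ancillary regularity hypotheses are immediate for ULA: its one-step proposal density $y \mapsto (2\pi h)^{-d/2}\exp(-\|y - c(x)\|_2^2/(2h))$, with $c(x) = x + \tfrac h2 \nabla\log p(x)$, is strictly positive and jointly continuous, so the chain is Lebesgue-irreducible and aperiodic, its transition density is bounded below on any compact set (hence compact sets are small), and the sublevel sets $\{U_s \le M\}$ are compact since $U_s(x)\to\infty$ as $\|x\|_2\to\infty$. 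Once the drift holds with $\tau_1<1$, the chain is positive Harris recurrent, so a unique invariant law $\tilde P$ exists and $V$-uniform ergodicity (with $\tilde P(U_s)<\infty$) follows.

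For $U_s(x) = \exp(s\|x\|_2^2)$ the drift is essentially a restatement of Lemma~\ref{lem: SE ergo ULA}: for $s < \min\{1/(2h),\kappa_2/8\}$ and $h < \min\{h_0, 2/\kappa_2\}$ one has $\mathrm{P}U_s(x) = \E[\exp(s\|Y\|_2^2)] \le c\,U_s(x)^{1-A(h)}$ for $\|x\|_2 > R_2$, with $c$ an explicit constant and $A(h)\in(0,1)$. Using $u^{1-A(h)} \le \varepsilon u + C_\varepsilon$ (valid for $u\ge 1$, any $\varepsilon>0$) and choosing $\varepsilon$ so that $c\varepsilon<1$ yields the drift on $\{\|x\|_2 > R_2\}$; on the complementary compact region $\mathrm{P}U_s(x) = (1-2sh)^{-d/2}\exp(s\|c(x)\|_2^2/(1-2sh))$ is continuous, hence bounded, and is absorbed by enlarging $\tau_0$.

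For $U_s(x)=\exp(s\|x\|_2)$ (any $s>0$) and $U_s(x) = 1 + \|x\|_2^s$ ($s\in\{1,2\}$) I would use $\|Y\|_2 \le \|c(x)\|_2 + \sqrt h\,\|Z\|_2$ together with $\|c(x)\|_2 < \sqrt{1-\kappa_2 h/2}\,\|x\|_2 \le (1-\kappa_2 h/4)\|x\|_2$ for $\|x\|_2 > R$, which is the first part of Lemma~\ref{lem: SE ergo ULA}. In the exponential case this gives $\mathrm{P}U_s(x) \le \E[\exp(s\sqrt h\,\|Z\|_2)]\,U_s(x)^{1-\kappa_2 h/4}$, the leading constant being finite for every $s$ by Gaussian tail bounds, and one concludes as above; in the polynomial cases $\mathrm{P}U_2(x) = 1 + \|c(x)\|_2^2 + hd$ and $\mathrm{P}U_1(x) \le 1 + \|c(x)\|_2 + \sqrt h\,\E[\|Z\|_2]$, so taking $\tau_1 = 1 - \kappa_2 h/2$ (resp.\ $\tau_1 = \sqrt{1-\kappa_2 h/2}$) with a large enough $\tau_0$ handles $\|x\|_2 > R$ while continuity handles the compact remainder. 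The only genuinely delicate point I anticipate is the exponential-quadratic case, where one must keep $2sh < 1$ throughout so the moment generating function of the underlying noncentral $\chi^2$ variable stays finite — which is exactly what Lemma~\ref{lem: SE ergo ULA} secures; every other step is routine, and the passage from the drift condition to $V$-uniform ergodicity is entirely standard.
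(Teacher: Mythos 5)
Your proof is correct and follows the same strategy as the paper: establish the geometric Foster--Lyapunov drift \(\mathrm{P}U_s \le \tau_1 U_s + \tau_0\) with \(\tau_1 < 1\) on the complement of a compact set, absorb the compact piece into \(\tau_0\) using boundedness of \(\mathrm{P}U_s\) there, and invoke standard irreducibility/aperiodicity/small-set facts for ULA to conclude \(U_s\)-uniform ergodicity. The drift computations for \(\exp(s\|x\|_2)\), \(\exp(s\|x\|_2^2)\) (via the noncentral \(\chi^2\) moment generating function from the preceding lemma) and \(1+\|x\|_2^2\) match the paper's line-by-line. The one place you diverge is \(U_1(x) = 1 + \|x\|_2\): you prove the drift directly via \(\E[\|Y\|_2] \le \|c(x)\|_2 + \sqrt{h}\,\E[\|Z\|_2]\) and \(\|c(x)\|_2 < \sqrt{1-\kappa_2 h/2}\,\|x\|_2\), whereas the paper derives it indirectly from the \(U_2\)-case via Lemma 15.2.9 of Meyn--Tweedie (square-root of a Lyapunov function is again Lyapunov) together with the norm-equivalence Lemma \ref{lem: Equivalent Ergodicity}. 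Your direct route is slightly more self-contained and avoids appealing to that abstract lemma; both are valid.
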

\begin{proof}
The strategy of the proof is to establish the geometric drift condition
\begin{align}
\limsup_{\|x\|_2 \rightarrow \infty} \frac{ \mathrm{P} U_s(x)}{U_s(x)} < 1 \label{eq: geom drift cond}
\end{align}
for each of the functions $U_s$ given in the statement.
Let $c(x) := x + \frac h 2 \nabla \log p(x)$ and consider the ULA density at a given point $x$, defined as 
$$
q(x,y) := \frac{1}{(2\pi h)^{k/2}} \exp\Big( -\frac 1 {2h}\|y-c(x)\|_2^2\Big).
$$
Then we have that
\begin{align*}
\gamma(x) & \; := \; U_s(x)^{-1} \int_{\mathcal{X}} q(x,y) U_s(y) \mathrm{d}y\\
 & \;= \;
 \frac{U_s(x)^{-1}}{(2\pi h)^{k/2}} \int  \exp\Big( -\frac 1 {2h}\|y-c(x)\|_2^2\Big) U_s(y) \mathrm{d}y \\
& \; = \;
 \frac{1}{(2\pi h)^{k/2}} U_s(x)^{-1}\int  \exp\Big( -\frac 1 {2h}\|y\|_2^2\Big) U_s \big(y+c(x)\big) \mathrm{d}y \\
& \; = \;
 \frac{1}{(2\pi)^{k/2}}U_s(x)^{-1}\int  \exp\Big( -\frac 1 2 \|y\|_2^2\Big) U_s \big(\sqrt{h}y+c(x) \big) \mathrm{d}y 
\; = \;
 U_s(x)^{-1} \E\left[ U_s(Y) \right] 
\end{align*}
where $Y$ is the random variable defined in the statement of \cref{lem: SE ergo ULA}; c.f. \eqref{eq: Y and W RVs}.
Now we consider each of the functions $U_s(x) = \exp(s\|x\|_2)$, $\exp(s\|x\|_2^2)$ and $1 + \|x\|_2^2$ in turn (the case $1 + \|x\|_2$ will be treated separately at the end):
\begin{itemize}

\item For $U_s(x) = \exp(s\|x\|_2)$, let $\tilde{Z} = h^{\frac{1}{2}}Z$ where $Z \sim \mathcal{N}(0,I)$, so that
\begin{align}
\gamma(x) = \frac{\E[\exp(s\|Y\|_2)]}{\exp(s\|x\|_2)} = \frac{\E[\exp(s\|\tilde{Z}+c(x)\|_2)]}{\exp(s\|x\|_2)} &\leq \frac{\E[\exp(s\|\tilde{Z}\|_2)] \exp(s\|c(x)\|_2)}{\exp(s\|x\|_2)} \label{eq: s1 triangle ineq} \\
& \leq \E[\exp(s\|\tilde{Z}\|_2)] \exp \Big( \Big( \sqrt{1- \frac{\kappa h}{2}} - 1 \Big) s\|x\|_2 \Big) \label{eq: finish s1 case}
\end{align}
where we have used the triangle inequality in \eqref{eq: s1 triangle ineq} and we have used \eqref{eq: cx bound} from \cref{lem: SE ergo ULA} to obtain \eqref{eq: finish s1 case}.
The final bound goes to zero as $\|x\|_2 \rightarrow \infty$, since a Gaussian has finite exponential moments, and thus the geometric drift condition is satisfied.

\item For $U_s(x) = \exp( s \| x \|_2^2  )$, from the conclusion of Lemma \ref{lem: SE ergo ULA} we have that, with $A(h) := \kappa_2h/(4 - \kappa_2 h)$,
\begin{align*}
\gamma(x) \; < \; \frac{4^{d/2}}{(4-\kappa_2 h)^{d/2}} \exp\left(-s\|x\|_2^2\right) \exp\left([1 - A(h)] s \| x \|_2^2 \right)  \; = \; \frac{4^{d/2}}{(4-\kappa_2 h)^{d/2}} \exp\left(- A(h) s \| x \|_2^2 \right) 
\end{align*}
where $A(h) \in (0,1)$.
It is therefore clear that for $\|x\|_2$ sufficiently large we have $\gamma(x) < 1$, so that the geometric drift condition is satisfied.

\item For $U_s= 1+ \| x \|^2$, we have 
 \begin{align}
 \gamma(x) \;= \; \frac{ 1+\mathbb E \big[ \big\|Z\sqrt{h}+c(x) \big\|_2^2\big]}{(1+\|x\|_2^2)} \;= \;	
      \frac{ 1+h\mathbb E \big[ \big\|Z\|_2^2\big] +\|c(x)\|_2^2 + 2\sqrt h \mathbb E\big[\langle Z, c(x) \rangle\big] }{(1+\|x\|_2^2)} \nonumber
\end{align}
where $Z \sim \mathcal{N}(0,\mathrm{I})$.
Moreover $\mathbb E\big[\langle Z, c(x) \rangle\big] = \left\langle\mathbb E\big[ Z\big], c(x) \right\rangle =0$, and from Lemma \ref{lem: SE ergo ULA}, $\exists R, h_0, \kappa > 0$ such that for $\|x\|_2 > R$, it holds that
$
 \|c(x)\|_2^2 < \big( 1 - \frac{\kappa h}{2} \big) \| x \|_2^2
$ for $h< h_0$.
Hence
$$ \gamma(x) <\frac{ 1+h\mathbb E \big[ \|Z\|_2^2\big] +\big( 1 - \frac{\kappa h}{2} \big) \| x \|_2^2 }{1+\|x\|_2^2} $$
and for $\|x\|_2$ sufficiently large we have $\gamma(x) < 1$, and the geometric drift condition is satisfied.

\end{itemize}

Thus for $\| x\|>R$ and appropriate $h,s$, there exists $\tau_1 \in (0,1)$ s.t., 
$ \mathrm{P} U_s(x) \leq \tau_1 U_s(x)$, and since $\mathrm{P}U_s$ bounded on the compact set $C=\{ x \in \R^d : \| x \|_2 \leq R\}$, there exists $\tau_0 \in \R$ s.t.,
$ \mathrm{P} U_s(x) \leq \tau_1 U_s(x)+ \tau_0 \mathbb I_C(x)$   for all $x$,
where $\mathbb I$ is the indicator function.
Thus by section 3.1 of \cite{Roberts1996}, the chain is $U_s$-uniformly ergodic for each of $U_s(x) = \exp(s \|x\|_2^2)$, $\exp(s\|x\|_2)$ and $1 + \|x\|_2^2$. 

The remaining case to establish is $U_s$-uniform ergodicity for $U_s(x) = 1 + \|x\|_2^s$ and $s = 1$.
For this, we leverage the fact that $U$-uniform ergodicity implies $\sqrt{U}$-uniform ergodicity by Lemma 15.2.9 \cite{Meyn2012}.
The stated result will then follow from Lemma \ref{lem: Equivalent Ergodicity}, since for some $c > 0$, $\frac{1}{c} U_1(x) \leq \sqrt{U_2(x)} \leq c U_1(x)$.
\end{proof}

Our theoretical analysis now focuses on the MALA transition kernel, which is precisely defined in Appendix \ref{ap: mala explain}.
In what follows, as in the main text, let $q(x,\cdot)$ be a density for the proposal distribution of MALA, starting from the state $x$, and let 
$$
\alpha(x,y) \; := \; \min\left\{ 1 , \frac{p(y) q(y,x)}{p(x) q(x,y)} \right\} 
$$ 
denote the MALA acceptance probability for moving from $x$ to $y$, given that $y$ has been proposed.
As in the main text, we let $A(x) = \{y \in \mathcal{X} : \alpha(x,y) = 1 \}$ denote the region where proposals are always accepted and let $R(x) = \mathcal{X} \setminus A(x)$.
Let $I(x) := \{y : \|y\|_2 \leq \|x\|_2 \}$ represent the set of points interior to $x$.

\begin{theorem}[$V$-Uniform Ergodicity of MALA] 
\label{thrm-squared-exp-ergod}
Suppose $P \in \pset$ and $\mathcal{X} = \mathbb{R}^d$.
Consider MALA with step size $h$ and one-step transition kernel $\mathrm{P} = \mathrm{P}^1$.
Further assume $P$ is such that MALA is inwardly convergent.
Then, for $V = U_s$, where $U_s$ is any of the functions defined in Proposition \ref{prop: SE ergo ULA} for which ULA is $U_s$-uniformly ergodic,
\begin{align}\label{eq:mala-drift}
\mathrm{P}V(x) \leq \tau_1 V(x) + \tau_0.
\end{align}
Hence, in particular, MALA is $V$-uniformly ergodic.
\end{theorem}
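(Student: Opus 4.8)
The plan is to transfer the geometric drift condition established for ULA in Proposition \ref{prop: SE ergo ULA} to the MALA kernel, along the lines of \cite{Roberts1996}. Writing $\mathrm{Q}$ for the ULA proposal kernel, so that $\mathrm{Q}V(x) = \int_{\mathcal{X}} q(x,y) V(y)\,\mathrm{d}y$, and expanding the MALA kernel $\mathrm{P}(x,\mathrm{d}y) = q(x,y)\alpha(x,y)\,\mathrm{d}y + \delta_x(\mathrm{d}y)\bigl(1 - \int q(x,z)\alpha(x,z)\,\mathrm{d}z\bigr)$, one arrives at the identity
\[
\mathrm{P}V(x) \;=\; \mathrm{Q}V(x) \;+\; \int_{R(x)} q(x,y)\,(1-\alpha(x,y))\,(V(x) - V(y))\,\mathrm{d}y ,
\]
using $\alpha \equiv 1$ on $A(x)$. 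Thus the task reduces to showing that the rejection integral is asymptotically negligible relative to $V(x)$ as $\|x\|_2 \to \infty$.

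First I would note that every $V = U_s$ appearing in Proposition \ref{prop: SE ergo ULA} is a strictly increasing function of $\|x\|_2$, so that $I(x) = \{y : V(y) \leq V(x)\}$ exactly. Splitting $R(x) = (R(x)\cap I(x)) \cup (R(x)\setminus I(x))$: on $R(x)\setminus I(x)$ we have $V(y) \geq V(x)$, so that part of the integrand is nonpositive and can be dropped; on $R(x)\cap I(x)$ we use $0 \leq 1-\alpha(x,y) \leq 1$ and $0 \leq V(x) - V(y) \leq V(x)$, so the contribution is at most $V(x)\int_{R(x)\cap I(x)} q(x,y)\,\mathrm{d}y$. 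The key observation is that $R(x)\cap I(x) = I(x)\setminus A(x) \subseteq A(x)\,\Delta\,I(x)$, so the inward convergence hypothesis \eqref{asm: inw conv} gives $\varepsilon(x) := \int_{A(x)\Delta I(x)} q(x,y)\,\mathrm{d}y \to 0$ as $\|x\|_2 \to \infty$, whence $\mathrm{P}V(x) \leq \mathrm{Q}V(x) + \varepsilon(x)\,V(x)$.

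Next, by the geometric drift \eqref{eq:ula-drift} of Proposition \ref{prop: SE ergo ULA} there exist $\tilde{\tau} \in (0,1)$ and $R_0 \geq 0$ with $\mathrm{Q}V(x) \leq \tilde{\tau}\,V(x)$ for $\|x\|_2 > R_0$ (the additive constant is absorbed once $V(x)$ is large enough, using coercivity of $V$). Choosing $R_1 \geq R_0$ large enough that $\varepsilon(x) \leq (1-\tilde{\tau})/2$ for $\|x\|_2 > R_1$ yields $\mathrm{P}V(x) \leq \tfrac{1+\tilde{\tau}}{2} V(x)$ on $\{\|x\|_2 > R_1\}$. On the compact set $C := \{x : \|x\|_2 \leq R_1\}$ the crude bound $\mathrm{P}V(x) \leq \mathrm{Q}V(x) + V(x)$ together with finiteness of the relevant Gaussian moments shows $\sup_{x\in C} \mathrm{P}V(x) =: \tau_0 < \infty$, so that $\mathrm{P}V(x) \leq \tau_1 V(x) + \tau_0\,\mathbb{I}_C(x)$ globally with $\tau_1 = \tfrac{1+\tilde{\tau}}{2} < 1$, which is \eqref{eq:mala-drift}. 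Since $P \in \pset$ has a positive smooth density, MALA is $\psi$-irreducible and aperiodic and compact sets are small \cite{Roberts1996}; combining this with the drift just obtained, the drift--minorisation characterisation of $V$-uniform (geometric) ergodicity (\cite{Meyn2012}, Chs.\ 15--16) delivers the claim.

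The main obstacle is the rejection term over $R(x)\cap I(x)$ — proposals that would decrease $V$ (move toward the origin) but are nonetheless rejected, so that the chain risks being ``stuck'' at a large state and thereby losing the inward drift supplied by ULA. The inward convergence assumption \eqref{asm: inw conv} is exactly what guarantees this region carries vanishing $q(x,\cdot)$-mass, and the inclusion $R(x)\cap I(x)\subseteq A(x)\,\Delta\,I(x)$ is the step that lets the assumption be applied.
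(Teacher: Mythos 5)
Your proposal is correct and follows essentially the same route as the paper: the same decomposition of $\mathrm{P}V$ into the ULA term plus a rejection correction, the same restriction of that correction to $R(x)\cap I(x)$ (using that each $U_s$ is increasing in $\|x\|_2$), and the same appeal to inward convergence \eqref{asm: inw conv} together with the ULA drift from Proposition \ref{prop: SE ergo ULA}. The only difference is that you spell out the final drift-plus-small-set argument in more detail, whereas the paper delegates this step to \cite{Roberts1996}.
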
 
\begin{proof}
The proof strategy follows Theorem 4.1 of \cite{Roberts1996}, which is based on establishing the geometric drift condition \eqref{eq:mala-drift} in the form \eqref{eq: geom drift cond}. 
Let $c(x) := x + \frac h 2 \nabla \log p(x)$ denote the MALA drift and let 
$$
q(x,y) := \frac{1}{(2\pi h)^{k/2}} \exp\Big( -\frac 1 {2h}\|y-c(x)\|_2^2\Big).
$$
Then, using $\mathbb{I}[\cdot]$ to denote the indicator function, the ratio in the geometric drift condition for $V$ can be decomposed and bounded as follows:
\begin{align*}
\frac{\int V(y) \mathrm{P}(x,y) \mathrm{d}y}{V(x)} 
& \; = \;
 \int_{A(x)}  q(x,y) \frac{V(y)}{V(x)}  \mathrm{d}y +  \int_{R(x)} q(x,y) \frac{V(y)}{V(x)}  \alpha(x,y) \mathrm{d}y + \int_{R(x)} q(x,y) [1 - \alpha(x,y)] \mathrm{d}y \\
& \; = \;
 \int_{\mathcal X}  q(x,y) \frac{V(y)}{V(x)}  \mathrm{d}y -  \int_{R(x)} q(x,y) \frac{V(y)}{V(x)} \big(\alpha(x,y) -1 \big) \mathrm{d}y + \int_{R(x)} q(x,y) [1 - \alpha(x,y)] \mathrm{d}y \\
&\; = \;
 \int_{\mathcal{X}} q(x,y) \frac{V(y)}{V(x)}  \mathrm{d}y + \int_{R(x)} q(x,y) \left[ 1 -  \frac{V(y)}{V(x)} \right] [1 - \alpha(x,y)] \mathrm{d}y \\
& \; \leq \;
 \int_{\mathcal{X}} q(x,y) \frac{V(y)}{V(x)}  \mathrm{d}y + \int_{R(x)} q(x,y) \mathbb{I}\left[ 1 - \frac{V(y)}{V(x)}  \geq 0 \right] [1 - \alpha(x,y)] \mathrm{d}y \\
& \;\leq \;
 \int_{\mathcal{X}} q(x,y) \frac{V(y)}{V(x)}  \mathrm{d}y + \int_{R(x) \cap I(x)} q(x,y) \mathrm{d}y 
\end{align*}
where we have used $V(y) \leq V(x)$ for $x \in I(x)$. 
The final term vanishes as $\|x\|_2 \rightarrow \infty$ from the assumption that MALA is inwardly convergent; c.f. \eqref{asm: inw conv}.
So to establish the geometric drift condition for $V$ it remains to show that the first term is asymptotically $<1$, that is ULA is $V$-uniformly ergodic. This was proved in Proposition \ref{prop: SE ergo ULA}.
\end{proof}

Our main result, Theorem \ref{thm: SPMCMC consistent}, follows immediately as a consequence of the results just established and the auxiliary Lemma \ref{lem: sub-Gaussian}:

\begin{proof}[Proof of Theorem \ref{thm: SPMCMC consistent}]
It will be demonstrated that the preconditions of Theorem \ref{thm: SPMCMC conv 2} are satisfied.
Indeed, from Theorem \ref{thrm-squared-exp-ergod} we have that (under our assumptions) MALA is $V$-uniformly ergodic for $V(x) = 1 + \|x\|_2$.
In addition, since $k_0$ has the form \cref{eq:stein_kernel}, based on a kernel $k \in C_b^{(1,1)}$, from \eqref{eq: quad bound on k0} we have that $k_0(x,x) \leq B \| x \|_2^2 +D$ and thus, for $\gamma > 0$ sufficiently small,
$$
\mathbb{E}_{Z \sim P}[ e^{\gamma k_0(Z,Z)}] \leq e^D \mathbb{E}_{Z \sim P}[ e^{\gamma B \|Z\|_2^2}] < \infty
$$
since distant dissipativity of $P$ implies that $P$ is sub-Gaussian (c.f. Lemma \ref{lem: sub-Gaussian}).
It follows that the preconditions of Theorem \ref{thm: SPMCMC conv 2} hold and thus the result is established.
\end{proof}

\begin{lemma} \label{lem: sub-Gaussian}
If $P$ is a distantly dissipative distribution on $\mathcal{X} = \mathbb{R}^d$ with $b(x) := \nabla \log p(x)$ continuous, then $P$ is sub-Gaussian; i.e. $\mathbb{E}_{X \sim P}[e^{a\|X\|_2^2}] < \infty$ for some $a > 0$.
\end{lemma}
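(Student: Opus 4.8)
The plan is to exploit distant dissipativity to show that $p$ decays at least as fast as a Gaussian along every ray emanating from the origin, and then to integrate this pointwise bound.

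First I would specialise the dissipativity inequality by taking $y = 0$. Since $\kappa_0 = \liminf_{r \to \infty} \kappa(r) > 0$, there exist $R_1 > 0$ and $\kappa_1 \in (0,\kappa_0)$ with $\kappa(r) \geq \kappa_1$ for all $r > R_1$, so that for every $x$ with $\|x\|_2 > R_1$ we have $-2\langle b(x) - b(0), x\rangle \geq \kappa_1 \|x\|_2^2$. Rearranging and applying Cauchy--Schwarz gives $\langle b(x), x\rangle \leq -\tfrac{\kappa_1}{2}\|x\|_2^2 + \|b(0)\|_2 \|x\|_2$, and choosing $R \geq \max\{R_1,\, 4\|b(0)\|_2/\kappa_1\}$ yields the one-sided bound $\langle b(x), x\rangle \leq -\tfrac{\kappa_1}{4}\|x\|_2^2$ for all $\|x\|_2 > R$.

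Next I would integrate along rays. Because $b = \nabla \log p$ is continuous, $\log p$ is $C^1$ on $\mathbb{R}^d$, and for a unit vector $u$ the map $t \mapsto \log p(tu)$ is $C^1$ with derivative $\langle b(tu), u\rangle = t^{-1}\langle b(tu), tu\rangle \leq -\tfrac{\kappa_1}{4}t$ for $t > R$. Integrating from $R$ to $t$ gives $\log p(tu) \leq \log p(Ru) - \tfrac{\kappa_1}{8}(t^2 - R^2)$. Since $\log p$ is continuous, $p$ is continuous, so $M := e^{\kappa_1 R^2/8}\sup_{\|u\|_2 = 1} p(Ru) < \infty$, and writing $x = tu$ with $t = \|x\|_2$ we obtain $p(x) \leq M\, e^{-\kappa_1 \|x\|_2^2/8}$ for all $\|x\|_2 > R$. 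Then for any $a \in (0,\kappa_1/8)$ I would split
\[
\int_{\mathbb{R}^d} e^{a\|x\|_2^2} p(x)\,\mathrm{d}x = \int_{\|x\|_2 \leq R} e^{a\|x\|_2^2} p(x)\,\mathrm{d}x + \int_{\|x\|_2 > R} e^{a\|x\|_2^2} p(x)\,\mathrm{d}x,
\]
where the first term is finite (continuous integrand on a compact set) and the second is bounded by $M \int_{\mathbb{R}^d} e^{-(\kappa_1/8 - a)\|x\|_2^2}\,\mathrm{d}x < \infty$, which establishes the claim.

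The substantive content is entirely in the one-sided bound $\langle b(x), x\rangle \leq -\tfrac{\kappa_1}{4}\|x\|_2^2$; the main point requiring care is that this holds \emph{uniformly in the direction} of $x$ for all sufficiently large $\|x\|_2$ (so that the ray integration produces a bound $M$ that is a genuine constant), which is why the linear term $\|b(0)\|_2\|x\|_2$ must be absorbed by an explicit threshold on $\|x\|_2$ rather than along a single ray. The remaining steps — deducing continuity of $p$ and of $t \mapsto \log p(tu)$ from continuity of $b$, and the final dominated integral — are routine.
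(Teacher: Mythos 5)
Your proof is correct and takes essentially the same route as the paper's: both specialise distant dissipativity to $y=0$, integrate the score along rays through the origin to obtain a pointwise Gaussian-type bound on $\log p$, and then integrate that bound. The only cosmetic difference is that you start the ray integration at the sphere of radius $R$ and invoke compactness of that sphere to bound $p$ there, whereas the paper integrates from the origin and uses boundedness of $b$ on the closed ball $\bar B(0,R)$ to control the near-origin segment of the line integral; both devices serve the same purpose of handling the region where dissipativity is unavailable.
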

\begin{proof}
Since $P$ is distantly dissipative, $\exists R, \kappa$ such that $\langle b(x) - b(y), x - y \rangle \leq -\frac{\kappa}{2} \left\| x-y \right\|_2^2$ holds for all $\|x - y\|_2 \geq R$.
Fix $x \in \mathcal{X}$ with $\|x\|_2 \geq R$ and define $\tau := R / \|x\|_2$.
By the gradient theorem (i.e. the fundamental theorem of calculus for line integrals) applied to the curve $r : [0,1] \rightarrow \mathcal{X}$, $r(t) := tx$, we have that
\begin{align*}
\log p(x) - \log p(0) & = \int_0^1 \langle b(r(t)), r’(t) \rangle \mathrm{d}t \\
& = \int_0^1 \langle b(tx), x \rangle \mathrm{d}t \\
& = \langle b(0),x \rangle + \int_0^1 \frac{1}{t} \langle b(tx) - b(0), tx \rangle \mathrm{d}t \\
& \leq \|b(0)\|_2 \|x\|_2 + \underbrace{\int_0^\tau \langle b(tx) - b(0), x \rangle \mathrm{d}t}_{(*)} + \underbrace{\int_\tau^1 \frac{1}{t} \times - \frac{\kappa}{2} \|tx\|_2^2 \mathrm{d}t}_{(**)}
\end{align*}
where in the final inequality we have used Cauchy-Schwartz followed by the distant dissipativity property of $P$.

The term $(*)$ is an integral of the continuous function $b$ inside the ball $B(0,R)$ and thus, using Cauchy-Schwarz, $(*)$ can be upper bounded by $c_0 \|x\|_2$ for some constant $c_0$ independent of $x$.
The term $(**)$ can be directly evaluated to see that
\begin{align*}
(**) = - \frac{\kappa}{4} \left( 1 - \frac{R^2}{\|x\|_2^2} \right) \|x\|_2^2.
\end{align*}
Thus, for $\|x\|_2 \geq R$, we have the overall bound
\begin{align}
\log p(x) - \log p(0) \leq (\|b(0)\|_2 + c_0) \|x\|_2 - \frac{\kappa}{4}  \left( 1 - \frac{R^2}{\|x\|_2^2} \right) \|x\|_2^2 . \label{eq: overall bound for SG}
\end{align}
A straightforward argument based on \eqref{eq: overall bound for SG} establishes that $\mathbb{E}_{X \sim P}[e^{a\|X\|_2^2}] < \infty$ for some $a>0$, as required.
\end{proof}

\subsection{Proof of Theorem \ref{thm:pre-conditioning}} \label{ap: prf sec 3}

Our final theoretical contribution is to demonstrate that a preconditioned
kernel still ensures the KSD provides control over weak convergence. We
will first state and prove the following useful lemma.

\begin{lemma}[Preconditioned Kernels Maintain Weak Convergence] \label{lem:preconditioned-kernels}
Suppose $P\in\pset$ and let $k(x,y) = \Phi(x-y)$ be some translation invariant
kernel. Fix any positive definite matrix $\Gamma\succ 0$, and let $k_{\Gamma}(x,y) =
\Phi(\Gamma(x-y))$ for all $x,y\in\R^d$. If the KSD based on $k$
controls weak convergence, then the KSD based on $k_{\Gamma}$
also controls weak convergence.
\end{lemma}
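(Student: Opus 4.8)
The plan is to absorb the preconditioner into a linear change of variables, reducing the claim to the hypothesis on $k$. Write $T(x) := \Gamma x$, a bijection of $\R^d$, and let $\tilde{P} := T_\# P$ be the pushforward, with Lebesgue density $\tilde{p}(z) = p(\Gamma^{-1}z)\,|\det\Gamma|^{-1}$. For a probability measure $\mu$ and a target $Q$, denote by $D_k(\mu;Q)$ the KSD of $\mu$ relative to $Q$ built from base kernel $k$, and note $k_\Gamma(x,y)=\Phi(\Gamma(x-y))=k(\Gamma x,\Gamma y)$. The first step is to check $\tilde{P}\in\pset$: since $\grad\log\tilde{p}(z) = \Gamma^{-1}\grad\log p(\Gamma^{-1}z)$ (using $\Gamma=\Gamma^{\top}$), putting $x=\Gamma^{-1}z$, $y=\Gamma^{-1}w$ gives $\langle\grad\log\tilde{p}(z)-\grad\log\tilde{p}(w),\,z-w\rangle = \langle\grad\log p(x)-\grad\log p(y),\,x-y\rangle$, while $\|z-w\|_2$ and $\|x-y\|_2$ are comparable up to the factors $\sigma_{\min}(\Gamma)$ and $\sigma_{\max}(\Gamma)$; hence distant dissipativity of $P$ transfers to $\tilde{P}$, and Lipschitzness of $\grad\log\tilde{p}$ follows in the same way. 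In particular the Stein identity $\int k_0(z,\cdot)\,\mathrm{d}\tilde{P}(z)\equiv 0$ holds for the $k$-based Stein kernel with target $\tilde{P}$, so $D_k(\cdot;\tilde{P})$ is a genuine KSD to which the hypothesis applies.

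Next I would establish the covariance of the Langevin--Stein operator $\mathcal{A}_P g = \grad\cdot g + \langle g,\grad\log p\rangle$ under $T$. For a vector field $g$ put $\tilde{g}(z) := \Gamma\, g(\Gamma^{-1}z)$. The chain rule for the divergence, together with $\sum_i\Gamma_{ij}(\Gamma^{-1})_{li}=\delta_{lj}$, gives $(\grad_z\cdot\tilde{g})(z) = (\grad\cdot g)(\Gamma^{-1}z)$, and $\langle\grad_z\log\tilde{p}(z),\tilde{g}(z)\rangle = \langle\Gamma^{-1}\grad\log p(\Gamma^{-1}z),\,\Gamma g(\Gamma^{-1}z)\rangle = \langle\grad\log p(x),g(x)\rangle$ with $x=\Gamma^{-1}z$; adding these yields $\mathcal{A}_{\tilde{P}}\tilde{g}(Tx) = \mathcal{A}_P g(x)$. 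I would then invoke the standard pullback-RKHS identity: since $k_\Gamma(x,y)=k(\Gamma x,\Gamma y)$ and $\Gamma$ is a bijection, $h\mapsto h\circ\Gamma^{-1}$ is an isometric isomorphism from $\mathcal{K}_\Gamma$ onto $\mathcal{K}$. Consequently $g\mapsto\tilde{g}=\Gamma\,(g\circ\Gamma^{-1})$ carries the unit ball $B(\mathcal{K}_\Gamma^d)$ onto $\Gamma\cdot B(\mathcal{K}^d)=\{\Gamma h : h\in\mathcal{K}^d,\ \|h\|_{\mathcal{K}^d}\le 1\}$, and since $\Gamma$ and $\Gamma^{-1}$ act on $\mathcal{K}^d$ with operator norms $\sigma_{\max}(\Gamma)$ and $1/\sigma_{\min}(\Gamma)$ respectively, $\sigma_{\min}(\Gamma)B(\mathcal{K}^d)\subseteq\Gamma\cdot B(\mathcal{K}^d)\subseteq\sigma_{\max}(\Gamma)B(\mathcal{K}^d)$.

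Combining these facts with the IPM representation $D_{k_\Gamma}(\mu;P)=\sup_{g\in B(\mathcal{K}_\Gamma^d)}|\int\mathcal{A}_P g\,\mathrm{d}\mu|$ and the change of variables $\int\mathcal{A}_P g\,\mathrm{d}\mu=\int\mathcal{A}_{\tilde{P}}\tilde{g}\,\mathrm{d}(T_\#\mu)$ gives, for every empirical measure $\mu=\frac1n\sum_{i=1}^n\delta_{x_i}$,
\[
\sigma_{\min}(\Gamma)\, D_k\big(T_\#\mu;\tilde{P}\big)\;\le\; D_{k_\Gamma}\big(\mu;P\big)\;\le\;\sigma_{\max}(\Gamma)\, D_k\big(T_\#\mu;\tilde{P}\big).
\]
Hence if $D_{k_\Gamma}(\mu_n;P)\to 0$ for a sequence of such empirical measures, then $D_k(T_\#\mu_n;\tilde{P})\to 0$; since the $k$-based KSD controls weak convergence and $\tilde{P}\in\pset$, this forces $T_\#\mu_n\Rightarrow\tilde{P}$. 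The continuous mapping theorem applied with the continuous map $\Gamma^{-1}$ then gives $\mu_n=(\Gamma^{-1})_\#(T_\#\mu_n)\Rightarrow(\Gamma^{-1})_\#\tilde{P}=P$, which is exactly the assertion that the $k_\Gamma$-based KSD controls weak convergence. Theorem \ref{thm:pre-conditioning} follows by taking $k$ to be the standard IMQ kernel and $\Gamma=\Lambda^{-1/2}$, invoking Thm.~8 of \citet{Gorham2017}.

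I expect the main obstacle to be the covariance identity of the second step, and in particular keeping track of the matrix $\Gamma$ that appears when vector fields are pushed forward: because of that factor the two discrepancies are only \emph{equivalent} up to the constants $\sigma_{\min}(\Gamma),\sigma_{\max}(\Gamma)$ rather than equal. Since ``controls weak convergence'' is a purely topological statement this is harmless, but it does mean the argument must be routed through the IPM form of the KSD rather than through a pointwise identity for the Stein kernel $k_0$. A lesser point needing care is the verification that $\tilde{P}\in\pset$, which is where the standing assumption $P\in\pset$ enters.
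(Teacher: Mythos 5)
Your proof is correct and uses essentially the same strategy as the paper's: absorb the preconditioner into a global linear change of coordinates, verify that the transformed target remains in $\pset$, and invoke the hypothesis for that target. The superficial difference that you push forward by $\Gamma$ while the paper uses $\Gamma^{-1}$ is immaterial since both are positive definite. Where you add genuine value is in the final step, which the paper dispatches in a single sentence (``the result will follow by making a global change of coordinates\ldots''): you correctly observe that the change of variables maps the unit ball $B(\mathcal{K}_\Gamma^d)$ onto $\Gamma\cdot B(\mathcal{K}^d)$ rather than $B(\mathcal{K}^d)$ itself, so the two discrepancies are only \emph{equivalent up to the constants} $\sigma_{\min}(\Gamma),\sigma_{\max}(\Gamma)$ rather than equal, and you note that this is harmless because weak convergence is a topological notion. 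That subtlety is real and the paper glosses over it; your version of the argument, which routes through the IPM representation, the RKHS pullback isometry, and the operator-norm inclusions for $\Gamma\cdot B(\mathcal{K}^d)$, is the rigorous way to fill the gap.
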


\begin{proof}
For any distribution $P$, let us denote
$P_{\Gamma}$ as the distribution of $\Gamma^{-1} Z$ where $Z\sim P$.
We will show that $P_{\Gamma}\in\pset$, and
then the result with follow by making a global change of coordinates
$x\mapsto \Gamma^{-1} x$ and noticing $\textstyle\frac{1}{n}\sum_{i=1}^n
\delta_{x_i}$ converges weakly to $P$ iff $\textstyle\frac{1}{n}\sum_{i=1}^n
\delta_{\Gamma^{-1} x_i}$ converges weakly to $P_{\Gamma}$.

Let $p_{\Gamma}$ and $b_{\Gamma} = \grad\log p_{\Gamma}$ denote the
density and score function of $P_{\Gamma}$, respectively. By a change of
variables we have $p_{\Gamma}(x) = \text{det}(\Gamma) p(\Gamma x)$ and
$b_{\Gamma}(x) = \Gamma \grad\log p(\Gamma x)$ for all $x$. Thus
$b_{\Gamma}$ is Lipschitz since $b = \grad\log p$ is Lipschitz by assumption.

Let $\lambda_{\min}(A)$ and $\lambda_{\max}(A)$ be the
smallest and largest eigenvalue of a positive definite matrix $A$. Then in
the case $\tilde{x} = \Gamma x$ and $\tilde{y} = \Gamma y$, we have
\begin{equation}\label{eqn:distant-dissipative-bound}
\frac{\langle b_{\Gamma}(x) - b_{\Gamma}(y), x - y\rangle}{\|x-y\|_2^2}
  = \frac{\langle b(\Gamma x) - b(\Gamma y), \Gamma (x -
    y)\rangle}{\|x-y\|_2^2}
  = \frac{\langle b(\tilde{x}) - b(\tilde{y}), \tilde{x} -
    \tilde{y}\rangle}{\|\Gamma^{-1}(\tilde{x}-\tilde{y})\|_2^2}
   \le \lambda_{\min}(\Gamma)^{2} \frac{\langle b(\tilde{x}) - b(\tilde{y}), \tilde{x} -
    \tilde{y}\rangle}{\|\tilde{x}-\tilde{y}\|_2^2}
\end{equation}
whenever $\langle b(\tilde{x}) - b(\tilde{y}), \tilde{x} - \tilde{y}\rangle
\le 0$.
Let $\kappa(r) =
\inf\{-2 \frac{\langle b(x)-b(y), x-y\rangle}{\|x-y\|_2^2} :
\|x-y\|_2=r\}$ and $\kappa_{\Gamma}$ be the analogous version using
$b_{\Gamma}$ in lieu of $b$. Combining the fact that
$\lambda_{\max}(\Gamma)^{-1}\|z\|_2 \le \|\Gamma^{-1} z\|_2 \le \lambda_{\min}(\Gamma)^{-1}\|z\|_2$
for all $z$ and
(\ref{eqn:distant-dissipative-bound}) yields
$\sup_{\lambda\in [\lambda_{\max}(\Gamma)^{-1}, \lambda_{\min}(\Gamma)^{-1}]}
\kappa_{\Gamma}(\lambda r) \ge \lambda_{\min}(\Gamma)^{2} \kappa(r)$ for all $r$ such that $\kappa(r) \ge 0$.
Since $P$ is distantly dissipative by assumption, we have
$\lim\inf_{r\to\infty} \kappa(r) > 0$, which implies $\lim\inf_{r\to\infty}
\kappa_{\Gamma}(r) > 0$, and so $P_{\Gamma}\in\pset$ completing the lemma.
\end{proof}

\begin{proof}[Proof of Theorem \ref{thm:pre-conditioning}]
The proof follows by combining Lemma \ref{lem:preconditioned-kernels} with
$\Gamma = \Lambda^{-1/2}$ and Theorem 8 of \cite{Gorham2017}.

\end{proof}



\subsection{Details of Markov Kernels Used}
\label{ap: mala explain}

All experiments in this paper based on MCMC were conducted using either the random walk Metropolis (RWM) algorithm or the Metropolis-adjusted Langevin algorithm (MALA) of \cite{Roberts1996}.
Note that the $P$-invariance of the Markov kernel is not fundamental in SP-MCMC and one could, for example, consider also using an \emph{unadjusted} version in which a Metropolis-Hastings correction is not applied.
However, in order to control for the performance of different MCMC kernels, all experiments in this paper used either the RWM or the MALA kernel, which are both $P$-invariant.

The RWM algorithm is a Metropolis-Hastings method \cite{Metropolis1953} based on the proposal
$$
x \mapsto x + h^{1/2} \xi
$$
where $\xi \sim \mathcal{N}(0,\Sigma)$.
The MALA algorithm is a Metropolis-Hastings method \cite{Metropolis1953} based on the proposal
$$
x \mapsto x + \frac{h}{2} \Sigma^{-1} \nabla \log p(x) + h^{1/2} \xi
$$
where $\xi \sim \mathcal{N}(0,\Sigma)$.
In both cases the \emph{step size} parameter $h$ was calibrated according to the recommendations in \cite{Roberts2001}.
The positive definite matrix $\Sigma$ was taken to be a sample-based approximation to the covariance matrix of $P$, generated by running a long MCMC.\footnote{Our interest in this work was not on the construction of efficient Markov transition kernels, and we defer a more detailed empirical investigation of the impact of poor choice of transition kernel to further work.}

\subsection{Additional Empirical Results} \label{ap: extra results}

\subsubsection{Benchmark Methods} \label{ap: benchmark methods}

In this section we recall the MCMC, MED and SVGD methods used as an empirical benchmark. 
A relatively default version of each method was used.
However, we note that both MED and SVGD are being actively developed and we provide references to more sophisticated formulations of those methods where appropriate in the sequel.

\paragraph{Markov Chain Monte Carlo}

The standard MCMC benchmark in this work was based on a single sample path of MALA, described in Appendix \ref{ap: mala explain}, which is subsequently thinned by discarding all but every $m_j$th point.
Thus the length of the sample path is $m_jn$, where $n$ is the number of points that constitute the final point set.
The choice to keep every $m_j$th state serves to ensure that the MCMC benchmark and SP-MCMC are based on the same Markov kernel, so that empirical results are not confounded.

\paragraph{Minimum Energy Designs} 

The MED method that we consider in this work was proposed as an algorithm for Bayesian computation in \cite{Joseph2015}.
That work restricted attention to $\mathcal{X} = [0,1]^d$ and constructed an energy functional
\begin{eqnarray*}
\mathcal{E}_{\delta,P}(\{x_i\}_{i=1}^n) & := & \sum_{i \neq j} \left[ \frac{ \tilde{p}(x_i)^{-\frac{1}{2d}} \tilde{p}(x_j)^{-\frac{1}{2d}} }{\|x_i - x_j\|_2 } \right]^\delta 
\end{eqnarray*}
for some tuning parameter $\delta \in [1,\infty)$ to be specified.
In \cite{Joseph2017} the default choice of $\delta \rightarrow \infty$ was proposed, so that the energy functional can be interpreted as (up to an appropriate re-normalisation)
\begin{eqnarray*}
\mathcal{E}_{\infty,P}(\{x_i\}_{i=1}^n) & = & \min_{i \neq j} \left[ \frac{ \tilde{p}(x_i)^{-\frac{1}{2d}} \tilde{p}(x_j)^{-\frac{1}{2d}} }{\|x_i - x_j\|_2 } \right] .
\end{eqnarray*}
This default form has the advantage of removing dependence on the hyper-parameter $\delta$ and simultaneously enabling more stable computation, being based on $\log \tilde{p}$ rather than $\tilde{p}$:
\begin{eqnarray*}
\log \mathcal{E}_{\infty,P}(\{x_i\}_{i=1}^n) & = & - \min_{i \neq j} \left[ \frac{1}{2d} \log \tilde{p}(x_i) + \frac{1}{2d} \log \tilde{p}(x_j) + \log \|x_i - x_j\|_2 \right] .
\end{eqnarray*}
A preliminary theoretical analysis of the MED method was also provided in \cite{Joseph2017}.
This focussed on the properties of a point set that globally minimised the energy functional, but did not account for the practical aspect of approximating such a point set.
In fact, minimisation of $\mathcal{E}_{\infty,P}$ can be practically rather difficult.
For an explicit algorithm, \cite{Joseph2015} proposed a greedy method, which (for the case $\delta \rightarrow \infty$) is defined as
\begin{eqnarray}
x_1 \; \in \; \argmax_{x \in \mathcal{X}} \; \tilde{p}(x), \qquad x_n \; \in \; \argmax_{x \in \mathcal{X}} \; \min_{i = 1,\dots,n-1} \left[ \frac{1}{2d} \log \tilde{p}(x_i) + \frac{1}{2d} \log \tilde{p}(x) + \log \|x_i - x\|_2 \right] \quad n \geq 2 . \label{ap: med algo}
\end{eqnarray}
The method was recently implemented in the \verb+R+ package \verb+mined+, available at \url{https://cran.r-project.org/web/packages/mined/}.
Although the results presented in this work are based on our own implementation of \eqref{ap: med algo} in \verb+MATLAB+, it would be interesting in further work to explore the extent to which the performance of MED is improved in \verb+mined+.

For the results reported in this paper, the global optimisation in \eqref{ap: med algo} was replaced by an adaptive Monte Carlo optimisation method.
Indeed, in Fig. 2(c) of \cite{Chen2018SteinPoints} it was established that MED performed best for SP when an adaptive Monte Carlo optimisation was used, compared to both a basic grid search and the Nelder--Mead method \cite{Nelder1965}.
Specifically, the adaptive Monte Carlo method is described in Alg. \ref{adaptive MC search}.
Here $\mu_0$ and $\Sigma_0$ are the mean and covariance of a Gaussian and are selected to approximately match the first two moments of $P$.
The notation $\Pi(\{z_i\}_{i=1}^n,\lambda)$ denotes a uniform-weighted Gaussian mixture distribution with each component having identical variance $\lambda$, to be specified.
Thus the algorithm described in Alg. \ref{adaptive MC search} picks randomly between drawing a set $\{x_i^{\text{test}}\}_{i=1}^{n_{\text{test}}}$ from $\mathcal{N}(\mu_0,\Sigma_0)$ (with probability $\alpha_n$ to be specified) and drawing such a set instead from $\Pi(\{x_i\}_{i=1}^{n-1}, \lambda)$, a Gaussian mixture based on the current point set $\{x_i\}_{i=1}^{n-1}$.

For our experiments, the parameters $\alpha_n,\mu,\Sigma_0,n_{\text{test}}$ of Alg. \ref{adaptive MC search} were approximately optimised in favour of MED. 
However, it is likely that the numerical optimisation routine used in \verb+mined+ will lead to different results to those reported in this work, and it may be possible to obtain better performance when more sophisticated numerical optimisation routines are used.

\begin{algorithm}
\caption{Adaptive Monte Carlo search method to select $x_n$}
\label{adaptive MC search}
\begin{algorithmic}[1]
\STATE Draw $u \sim \mathrm{Unif(0,1)}$ 
\IF{$u \le \alpha_n$}
\STATE Draw $\{x^{\mathrm{test}}_{i}\}_{i=1}^{n_{\mathrm{test}}} \sim \mathcal{N}(\mu_0, \Sigma_0)$
\ELSE
\STATE Draw $\{x^{\mathrm{test}}_{i}\}_{i=1}^{n_{\mathrm{test}}} \sim \Pi(\{x_i\}_{i=1}^{n-1}, \lambda)$
\ENDIF
\STATE $j^* \gets \argmin_{j \in\{1 \ldots n_{\mathrm{test}}\}} \log \mathcal{E}_{\infty,P} (\{x_i\}_{i=1}^{n-1} \cup \{x^{\mathrm{test}}_{j} \} )$
\STATE $x_n \gets x^{\mathrm{test}}_{j^*}$
\end{algorithmic}
\end{algorithm}

\paragraph{Stein Variational Gradient Descent} 

The SVGD method was first proposed in \cite{Liu2016SVGD} and subsequently studied in (e.g.) \cite{Liu2017GradientFlow,Liu2018}.
The approach is rooted in a continuous version of gradient descent on $\mathcal{P}(\mathcal{X})$, the set of probability distributions on $\mathcal{X}$, with the Kullback-Leibler divergence $\text{KL}(\cdot || P)$ providing the gradient.
To this end, restrict attention to $\mathcal{X} = \mathbb{R}^d$, let $\mathcal{K}$ be a RKHS as in the main text and consider the discrete time process
\begin{eqnarray*}
S_f(x) & = & x + \epsilon g(x)
\end{eqnarray*}
parametrised by a function $g \in \mathcal{K}^d$.
For an infinitesimal time step $\epsilon$ we can lift $S_g$ to a pushforward map on $\mathcal{P}(\mathcal{X})$; i.e. $Q \mapsto (S_g)_{\#} Q$.
Then \cite{Liu2016SVGD} established that
\begin{eqnarray}
- \left. \frac{\mathrm{d}}{\mathrm{d} \epsilon} \mathrm{KL}((S_g)_{\#} Q || P) \right|_{\epsilon = 0} & = & \int_{\mathcal{X}} \mathcal{A} g \; \mathrm{d}Q  \label{eq: KL diff}
\end{eqnarray}
where $\mathcal{A}$ is the Langevin Stein operator defined in the main text; i.e. $\mathcal{A}g = \frac{1}{\tilde{p}} \nabla \cdot (\tilde{p} g)$.
The direction of fastest descent
\begin{eqnarray*}
g^*(\cdot) & := & \argmax_{g \in B(\mathcal{K}^d)} \; - \left. \frac{\mathrm{d}}{\mathrm{d} \epsilon} \mathrm{KL}((S_g)_{\#} Q || P) \right|_{\epsilon = 0}
\end{eqnarray*}
has a closed form with $j$th coordinate equal to (in informal notation)
\begin{eqnarray*}
g_j^*(\cdot ; Q) & = & \int_{\mathcal{X}} (\partial_{x^j} + \partial_{x^j} \log \tilde{p}) k(x,\cdot) \; \mathrm{d}Q(x) .
\end{eqnarray*}
To obtain a practical algorithm, \cite{Liu2016SVGD} proposed to discretise this dynamics in both space $\mathcal{X}$, through the use of an empirical approximation to $Q$, and in time, through the use of a fixed and positive time step $\epsilon > 0$.
The result is a sequence of empirical measures based on point sets $\{x_i^{(m)}\}_{i=1}^n$ for $m \in \mathbb{N}$, where in what follows we have re-purposed superscripts to denote iteration number instead of coordinate.
Thus, given an initialisation $\{x_i^{(0)}\}_{i=1}^n$ of the points, at iteration $m \geq 1$ of the algorithm we update 
\begin{eqnarray*}
x_i^{(m)} \; = \; x_i^{(m-1)} + \epsilon g^*(x_i^{(m-1)} ; Q_n^m) , \qquad Q_n^m := \frac{1}{n} \sum_{i=1}^n \delta_{x_i^{(m-1)}}
\end{eqnarray*}
in parallel at a computational cost of $O(n)$.
The output is the empirical measure $Q_n^m$ and positive theoretical results on the convergence of $Q_n^m$ to $P$ is at present an open research question, though continuum versions of SVGD have now been studied (e.g.) \cite{Liu2017GradientFlow,Lu2018}.
In addition, recent work has sought to improve the empirical performance of SVGD by the use of quasi-Newton methods; see \cite{Detommaso2018}.
However, for all experiments in this paper we employed the original formulation of SVGD due to \cite{Liu2016SVGD}.

\subsubsection{Gaussian Mixture Model}


SP was implemented based on the same adaptive Monte Carlo search procedure described above in Alg. \ref{adaptive MC search}.
To ensure a fair comparison, the number of search points was taken equal to $m_j$, the length of the Markov chain used in SP-MCMC.

\subsubsection{IGARCH Model} 
\label{subsubsec: IGARCH implementation}

SP and MED were each implemented based on the same adaptive Monte Carlo search procedure described above in Alg. \ref{adaptive MC search}.
To ensure a fair comparison, in each case the number of search points was taken equal to $m_j$, the length of the Markov chain used in SP-MCMC.

For SVGD the size $n$ of the point set must be pre-specified.
In order to ensure a fair comparison with SP-MCMC we considered a point set of size $n = 1000$, which is identical to the size of point set that are ultimately produced by SP-MCMC during the course of this experiment.
The step size $\epsilon$ was hand-tuned to optimise the performance of SVGD in our experiment.
The point set was initialised for SVGD by sampling each point independently from $\mathrm{Unif}((0.002, 0.04) \times (0.05, 0.2))$.
The step-size $\epsilon$ for SVGD was set using Adagrad, as in \cite{Liu2016SVGD}, with \emph{master step size} $0.001$ and \emph{momentum} $0.9$.

The resulting point sets are visualised in Fig. \ref{fig: IGARCH points}.

\begin{figure}
\centering
\includegraphics[width = 0.9\textwidth]{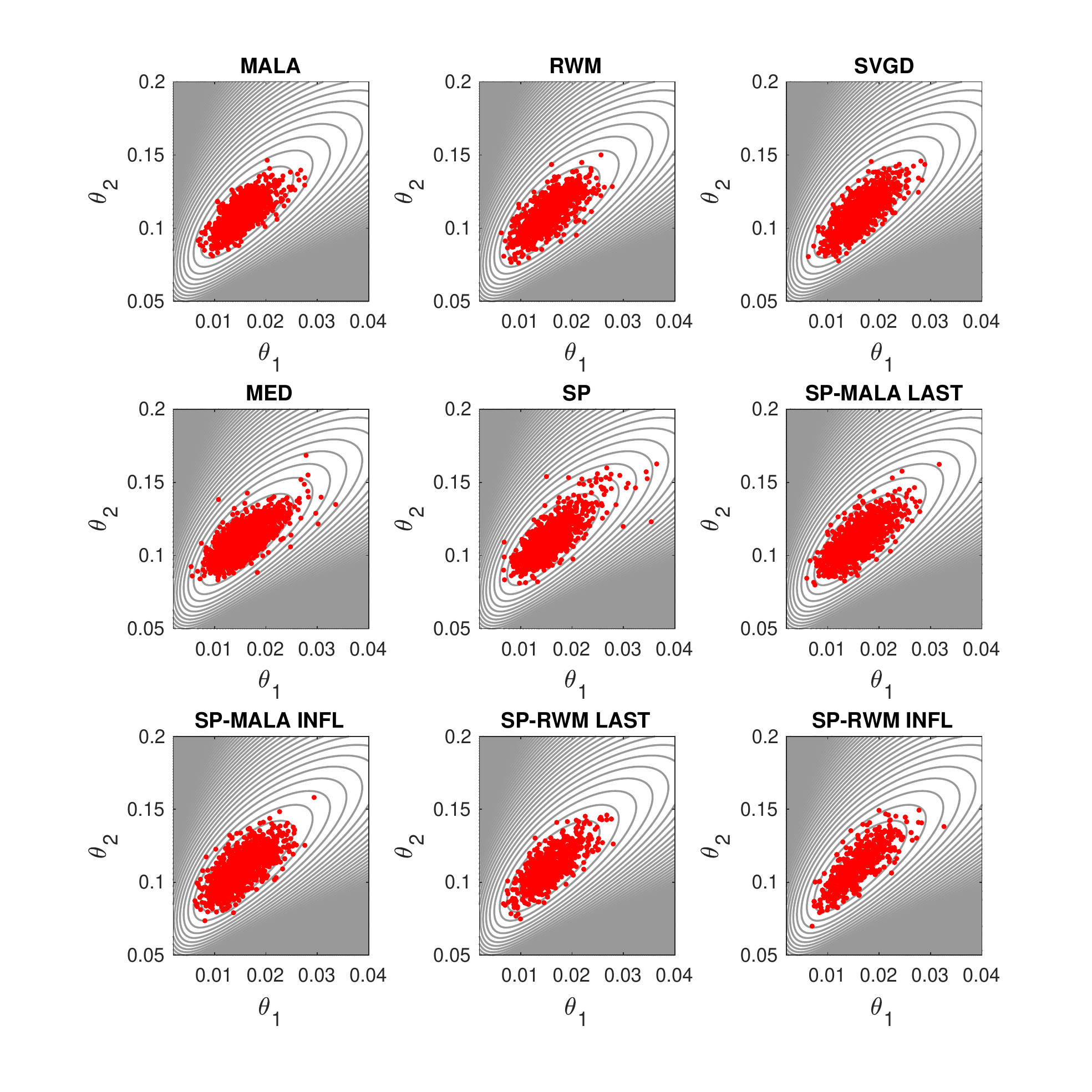}
\caption{Quantisation in the IGARCH experiment.
For the IGARCH experiment, we display point sets of size $n = 1000$ produced by each of MCMC, SP, MED, SVGD and SP-MCMC, as described in Section \ref{subsec: IGARCH} and Appendix \ref{subsubsec: IGARCH implementation}.}
\label{fig: IGARCH points}
\end{figure}

\subsubsection{Performance of the Preconditioner Kernel} \label{subsubsec: preconditioner test}

The performance of the preconditioner kernel in \eqref{eq: IMQ kernel} was explored by comparing the default case $\Lambda = I$ (i.e. where a preconditioner is not used) to the case where $\Lambda$ is estimated based on a short run of MCMC.
For this comparison we first considered the IGARCH experiment described in Sec. \ref{subsec: IGARCH}.
This is expected to prove to be a challenging example for a preconditioner, in the sense the posterior is already unimodal and fairly well-conditioned.
To test this hypothesis, the experiment reported in the main text was performed for both selections of $\Lambda$ and the results are reported in Fig. \ref{fig: preconditioner test} (left).
It can indeed be seen that the use of a preconditioner does not lead to a gain in performance in this case; however, and importantly, performance does not get \emph{worse} as a result of the preconditioner being used. 

\begin{figure}
\centering
\begin{subfigure}[t]{0.49\textwidth}
\centering
\includegraphics[width = \textwidth]{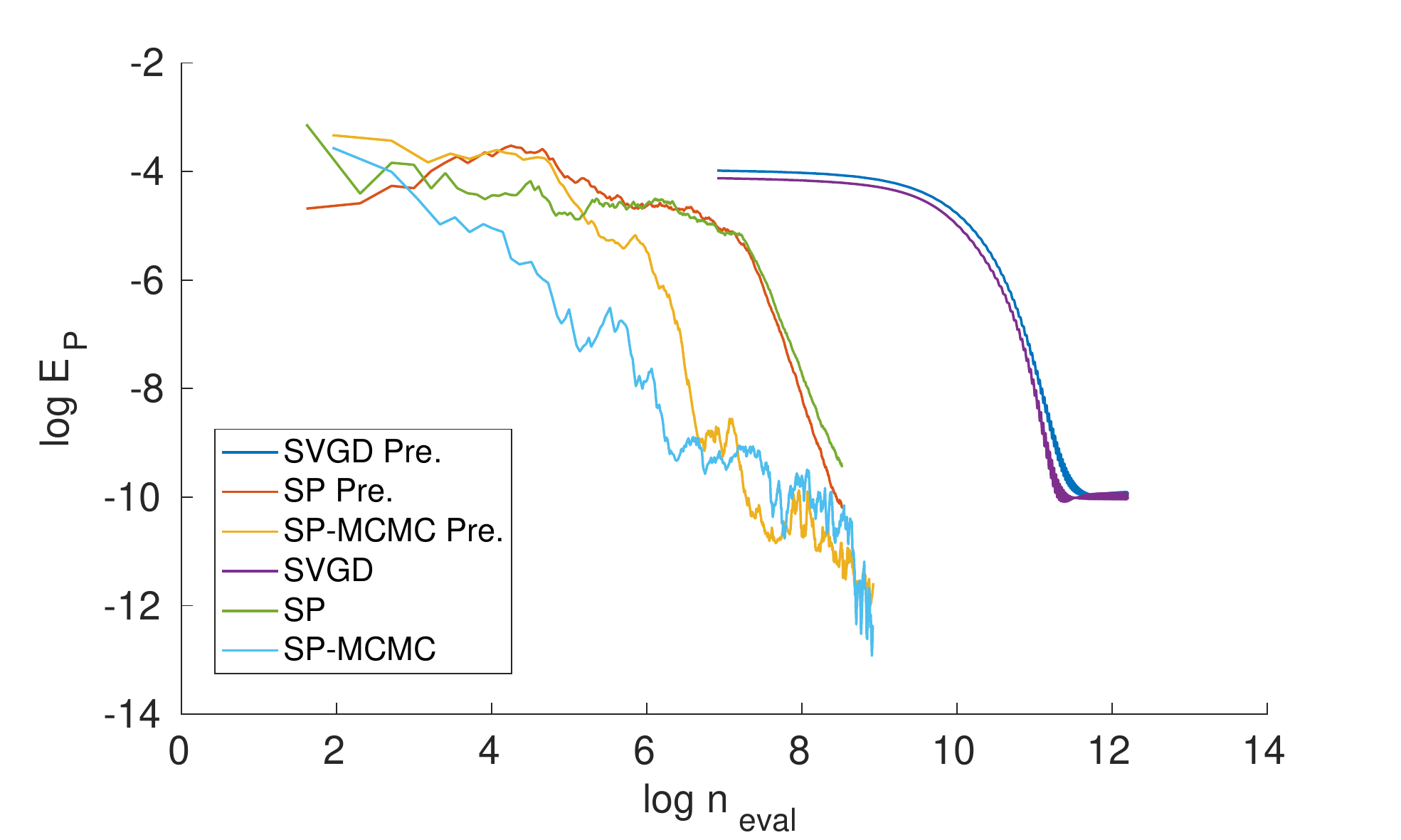}
\caption{IGARCH Model}
\end{subfigure}
\begin{subfigure}[t]{0.49\textwidth}
\centering
\includegraphics[width = \textwidth]{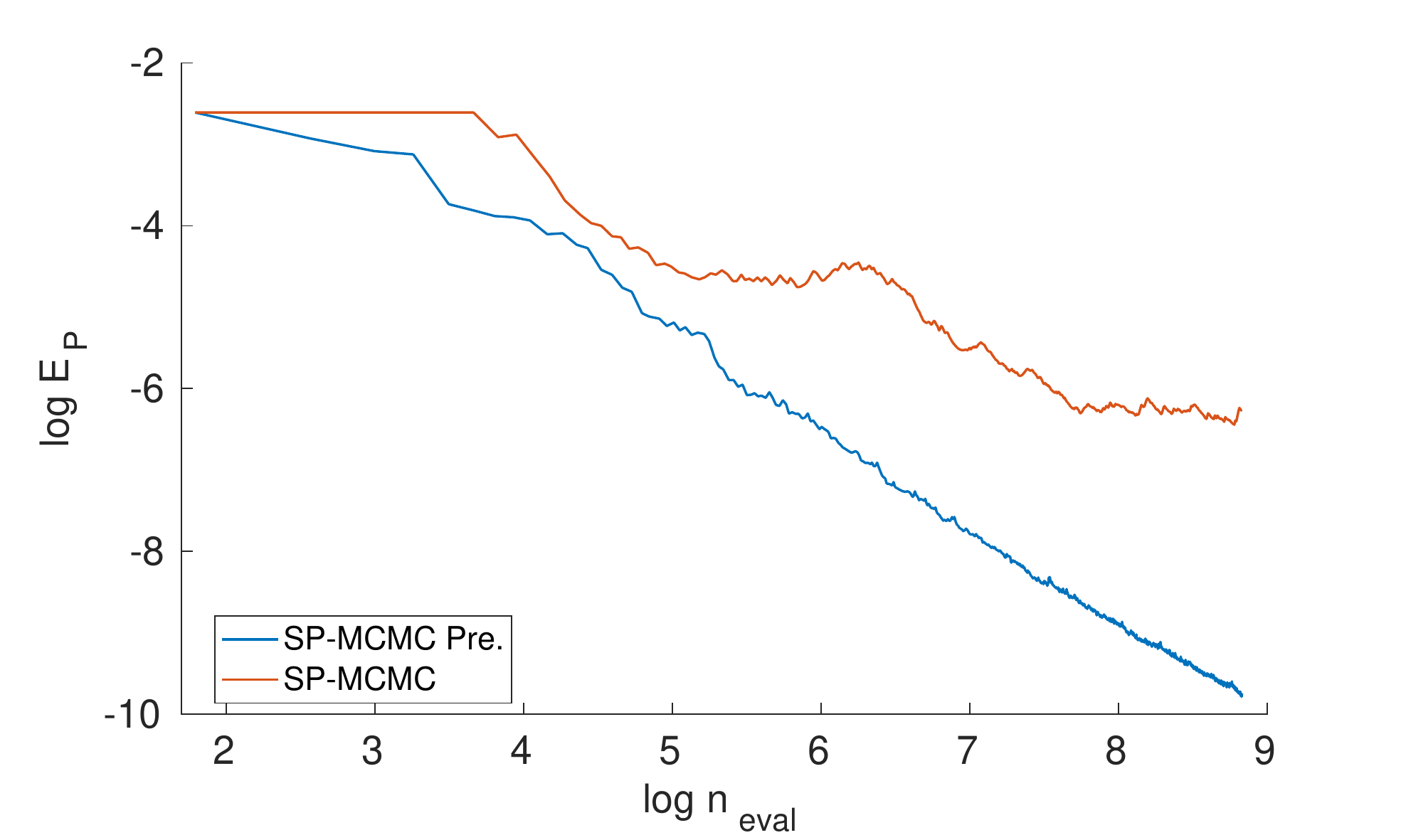}
\caption{Toy Model}
\end{subfigure}
\caption{Assessing the contribution of the preconditioner kernel.
For the IGARCH experiment (left), we compared the performance of SP, SVGD and SP-MCMC each with and without a preconditioner matrix $\Lambda$ being used.
In addition, a toy model (right) was explored for which the natural scale of the state vector $x$ was not commensurate with the naive choice of preconditioner matrix $\Lambda = I$.
As in Fig. \ref{fig:IGARCH} in the main text, each method produced an empirical measure $\frac{1}{n}\sum_{i=1}^n \delta_{x_i}$ whose distance to the target $P$ was quantified by the energy distance $E_P$.
The computational cost was quantified by the number $n_{\text{eval}}$ of times either $\tilde{p}$ or its gradient were evaluated.}
\label{fig: preconditioner test}
\end{figure}

To explore a scenario where the preconditioner demonstrates a benefit, we considered a toy model $P = \mathcal{N}(0,\sigma^2 I)$ with $\sigma = 0.01$.
In this case the naive choice of $\Lambda = I$ was compared to the proposed approach of taking $\Lambda$ to be a sample-based approximation to the covariance of $P$.
It is seen in Fig. \ref{fig: preconditioner test} (right) that the preconditioner kernel out-performed the naive choice of $\Lambda = I$ for this model, emphasising the need to ensure $\Lambda$ is commensurate with the scale of the state variable $x$ in general.

An important point is that there is in general a computational cost associated with building an appropriate preconditioner matrix $\Lambda$ as just described.
This is not explicitly accounted for in the experimental results that we report (i.e. not included in the total $n_{\text{eval}}$).
To address this point, and to demonstrate that SP-MCMC can prove to be effective in the absence of this computational overhead, we refer the reader to the ODE example in Secs. \ref{subsec: ODE} and \ref{subsubsec: Goodwin details}, where a simple preconditioner matrix $\Lambda \propto I$ was used and strong results were nevertheless obtained.

\subsubsection{Removal of ``Bad'' Points} \label{subsec: bad points supp}

In this section we explored the implications of Remark \ref{rem: bad points} in the main text, which proposed to remove ``bad'' points from the current point set.
This can be motivated by analogous strategies, known as \emph{away steps}, explored in the Frank-Wolfe literature \citep[e.g.][]{LacosteJulien2015,Freund2017}.
An approach was considered such that, if the current point set in SP-MCMC is $\{x_i\}_{i=1}^{j-1}$, then in addition to identifying a possible next point $x_j$, we also identify a ``bad'' point $x_{i^*}$ that minimises $D_{\mathcal{K}_0,P}(\{x_i\}_{i=1}^{j-1} \setminus \{x_{i^*}\})$.
Then we compare the two quantities
\begin{eqnarray*}
\Delta_{\text{good}} & := & D_{\mathcal{K}_0,P}(\{x_i\}_{i=1}^{j-1}) - D_{\mathcal{K}_0,P}(\{x_i\}_{i=1}^j) \\
\Delta_{\text{bad}} & := & D_{\mathcal{K}_0,P}(\{x_i\}_{i=1}^{j-1}) - D_{\mathcal{K}_0,P}(\{x_i\}_{i=1}^{j-1} \setminus \{x_{i^*}\}) .
\end{eqnarray*}
If either $j=1$ or $\Delta_{\text{good}} > \Delta_{\text{bad}}$, then the new point $x_j$ is included in the point set, so that the updated point set is $\{x_i\}_{i=1}^j$.
Otherwise we remove the ``bad'' point from the point set, so that the update point set is $\{x_i\}_{i=1}^{j-1} \setminus \{x_{i^*}\}$.
As such, with this approach the iteration number of the SP-MCMC algorithm is no longer identical to the size of the point set and the computational cost required to obtain a size $n$ point set may be increased relative to vanilla SP-MCMC.
However, it is possible that this approach can lead to improved performance at the quantisation task.

To empirically test this approach, we revisit the IGARCH experiment in Sec. \ref{subsec: IGARCH} of the main text.
The SP-MCMC method was implemented with the \verb!INFL! criterion and with $m_j = 5$, identical to the experiment shown in the main text.
Results comparing the impact of removing ``bad points'' in the manner described above are shown in Fig. \ref{fig: drop test}.
Interestingly, this was not seen to work well because too frequently a point would be removed, leading to sets containing only a small number of points.
To investigate further, we considered an alternative approach wherein the ``current worst'' point would occasionally be removed.
Results are also depicted in Fig. \ref{fig: drop test}, based on a ``dropping'' rate of 25\%.
This led to larger point sets and an improved performance as measured by $E_P$.
However, neither strategy considered outperformed the default approach where points were never removed.

\begin{figure}[t!]
\centering
\includegraphics[width = 0.5\textwidth]{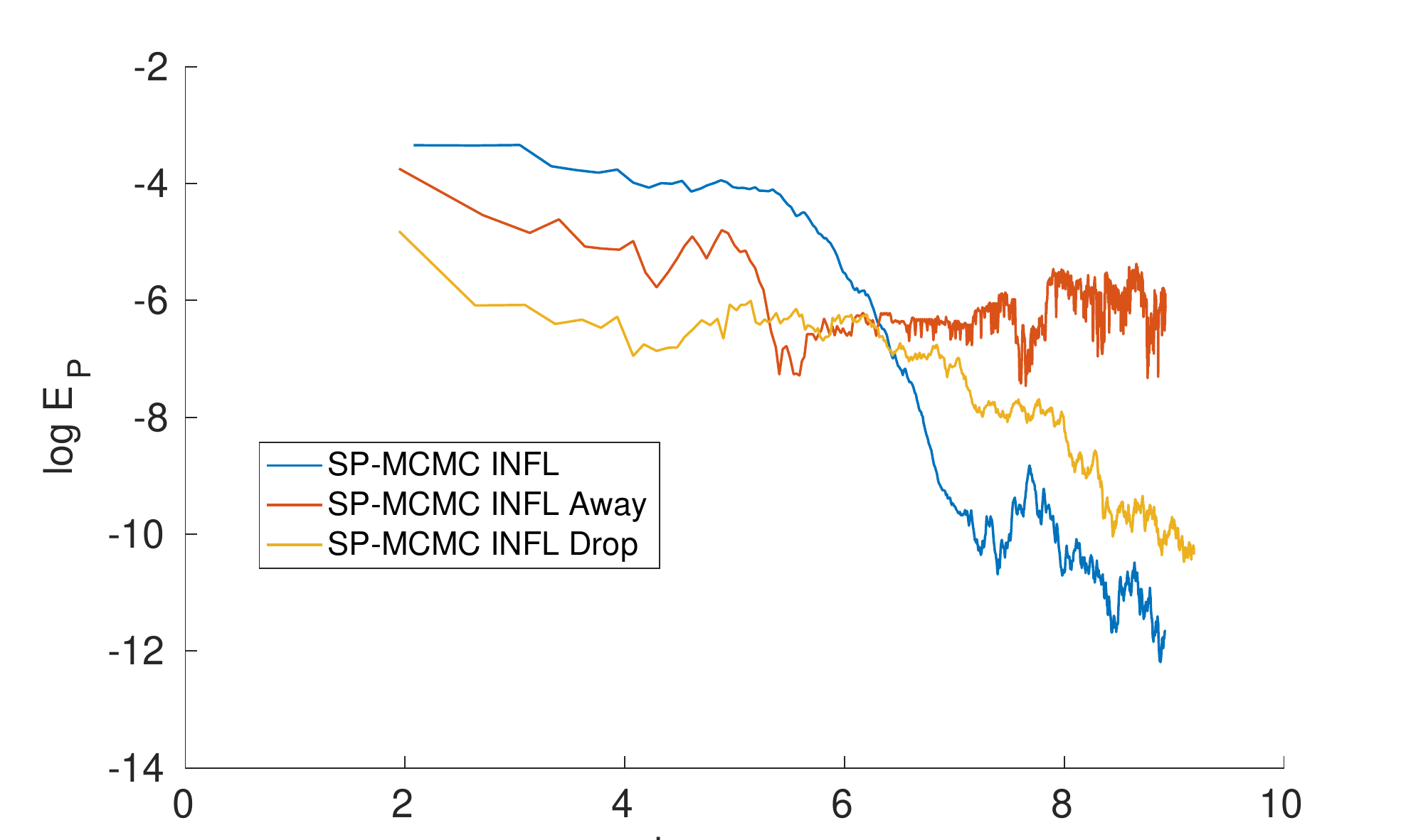}
\caption{Assessing the benefit of removing ``bad'' points.
For the IGARCH experiment, we compared the performance of SP-MCMC with and without \emph{away steps} (``Away'') being used.
In addition a simpler procedure, whereby the ``current worst'' point is occasionally dropped (``Drop''), was considered.
As in Fig. \ref{fig:IGARCH} in the main text, each method produced an empirical measure $\frac{1}{n}\sum_{i=1}^n \delta_{x_i}$ whose distance to the target $P$ was quantified by the energy distance $E_P$.
The computational cost was quantified by the number $n_{\text{eval}}$ of times either $\tilde{p}$ or its gradient were evaluated.}
\label{fig: drop test}
\end{figure}

\subsubsection{Goodwin Oscillator} \label{subsubsec: Goodwin details}

The \emph{Goodwin oscillator} \citep{Goodwin1965a} is a dynamical model of oscillatory enzymatic control.
This kinetic model, specified by a system of $q$ ODEs, describes how a negative feedback loop between protein expression and mRNA transcription can induce oscillatory dynamics at a cellular level.
In this work we considered Bayesian parameter estimation for two such models; a simple model with no intermediate protein species ($q=2$) and a more complex model with six intermediate protein species ($q=8$).
The experimental protocol below follows that used in the earlier work of \cite{Calderhead2009,Oates2016}.

The Goodwin oscillator with $q$ species is given by
\begin{eqnarray}
\frac{\mathrm{d}u_1}{dt} & = & \frac{a_1}{1+a_2u_q^\rho} - \alpha u_1 \label{Goodwin} \\
\frac{\mathrm{d}u_2}{dt} & = & k_1u_1 - \alpha u_2 \nonumber \\
& \vdots & \nonumber \\
\frac{\mathrm{d}u_q}{dt} & = & k_{q-1}u_{q-1} - \alpha u_q. \nonumber
\end{eqnarray}
The variable $u_1$ represents the concentration of mRNA and $u_2$ represents its corresponding protein product.
The variables $u_3,\dots,u_q$ represent intermediate protein species that facilitate a cascade of enzymatic activation leading, ultimately, to a negative feedback, via $u_q$, on the rate at which mRNA is transcribed.
The Goodwin oscillator permits oscillatory solutions only when $\rho > 8$.
Following \cite{Calderhead2009,Oates2016} we set $\rho = 10$ as a fixed parameter.
The solution $u(t)$ of this dynamical system depends upon synthesis rate constants $a_1$, $k_1,\dots,k_{q-1}$ and degradation rate constants $a_2$, $\alpha$.
Thus the parameter vector $\theta = [a_1,a_2,k_1,\dots,k_{q-1},\alpha] \in [0,\infty)^{q+2}$ and a $q$-variable Goodwin model has $d = q+2$ uncertain parameters to be inferred. 

For this experiment we followed \cite{Oates2016} and considered a realistic setting where only mRNA and protein product are observed, corresponding to $g(u) = [u_1,u_2]$. 
For the initial condition we took $u_0 = [0,\dots,0]$ and for the measurement noise we took $\sigma = 0.1$, both considered known and fixed.
Data $\{y_i\}_{i=1}^{40}$ were generated using $a_1 = 1$, $a_2 = 3$, $k_1 = 2$, $k_2,\dots,k_{q-1} = 1$, $\alpha = 0.5$, as in \cite{Oates2016}.
Thus the likelihood function for these data has the Gaussian form
\begin{eqnarray*}
p(y | \theta) & = & \frac{1}{(2 \pi \sigma^2)^{20}} \exp\left( - \frac{1}{2\sigma^2} \sum_{i=1}^{40} \|y_i - g(u(t_i))\|_2^2 \right) .
\end{eqnarray*}
To set up the Bayesian inferential framework, each parameter $\theta_i$ was assigned an independent $\Gamma(2,1)$ prior belief. 
Note that, in order to ensure that boundary conditions in Sec. \ref{subsec: discrepancy} were satisfied, we subsequently worked with the log-transformed parameters $x = \log (\theta) \in \mathbb{R}^d$, identifying the posterior distribution of $x$ with the target $P$ in our assessment.

In order to obtain the gradient $\nabla \log \tilde{p}$ it is required to differentiate the solution $u(t_i)$ of the ODE with respect to the parameters $x$ at each time point $i \in \{1,\dots,40\}$.
Of course, from the chain rule it is sufficient in what follows to consider differentiation of $u(t_i)$ with respect to $\theta$.
To this end, define the \emph{sensitivities} $S_{j,k}^i := \frac{\partial u_k}{\partial \theta_i}(t_j)$, and note that these satisfy 
\begin{eqnarray}
\frac{\mathrm{d}}{\mathrm{d}t} S_{j,k}^i = \frac{\partial f_k}{\partial \theta_i} + \sum_{l=1}^q \frac{\partial f_k}{\partial u_l} S_{j,l}^i \label{sense ode}
\end{eqnarray}
where $\frac{\partial u_k}{\partial\theta_i} = 0$ at $t = 0$.
The sensitivities can therefore be numerically computed by augmenting the state vector $u$ of the original ODE to include the $S_{j,k}^i$.
In this work the \verb!ode45! solver in \verb!MATLAB! was used to numerically solve this augmented ODE.

For SP-MCMC, we found that construction of a suitable preconditioner matrix $\Lambda$ was challenging due to the computational cost associated with each forward-solve of the ODE.
To this end, we simply selected $\Lambda = 0.3I$ for the case $d = 4$ and $\Lambda = 0.15 I$ for the case $d = 10$.
The same kernel $k$, with this choice of $\Lambda$, was employed for each of SP, SP-MCMC and SVGD.

SP and MED were each implemented based on the same adaptive Monte Carlo search procedure described above in Alg. \ref{adaptive MC search}.
To ensure a fair comparison, in each case the number of search points was taken equal to $m_j$, the length of the Markov chain used in SP-MCMC.
The methods were run to produce a point set of size $n = 300$.

For SVGD the size $n$ of the point set must be pre-specified.
In order to ensure a fair comparison with SP-MCMC we considered a point set of size $n = 300$, which is identical to the size of point set that are untimately produced by SP-MCMC during the course of this experiment.
The step size $\epsilon$ was hand-tuned to optimise the performance of SVGD in our experiment.
To initialise SVGD, a point set was drawn independently at random from $\text{Uniform}(\theta^* - 0.5 \times 1, \theta^* + 0.5 \times 1)$, where $1 = [1,\dots,1]$ where $\theta^*$ is the data-generating value of the parameter vector.
The step-size $\epsilon$ for SVGD was set using Adagrad, as in \cite{Liu2016SVGD}, with \emph{master step size} $0.005$ and \emph{momentum} $0.9$.


\end{document}